\newcommand{\subparagraph}{}
\crefname{section}{\mytextsection}{\mytextsection\mytextsection}
\Crefname{section}{Sec.}{§§}
\crefname{subsection}{\mytextsection}{\mytextsection\mytextsection}
\newtheorem{theorem}{Theorem}
\newtheorem{lemma}{Lemma}
\newtheorem{corollary}{Corollary}
\newtheorem{remark}{Remark}
\newtheorem{proposition}{Proposition}
\newtheorem{definition}{Definition}
\newtheorem{assumption}{Assumption}
\newtheorem*{formulation*}{}
\newtheorem{example}{Example}
\crefname{theorem}{Thm.}{Theorems}
\crefname{proposition}{Prop.}{Propositions}
\crefname{lemma}{Lemma}{Lemmas}
\crefname{algorithm}{Algorithm}{Algorithms}
\crefname{figure}{Fig.}{Figures}
\newtheoremstyle{named}{}{}{\itshape}{}{\bfseries}{.}{.5em}{\thmnote{#3's }#1}
\theoremstyle{named}
\newtheoremstyle{mynamed}{}{}{\itshape}{}{\bfseries}{.}{.5em}{#1 \thmnote{A }}
\theoremstyle{named}
\pgfplotsset{compat=newest}
\algrenewcommand{\algorithmiccomment}[1]{\hskip3em$\slash\slash$ #1}
\newcommand{\Abb}{\mathbb{A}}
\newcommand{\Pbb}{\mathbb{P}}
\newcommand{\Ebb}{\mathbb{E}}
\newcommand{\Nbb}{\mathbb{N}}
\newcommand{\Mcal}{\mathcal{M}}
\newcommand{\Vcal}{\mathcal{V}}
\newcommand{\Ecal}{\mathcal{E}}
\newcommand{\Gcal}{\mathcal{G}}
\newcommand{\Pcal}{\mathcal{P}}
\newcommand{\Hcal}{\mathcal{H}}
\newcommand{\Lcal}{\mathcal{L}}
\newcommand{\Ncal}{\mathcal{N}}
\def \Leps {\Lambda_{\epsilon}}
\def \Aeps {A_{\epsilon}}
\def \mytextsection {\textsection\hspace{-0.0cm}}
\titleformat{\subsubsection}[runin]
  {\normalfont\normalsize\bfseries\itshape}{\thesubsubsection)}{2pt}{}[:]
\title{Asymptotic Network Robustness}
\author{T. Sarkar, M. Roozbehani, M. A. Dahleh
%\thanks{Manuscript submitted on September 3, 2015}
\thanks{T. Sarkar, M. Roozbehani, M. A. Dahleh are with the EECS
Department, Massachusetts Institute of Technology, Cambridge, MA 02139 USA (e-mail: \texttt{tsarkar, mardavij, dahleh}@mit.edu).}% <-this % stops a space 
% <-this % stops a space
%\thanks{This work was supported by EPRI grant for ``Modeling the Impact of ICT Failures on the Resilience of Electric Distribution Systems'' (contract ID: 10000621), and NSF project ``FORCES'' (award $\#$: CNS-1239054).}
}
\begin{document}
\maketitle
\setlength{\abovedisplayskip}{0.1cm}
\setlength{\belowdisplayskip}{0.1cm}
%\lipsum[1]
\begin{abstract}
%\boldmath
This paper examines the dependence of performance measures on network size with a focus on large networks. We develop a framework where it is shown that poor performance can be attributed to dimension--dependent scaling of network energy. Drawing from previous work, we show that such scaling in undirected networks can be attributed to the proximity of network spectrum to unity, or distance to instability. However, such a simple characterization does not exist for the case of directed networks. In fact, we show that any arbitrary scaling can be achieved for a fixed distance to instability. The results here demonstrate that it is always favorable, in terms of performance scaling, to balance a large network. This justifies a popular observation that undirected networks generally perform better. We show the relationship between network energy and performance measures, such as output shortfall probability or centrality, that are used in economic or financial networks. The strong connection between them explains why a network topology behaves qualitatively similar under different performance measures. Our results suggest that there is a need to study performance degradation in large networks that focus on topological dependence and transcend application specific analyses.
\end{abstract}
%\begin{IEEEkeywords}
%electrical distribution systems, distributed generation, cyber-physical security, vulnerability assessment 
%\end{IEEEkeywords}

%\IEEEpeerreviewmaketitle

%% This is an example first chapter.  You should put chapter/appendix that you
%% write into a separate file, and add a line \include{yourfilename} to
%% main.tex, where `yourfilename.tex' is the name of the chapter/appendix file.
%% You can process specific files by typing their names in at the 
%% \files=
%% prompt when you run the file main.tex through LaTeX.
\section{Introduction}
\label{sec:introduction}

{Networks have been successfully used to model many real world phenomena such as input--output sectoral networks in economics~\cite{alireza_daron}, vehicular platoons in transportation~\cite{fiedler_nonzero} and opinion dynamics in social sciences~\cite{olshevsky2013degree}}. These networks are characterized by the nature of interconnections, \textit{or topology}, and dimension, which is typically very large. Due to the preponderance of such networks, there is extensive literature on suitable performance measures and their behavior on different topologies. However, the idea of a ``suitable'' performance measure, as we discuss below, varies across applications. A major objective of this paper is to understand the underlying relationship between existing performance measures for large networks. 

There is a substantial body of interdisciplinary research to find appropriate performance measures for large interconnected networks with stable dynamics and understand the underlying causes of dimension dependent scaling of these measures (See~\cite{daron_aggreg},~\cite{robust_consensus},~\cite{fiedler_nonzero},~\cite{h2_norm_vol} and~\cite{market_fragile} for example). A commonly used performance measure for design of optimal controllers is the $\Hcal_2$--norm. Specifically, Lin et. al. in~\cite{lin2012optimal} study the design of structured controllers for vehicular platoons with minimum $\Hcal_2$--norm. It is observed there that optimal structured controllers with asymmetric feedback exhibit lower scaling (of $\Hcal_2$--norm) with the number of vehicles than its symmetric counterpart and performs better in that sense. On the other hand,~\cite{robust_consensus},~\cite{h2_norm_vol} study the dependence of $\Hcal_2$--norm on structural properties of the network such as edge weights and underlying graph spectrum. Huang et. al. study $\Hcal_2$--norm based volatility measures in undirected (symmetric) and stable network, where the volatility measure can be represented completely by the network spectrum. The scaling of $\Hcal_2$--norm for different network topologies such as ring, star and cycle along with the effect of nodal degree perturbation and the consequent problem of critical edge identification are also studied there. Another important performance measure for vehicular platoon networks is the $\Hcal_{\infty}$--norm. The notion of harmonic instability, \textit{i.e.}, exponential scaling of $\mathcal{H}_{\infty}$--norm in vehicular platoons is studied in~\cite{fiedler_nonzero}. Herman et. al. study network topologies where a shock on the leader vehicle of a stable platoon network is magnified exponentially in its size. Due to this exponential scaling, control of such a network becomes increasingly problematic as its dimension grows. {Briefly, $\Hcal_2, \Hcal_{\infty}$--norms or volatility measures are different manifestations of energy of a network, which is described in the future sections.}

A seemingly different performance measure than the system norms discussed before is studied in~\cite{daron_aggreg},~\cite{alireza_daron}. Acemoglu et. al. introduce a measure called tail risk to assess performance in economic production networks. Tail risk measures the probability of output falling below a certain threshold in response to an input shock. A key feature of tail risk is that it can manifest only in networks with large dimension, and hence captures the idea of performance degradation when network size increases perfectly. A major conclusion of their work is that some amount of balancedness is needed in a large network to exhibit no tail risk.

In control systems theory robust stability, with no structural constraints on the feedback, is well understood (See~\cite{morari1989robust} for details). Robust stability entails the study of controllers that provide internal stability to all networks (\textit{or plants}) in a given set. This is particularly useful in the case when there is some uncertainty about the network, such as unknown edge weights etc. The notion of ``network robustness'' that we discuss here is different from traditional robust stability. We focus on large networks with stable dynamics where we study the behavior of performance measures such as the $\Hcal_2$--norm or tail risk. Specifically, we are interested in the scaling of these measures as a function of network dimension. We study the underlying causes of this scaling and find conditions under which it can be attributed to the proximity of network spectrum to unity. This introduces the idea of asymptotic instability which is characterized by the spectral radius approaching unity as network size increases, for a given topology. We will study the dependence of performance scaling on asymptotic instability and show when it does not completely explain such scaling. We will observe that excessive scaling in performance measures are closely related to notions of criticality, centrality and robustness in graphs, that are well studied in algebraic graph theory. However, there the graphs considered are ``static'' (no associated dynamics) or at equilibrium conditions. Real life networks, on the other hand, are characterized by their associated topology along with their complex interconnections and system dynamics. Therefore, a simultaneous consideration of control and graph theoretic tools is necessary for the design of robust networks, where performance measure scales gracefully with network size. 

Although there is a myriad of techniques to assess performance in large networks it is not clear if there exists some connection between them. For example it is shown by Acemoglu et. al. that balanced networks exhibit no tail risk; on the other hand Lin et. al. show that symmetric networks (a subclass of balanced networks) do not perform optimally. The dichotomy of these results is surprising since the underlying dynamics are similar. A major contribution of this work is that we provide a general framework to analyze these seemingly disparate results. We also find optimal (in an appropriate sense) topologies, where performance does not degrade rapidly as network size increases. 

The presentation of the paper is as follows, we introduce some mathematical notation in Section~\ref{sec:notation}. In Section~\ref{sec:Preliminaries} we present the mathematical preliminaries required to understand the paper. We formalize the notion of large networks used in this paper and describe the network topologies that will be discussed throughout the paper. We also define network robustness in large networks here and show its relation to network energy. Following this, in Section~\ref{robust_prelim} we characterize robustness for undirected (symmetric) networks and prove that such a characterization does not extend to directed networks. Through these observations, we show that undirected topologies exhibit the best performance for a fixed spectral radius. We establish connections between network robustness and different performance measures used in economics, transportation and finance in Section~\ref{sec:robustnessMeasures}. The main results of the paper can be found in Sections~\ref{robust_prelim},~\ref{sec:robustnessMeasures}. Finally, we conclude in Section~\ref{sec:conclusion}. All proofs are in the appendix.

\section{Mathematical Notation}
\label{sec:notation}

\textbf{Matrix Theory:} A vector $v \in \mathbb{R}^{n \times 1}$ is of the form $\lbrack v_1, \ldots, v_n \rbrack^T$, where $v_i$ denotes the $i^{th}$ element, unless specified otherwise. The vector $\textbf{1}$ is the all $1$s vector of appropriate dimension; to specify the dimension we sometimes refer to it as $\textbf{1}_n$, where it is a $n \times 1$ vector. Similarly, for a $m \times n$ matrix, $A$, we refer to it as $A_{m \times n}$ when we want to specify dimension. We denote $A$ that is a $n \times n$ matrix as $A_n$ for short hand. {For a matrix, $A$, we denote by $\rho(A)$ its spectral radius. Additionally, we have $\rho_i(A) \geq \rho_{i+1}(A)$ where $\rho_i(A) = |\lambda_i(A)|$ the $i^{th}$--eigenvalue of $A$}. Thus, $\rho_1(A) = \rho(A)$. We have a similar notation for the singular values of $A$ denoted by $\sigma_i(A)$. $I$ is the identity matrix of appropriate dimension. 

The $\Lcal_p$ norm of a matrix, $A$, is given by 
	$$||A||_{p} = \sup_{v} ||Av||_p/||v||_p$$
%%%%%
$||A||_S$ is the Schatten norm, \textit{i.e.}, $||A||_S = \sum_{i=1}^n \sigma_i(A)$. For positive semidefinite (psd) matrices the Schatten norm equals the trace of the matrix. {The symbol $\succeq$ denotes the Loewner order between two Hermitian matrices $A, B$, \textit{i.e.}, if $A \succeq B$ or $A \succeq_{PSD} B$ then $A-B$ is a psd matrix, similarly if $A \preceq B$ or $A \preceq_{PSD} B$ then $B-A$ is a psd matrix.}

\textbf{Order Notation:} For functions, $f(\cdot), g(\cdot)$, we have $f(n) = O(g(n))$, when there exist constants $C, n_0$ such that $f(n) \leq C g(n)$ for all $n \in \Nbb > n_0$. Further, if $f(n) = O(g(n))$, then $g(n) = \Omega(f(n))$. For functions $g(\cdot), h(\cdot)$, we have $g(n) = \Theta(h(n))$ when there exist constants $C_1, C_2, n_1$ such that $C_1 h(n) \leq g(n) \leq C_2 h(n)$ for all $n \in \Nbb > n_1$. Finally, for functions $h_1(\cdot), h_2(\cdot)$, we have $h_1(n) = o(h_2(n))$ when $\lim_{n \rightarrow \infty} |h_1(n)/h_2(n)| = 0$. 

\textbf{Graph Theory:} A graph is the tuple $\Gcal = (\Vcal_{\Gcal}, \Ecal_{\Gcal}, w_{\Gcal})$, where $\Vcal_{\Gcal} = \{v_1, v_2, \ldots, v_n\}$ represents the set of nodes and $\Ecal_{\Gcal} \subseteq \Vcal_{\Gcal} \times \Vcal_{\Gcal}$ represents the set of edges or communication links. An edge or link from node $i$ to node $j$ is denoted by $e\lbrack i, j \rbrack = (v_i, v_j) \in \Ecal_{\Gcal}$, and $w_{\Gcal}: \Ecal_{\Gcal} \rightarrow \mathbb{R} $. Denote by $\Abb_{\Gcal}$ the adjacency matrix of $\Gcal$. A graph, $\Gcal$, is symmetric or undirected if $w_{\Gcal}(v_i, v_j) = w_{\Gcal}(v_i, v_j)$ for all $1 \leq i, j \leq |\Vcal_{\Gcal}|$. $\Gcal$ is induced by a matrix, $A_{n \times n}$ if $\Vcal_{\Gcal} = \{1, \ldots, n\}$, and $(i, j) \in \Ecal_{\Gcal}$ if $\lbrack A \rbrack_{ij} \neq 0$, and $w_{\Gcal}(i, j) = \lbrack A \rbrack_{ij}$. { The adjacency matrix of a graph $\Gcal$ is denoted by $\Abb$. $\Abb$ is a $|\Vcal_{\Gcal}| \times |\Vcal_{\Gcal}|$ matrix with the following property: $[\Abb]_{ij}=1$ if there is an edge $i \rightarrow j$ else it is $0$.}

{ \textbf{Probability Theory:} For a random variable $X$, we define 
\begin{align*}
\mu(X) &= \Ebb[X] \\
\sigma(X) &= {\Ebb[(X-\mu(X))^2]}^{1/2}
\end{align*}
$\Phi(\cdot)$ is the cumulative distribution function of Gaussian distribution and is given by
\[
\Phi(x) = \frac{1}{\sqrt{2\pi}} \int_{-\infty}^{x} e^{-t^2/2} dt
\] 
We sometimes refer to the probability density function and cumulative distribution function of a random variable as pdf and cdf respectively. For a random variable $X$, the $\tau$--tail probability is given by 
$$\Pbb\Bigg(\frac{|X - \mu(X)|}{\sigma(X)} > \tau\Bigg)$$
}

\textbf{Miscellaneous: }Denote by $\Pcal_d$ is the family of polynomials with degree $\leq d \in \Nbb$.

\section{Preliminaries}
\label{sec:Preliminaries}
\begin{definition}
	\label{network}	
	A network is a graph $\Gcal = (\Vcal_{\Gcal}, \Ecal_{\Gcal}, w_{\Gcal})$, where $\Vcal_{\Gcal} = \{1, 2, \ldots, n\}$, and for each node, $i \in V_{\Gcal}$, there is an associated dynamical behavior, $i \rightarrow x_i(\cdot)$. Further, $w_{\Gcal}(i, j) = \lbrack A \rbrack_{ij}$ for all $1 \leq i, j \leq n$. 
\end{definition}
In this work, we will focus on networks with linear dynamics given by Eq.~\eqref{DT_LTI} described below
\begin{equation}
\label{DT_LTI}
x(k+1) = A\; x(k) + \omega \; \delta(0, k), \hspace{2mm} k \in \{0, 1, 2, \ldots \}
\end{equation}
Here $x(k) = \lbrack x_1(k), \ldots, x_n(k) \rbrack^{T}$ is the vector of state variables. $A$ is the $n \times n$ state transition matrix. $\delta(0, k)$ is the Kronecker delta function, with $\delta(0, 0) = 1$ and $\delta(0, k) = 0 \hspace{2mm} \forall \, k \neq 0$ and $\omega = \lbrack \omega_1, \ldots, \omega_n \rbrack^{T}$ is exogenous to the system. We further assume that $x(0) = 0$. Additionally, we have the following assumptions
{\begin{assumption}
	\label{stability}
	The dynamical behavior of the network is governed by Eq.~\eqref{DT_LTI}. Further, the state transition matrix $A$ is Schur stable, \textit{i.e.}, $\rho(A) < 1$.
\end{assumption}
%%%%
Assumption~\ref{stability} is an assumption on the network dynamics and its stability that will be enforced on all networks in this work, unless otherwise stated. We will call $1 - \rho(A)$ as the distance to instability of $A$.
\begin{definition}
	\label{undirected}
A network is undirected if $A$ is symmetric, otherwise it is a directed network.
\end{definition}}
\begin{definition}
	\label{shock}
A signal, $w(k)$, is a shock if $w(k) = \textbf{0}$ for all $k > 0$ but $w(0) \neq \textbf{0}$.
\end{definition}
Then $w(k) = \omega \delta(0, k)$ is a shock.
\begin{remark}
	\label{network_rep}
	 Every network, in this paper, can be uniquely characterized by its state transition matrix, or network matrix, $A$.
\end{remark}

%%%%%
\begin{figure}[h]
	\centering
 	\includegraphics[width=0.8\textwidth]{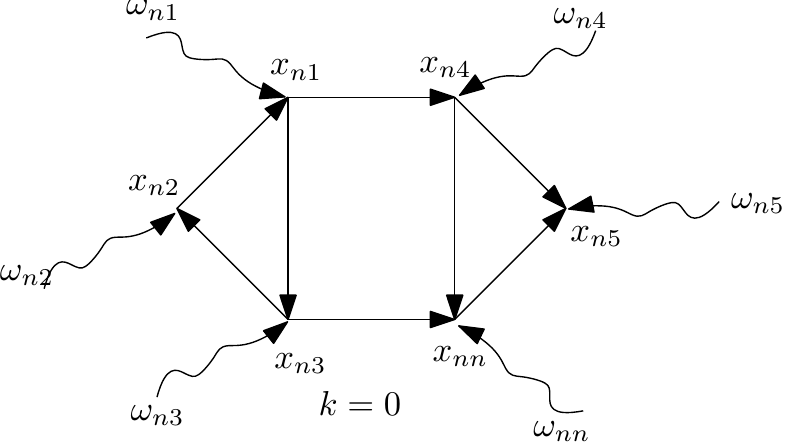}
	\caption{Network with noise on every node, $k = 0$}
	\label{noise}
\end{figure}

The key feature of our work is that we address large networks. We denote a large network by ``a sequence of networks with network matrices $\{A_{n}\}_{n=1}^{\infty}$, where the topology of each network in the sequence is fixed but network dimension grows successively''. This treatment is along the lines of graph limits as discussed in~\cite{lovasz_graph_limits}. For example, to denote a large star network we define a sequence of star networks $\{A_n\}_{n=1}^{\infty}$ with dimensions $ n \times n = 1 \times 1, 2 \times 2, 3 \times 3, \ldots$. In this example we have 
	\begin{align*}
	A_n=\begin{bmatrix}
	0  & \frac{1}{n-1} & \frac{1}{n-1} & \dots & \frac{1}{n-1} \\
	1       & 0 & 0 & \dots & 0 \\
	\vdots & \vdots & \vdots & \ddots & \vdots \\
	1       & 0 & 0 & \dots & 0
	\end{bmatrix}
	\end{align*}
From here on, whenever we use ``sequence of networks'' we will mean a ``large'' network.
\\
A mathematical quantity that will play an important role in the discussion of performance scaling is the (identity) Gramian that we define below(See~\cite{chen1998linear} for a detailed exposition).

\begin{definition}
	\label{gramian}
	The (identity) gramian of a Schur stable matrix, $A$, is given by $P(A)$, or, $P$, where $P = A^T P A + I$. 
\end{definition}

Throughout this paper we consider some commonly encountered networks. We summarize them below. \vspace{3mm}\\
%%%%%%%%%
\textbf{Networks with degree normalization}\\
These networks are of the form $A_n = \gamma D^{-1}_n \Abb_n$ where $D_n$ is the degree matrix, $\Abb_n$ is the adjacency matrix of $A_n$ and $0 <  \gamma < 1$.
\begin{figure}[h]
	\centering
		\includegraphics[width=0.8\textwidth]{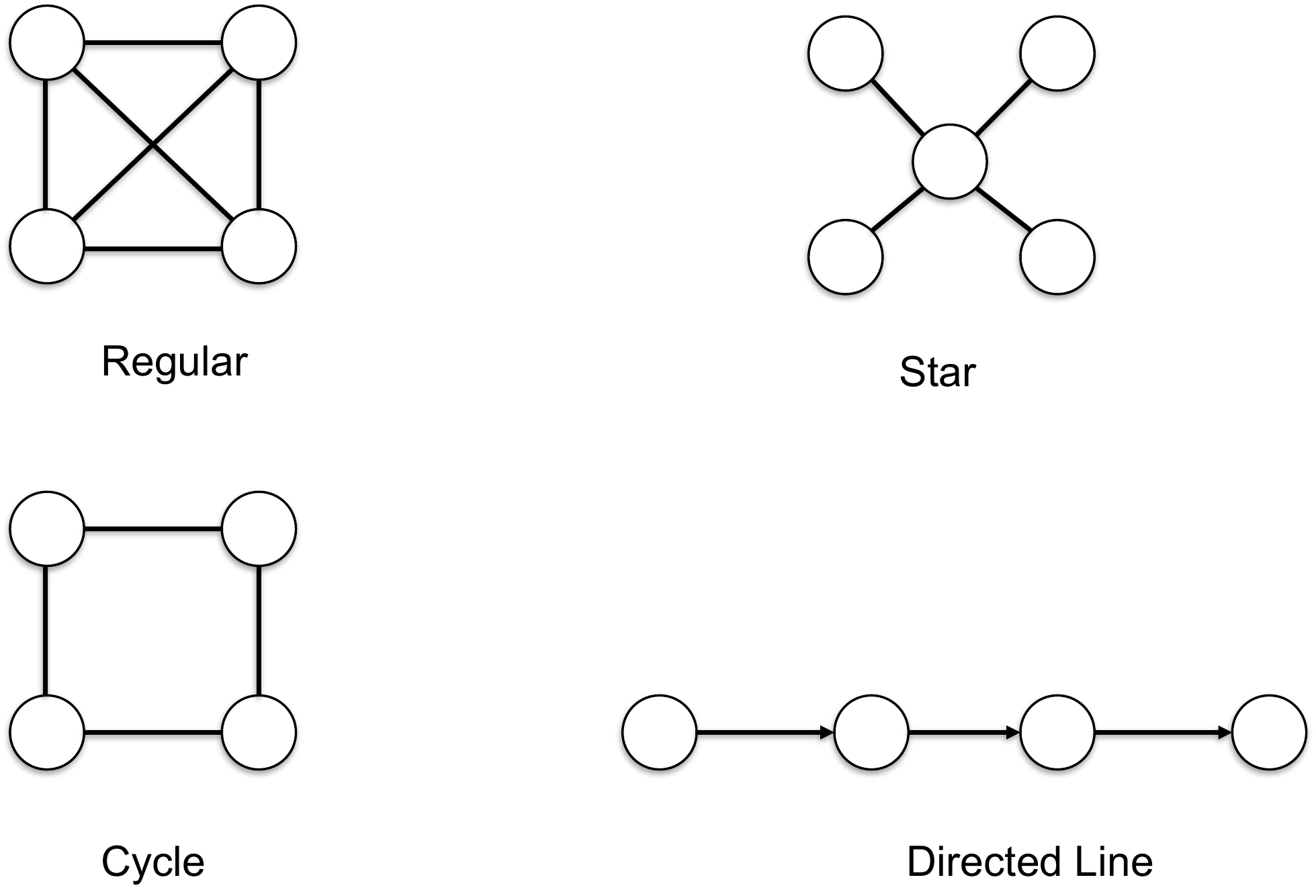}	
	\caption{Different Network Topologies}
	\label{macro_risks}
\end{figure}
Edges with arrow are directed edges. We will refer to the regular, star, cycle, directed line degree networks as $R_n, S_n, C_n, DL_n$ respectively.
{	\begin{align*}
	R_n=\frac{\gamma}{n-1}(\textbf{1} \textbf{1}^T - I)
	\end{align*}
\begin{align*}
[DL_n]_{i, j} &=   
\begin{cases}
1 & \text{if $j = i+1$} \\
0 & \text{otherwise}
\end{cases}
\end{align*}
\begin{align*}
[S_n]_{i, j} &=   
\begin{cases}
\gamma & \text{if $i \neq j$ and $j=1$} \\
\frac{\gamma}{n-1} & \text{if $i \neq j$ and $i=1$}\\
0 & \text{otherwise}
\end{cases}
\end{align*}	
\begin{align*}
[C_n]_{i, j} &=   
\begin{cases}
\gamma/2 & \text{if $i = j+1$} \\
\gamma/2 &  \text{if $j = i+1$} \\
\gamma/2 &  \text{if $(i, j) = (1, n)$ or $(i, j) = (n, 1)$} \\
0 & \text{otherwise}
\end{cases}
\end{align*}} \vspace{3mm}\\
%%%%%%%%
\textbf{Network with loops}\\
\begin{figure}[h]
	\centering
	\includegraphics[width=7cm]{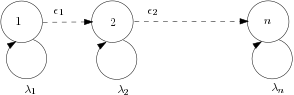}
	\caption{Network $\textbf{VP}_n$} \label{self_loops}
\end{figure}
The topology in Fig.~\ref{self_loops} appears frequently in transportation networks as vehicular platoons. The values of $\lambda_i, \epsilon_i$ are chosen to ensure Schur stability of the network.
\begin{align*}
[\textbf{VP}_n]_{i, j} &=   
\begin{cases}
\lambda_i & \text{if $i = j$} \\
\epsilon_i &  \text{if $j = i+1$} \\
0 & \text{otherwise}
\end{cases}
\end{align*}\\
\\
%%%%
{\textbf{Random Networks}\\
Our results can be extended to random discrete time LTI systems. We consider the Wigner ensemble defined as follows
\begin{definition}
	\label{wigner_ens}
A Wigner matrix, $W_n(\sigma)$, is a matrix that has the following properties:
\begin{itemize}
	\item For all $1 \leq i < j \leq n$, $W_n(\sigma)_{ij}$ is i.i.d. and $W_n(\sigma)_{ij} \sim \Ncal(0, \sigma^2)$.
	\item  $[W_n(\sigma)]_{ij} = [W_n(\sigma)]_{ji}$
	\item $\{[W_n(\sigma)]_{ii}\}_{i=1}^n$ are i.i.d. with mean $0$ and variance $\sigma^2$. For all $l \neq j$, $[W_n(\sigma)]_{ii}$ is independent of $[W_n(\sigma)]_{lj}$.  
\end{itemize}
A Wigner ensemble is a discrete time LTI system where $A_n = \frac{W_n(\sigma)}{\sqrt{n}}$ where $\sigma < \frac{1}{2}$.
\end{definition}
}
The performance measures we are concerned with are closely related to the ``network energy'' that we describe below
{\begin{definition}
	\label{h2norm}
	The $\Hcal_2$--norm of a network $A$ is given by 
	\[
	\Hcal_2(A) = \text{trace}(P(A))
	\] 
\end{definition}
$\Hcal_2$--norm is a common system norm that will occur frequently in our discussion to measure performance of a network. In fact, $\Hcal_2$--norm measures the cumulative amplification of a shock due to network effects, \textit{i.e.}, network energy. Here we present an interpretation of network energy. At time $k = 0$, each node $i$ is hit with a shock, $\omega_{i}$, then we are interested in the following questions - 
\begin{itemize}
	\item If $\omega = \lbrack \omega_{1}, \ldots, \omega_{n} \rbrack^T$ is a deterministic shock, with $||\omega||_2 \leq 1$, then what is maximum effect of the shock on the network ?
	\item If $\omega$ is a random shock, then what is the effect of the shock on the network, on average ?
\end{itemize} 
We measure the energy of the network by the quantity, $E_{\infty} = \sum_{k=0}^{\infty} x^{T}(k) x(k)$, and without loss of generality we will assume $x(0) = \textbf{0}$.
\begin{definition} 
	\label{max_norm}
	Given a network, $A$, with a deterministic input shock, $\omega$, max norm, $\Mcal(A)$, is given by: 
	\begin{equation}
	\label{defn1}
	\Mcal(A) = \sup_{||\omega||_2 = 1} E_{\infty}%\sum_{k = 0}^{\infty} E_k
	\end{equation}
	We will refer to $\Mcal(A)$ as the maximum disruption energy of  $A$.
\end{definition}

\begin{definition}
	\label{average_norm}
	Given a network, $A$, with a  a random shock, $\omega$, average norm (per node), $\Ecal(A)$, is defined as the following: 
	\begin{equation}
	\label{defn2}
	\Ecal(A) = (1/n)\Ebb_{\omega} \lbrack E_{\infty} \rbrack
	\end{equation}
	Here $\Ebb[\omega] = \textbf{0}, \Ebb[\omega \omega^T] = I$. We will refer to $\Ecal(A)$ as the average disruption energy of  $A$.
\end{definition}
{
\begin{proposition}
	\label{norm_sing2}
	The max norm, $\Mcal(A)$, is $\sigma_{1}(P(A))$, and the average norm, $\Ecal(A)$, is $(1/n)\Hcal_2(A)$.
\end{proposition}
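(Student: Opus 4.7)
The plan is to reduce $E_\infty$ to a quadratic form in $\omega$ whose kernel is exactly the Gramian $P(A)$, and then apply standard tricks (Rayleigh quotient for the max norm; the trace–expectation identity for the average norm).

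First I would unroll the recursion in Eq.~\eqref{DT_LTI}. Since $x(0)=0$ and the shock appears only at $k=0$, one gets $x(1)=\omega$, $x(2)=A\omega$, and in general $x(k)=A^{k-1}\omega$ for $k\geq 1$. Substituting into the definition $E_\infty = \sum_{k=0}^\infty x^T(k) x(k)$ gives
\begin{equation*}
E_\infty \;=\; \omega^T\!\left(\sum_{j=0}^\infty (A^T)^j A^j\right)\!\omega.
\end{equation*}
Schur stability (Assumption~\ref{stability}) makes the series absolutely convergent, and the sum is precisely the unique solution to the discrete Lyapunov equation $P = A^T P A + I$, i.e.\ $P=P(A)$ from Definition~\ref{gramian}. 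Hence $E_\infty = \omega^T P(A)\omega$.

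For the max norm, I would invoke the Rayleigh--Ritz characterization: because $P(A)$ is Hermitian positive semidefinite (in fact positive definite, since the $j=0$ term is $I$), we have
\begin{equation*}
\mathcal{M}(A)=\sup_{\|\omega\|_2=1}\omega^T P(A)\omega = \lambda_{\max}(P(A)) = \sigma_1(P(A)),
\end{equation*}
where the last equality uses that for a psd matrix the largest eigenvalue equals the largest singular value.

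For the average norm, I would use the cyclic property of trace together with linearity of expectation:
\begin{equation*}
\mathcal{E}(A)=\frac{1}{n}\,\mathbb{E}_\omega[\omega^T P(A)\omega]=\frac{1}{n}\,\mathrm{tr}\!\left(P(A)\,\mathbb{E}[\omega\omega^T]\right)=\frac{1}{n}\,\mathrm{tr}(P(A)),
\end{equation*}
using $\mathbb{E}[\omega\omega^T]=I$. By Definition~\ref{h2norm} this is $(1/n)\mathcal{H}_2(A)$.

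There is no real obstacle here; the only small subtlety worth stating carefully is the indexing of $x(k)$ (the shock appears at $k=0$ but the state it produces first shows up at $k=1$), which determines whether the Gramian comes out as $\sum_{j\geq 0}(A^T)^j A^j$ or as $\sum_{j\geq 1}(A^T)^j A^j$. The chosen convention makes the Gramian match Definition~\ref{gramian} exactly, so the identification $E_\infty=\omega^T P(A)\omega$ is clean and both claims follow immediately.
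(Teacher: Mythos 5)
Your proposal is correct and follows essentially the same route as the paper's proof: unroll the dynamics to write $E_\infty=\omega^T\bigl(\sum_{j\geq 0}(A^T)^jA^j\bigr)\omega=\omega^T P(A)\omega$, take the supremum of the quadratic form over the unit ball for $\Mcal(A)$, and use $\mathrm{tr}(P(A)\,\Ebb[\omega\omega^T])=\mathrm{tr}(P(A))$ for $\Ecal(A)$. Your explicit care with the indexing of $x(k)$ is a small clarification the paper glosses over, but it changes nothing substantive.
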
}
$\Mcal(A)$ and $\Ecal(A)$ represent two different aspects of network robustness. When there is a deterministic shock incident to the network, $\Mcal(A)$ denotes the maximum energy that can propagate through the network. On the other hand if it is a random shock, then $\Ecal(A)$ denotes expected energy that propagates through the network (due to each node). However, $\Mcal(A), \Ecal(A)$ are closely related through the Gramian $P(A)$: which itself gives a complete energy profile for a network.

Whenever $\rho(A) \geq 1$ it is well known that network energy increases in an unbounded fashion(See~\cite{chen1998linear}). To simply observe the effect of interconnections on network energy, by factoring out the distance to instability \textit{i.e.} $1 - \rho(A)$, we propose the scaled $\Hcal_2$--norm.
{
	\begin{definition}
	\label{scaled_h2norm}
	The scaled $\Hcal_2$--norm of a network $A$, denoted by $\Hcal_2^S(A)$, is 
	\[
	\Hcal^S_2(A) = (1 - \rho(A))\text{trace}(P(A))
	\] 
\end{definition}
}
Using the $\Hcal_2$--norm as a performance measure (or robustness measure) we formalize the notion of ``graceful scaling''. 
\begin{definition} 
	\label{resil}
	A large network, $A_n$, is robust (in an asymptotic sense) if we have:
	\begin{itemize}
		\item Network matrix, $A_n$, is stable for each $n$
		\item $\Hcal_2(A_n) = O(p(n))$
	\end{itemize}
	Here $p(\cdot) \in \Pcal_d$ for some $d \in \Nbb$. Fragility is the lack of robustness in the sense of super-polynomial or exponential scaling of $\text{trace}(P(A_n))$ for $\{A_n\}$. Further, let $A^1, A^2$ be two networks and $\Hcal_2(A_n^1)  = \Theta(p_1(n)), \Hcal_2(A_n^2)  = \Theta(p_2(n))$. Then
	\begin{itemize}
		\item If $p_1(n) = o(p_2(n))$ then $A_n^1$ performs better than $A_n^2$ in $\Hcal_2$--norm.
		\item If $p_1(n) = \Theta(p_2(n))$ then $A_n^1$ performs as well as $A_n^2$.
		\item If $p_2(n) = o(p_1(n))$ then $A_n^1$ performs worse than $A_n^2$.
	\end{itemize}
	
\end{definition}
%\begin{remark}
%	\label{graceful_exp}
%	Graceful scaling means that $\Hcal_2$--norm (or network energy) scales at most polynomially in network size. Indeed, networks with only constant or linear $\Hcal_2$--norm are better than those with higher order polynomial scaling.
%\end{remark}
%\begin{remark}
%	\label{large_net}
%The reason we emphasize on large networks is because for any matrix $A$, $\Hcal_2(A) = C f(n)$ where $C$ is some network size($= n$) independent constant. We want to analyze the case when $n$ is large enough to dominate $C$ and $\Hcal_2(A)$ measures essentially the network specific effects.
%\end{remark}

\begin{definition}[Spectral Balancing]
	\label{balance_networks}
	Given a large network, $A_n$, and any $0 < \epsilon < 1$, we define its $\epsilon$--balanced version as
	\[
	A^{\epsilon}_n = (1-\epsilon)A_n + \epsilon U_n \Gamma_n U_n^{T}
	\]
	where $\rho(A_n) = \gamma < 1$, $\Gamma_n$ is a diagonal matrix with $\rho(\Gamma_n) \leq \gamma$ and spectral decomposition of $P(A_n)$ given by $P(A_n) = U_n D_n U_n^{T}$. 
\end{definition}

\begin{remark}
	\label{balance_rmk}
	It is easy to observe that for an $\epsilon$--balanced version we have that 
	\[
	\max(a^{\epsilon}_{ji}, a^{\epsilon}_{ij}) - \min(a^{\epsilon}_{ji}, a^{\epsilon}_{ij}) \leq 	\max(a_{ji}, a_{ij})-\min(a_{ji}, a_{ij})
	\]
	for every pair $i, j$. Balancing makes a network more undirected.
\end{remark}}
Spectral balancing will be useful in understanding what network topologies perform better and will be discussed in the future sections.

% In the following discussions, we impose certain distributional assumptions on the shock given by
%\begin{assumption}
%	\label{white_noise}
%	$\omega$ is an $n \times 1$ random vector with $\Ebb \lbrack \omega \omega^{T} \rbrack = I_{n \times n}$ and $\Ebb \lbrack \omega \rbrack = \textbf{0}$.
%\end{assumption}
\section{Robustness in Large Networks}
\label{robust_prelim}
We analyze network energy of an undirected network and show that it is completely characterized by its spectral radius.
\begin{proposition}
	\label{norm_sing}
	For a large undirected network, $A_n$,  we have
	\begin{align*}
	\Mcal(A_n) &= \frac{1}{1 -\rho^2(A_n)}\\
	\Ecal(A_n) &= O(1/(1- \rho(A_n)))
	\end{align*}
\end{proposition}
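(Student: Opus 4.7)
The plan is to exploit the symmetry of $A_n$ to diagonalize the Lyapunov equation defining the Gramian, then read off $\Mcal(A_n)$ and $\Ecal(A_n)$ directly from the eigenvalues of $A_n$.

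First I would solve the discrete Lyapunov equation $P = A_n^T P A_n + I$ explicitly for symmetric $A_n$. Writing $A_n = U_n \Lambda_n U_n^T$ with $U_n$ orthogonal and $\Lambda_n = \mathrm{diag}(\lambda_1, \ldots, \lambda_n)$, the Schur stability assumption $\rho(A_n) < 1$ makes the Neumann series converge, so
\begin{equation*}
P(A_n) = \sum_{k=0}^{\infty} (A_n^T)^k A_n^k = \sum_{k=0}^{\infty} A_n^{2k} = (I - A_n^2)^{-1} = U_n (I - \Lambda_n^2)^{-1} U_n^T.
\end{equation*}
Thus $P(A_n)$ is symmetric and shares eigenvectors with $A_n$, with eigenvalues $1/(1-\lambda_i^2)$ for $i = 1, \ldots, n$.

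Next, invoking Proposition~\ref{norm_sing2}, I would compute $\Mcal(A_n) = \sigma_1(P(A_n))$. Since $P(A_n)$ is positive semidefinite, its largest singular value coincides with its largest eigenvalue, which is $\max_i 1/(1 - \lambda_i^2) = 1/(1 - \rho^2(A_n))$ (the maximum is attained at the eigenvalue of largest magnitude). This yields the first claim on the nose.

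For the second claim I would use $\Ecal(A_n) = (1/n)\, \text{trace}(P(A_n)) = (1/n)\sum_{i=1}^n 1/(1 - \lambda_i^2)$, and bound each summand by the largest one:
\begin{equation*}
\Ecal(A_n) \;\leq\; \frac{1}{1 - \rho^2(A_n)} \;=\; \frac{1}{(1 - \rho(A_n))(1 + \rho(A_n))} \;\leq\; \frac{1}{1 - \rho(A_n)},
\end{equation*}
which gives $\Ecal(A_n) = O(1/(1 - \rho(A_n)))$. There is no real obstacle here: the whole argument hinges on the single observation that symmetry lets $P(A_n)$ and $A_n$ be simultaneously diagonalized, after which both the operator norm and the normalized trace collapse to elementary functions of the spectrum. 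The asymmetric case — where $P(A_n)$ no longer commutes with $A_n$ — is precisely what motivates the rest of the paper.
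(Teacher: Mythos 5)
Your proposal is correct and follows essentially the same route as the paper: both exploit the symmetry of $A_n$ to write $P(A_n) = \sum_{k} A_n^{2k} = (I - A_n^2)^{-1}$ with the same eigenvectors as $A_n$, read off $\sigma_1(P(A_n)) = 1/(1-\rho^2(A_n))$, and bound $\Ecal(A_n)$ by $\Mcal(A_n)$. Your version merely makes the simultaneous diagonalization explicit where the paper states it in words.
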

Proposition~\ref{norm_sing} is an extension of Proposition $1$ in~\cite{h2_norm_vol}. According to this, the scaling in the ``network volatility'' or $\Hcal_2$--norm of a large network is limited by the proximity of its spectral radius to unity. For completeness we present the $\Hcal_2$--norm of the networks introduced before.
\begin{proposition}
	\label{network_robustness}
	For the topologies in Fig.~\ref{macro_risks}
	\begin{center}
		\begin{tabular}{ |c|c|c| } 
			\hline
			Network & $\Mcal(\cdot)$ & $\Ecal(\cdot)$ \\ 
			\hline
			$S_n$ & $\Theta(n)$ & $\Theta(1)$ \\ 
			\hline
			$R_n, C_n$ & $\Theta(1)$ & $\Theta(1)$ \\ 
			\hline
			$DL_n$ &  $\Theta(n)$ &  $\Theta(n)$ \\ 
			\hline
			Wigner Ensemble & $\Theta(1)$ & $\Theta(1)$ \\ 
			\hline
		\end{tabular}
	\end{center} 
	The result for Wigner ensemble is true with probability $1$.
\end{proposition}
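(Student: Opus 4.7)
The plan is to split the topologies into two groups: the symmetric ones ($R_n$, $C_n$, and the Wigner ensemble), for which Proposition~\ref{norm_sing} applies directly, and the directed ones ($S_n$, $DL_n$), for which the identity Gramian $P$ must be computed explicitly from the structure of the matrix.

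For the symmetric cases, I would first diagonalize. The regular network satisfies $R_n = \frac{\gamma}{n-1}(\mathbf{1}\mathbf{1}^T - I)$, which has eigenvalue $\gamma$ (along $\mathbf{1}$) and $-\gamma/(n-1)$ with multiplicity $n-1$, so $\rho(R_n)=\gamma$ is constant. The cycle $C_n$ is circulant with eigenvalues $\gamma\cos(2\pi k/n)$, again with $\rho=\gamma$. For the Wigner ensemble, I would invoke a Bai--Yin/F\"uredi--Koml\'os type result: $\rho(W_n(\sigma)/\sqrt{n})\to 2\sigma<1$ almost surely. In all three cases Proposition~\ref{norm_sing} immediately yields $\mathcal{M}(\cdot)=\Theta(1)$. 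For $\mathcal{E}$, since $A_n$ is symmetric, $\mathrm{trace}(P(A_n))=\sum_i 1/(1-\lambda_i^2(A_n))$; in each case the spectrum remains bounded away from $\pm 1$ uniformly in $n$ (for Wigner, by bounding the empirical spectral measure by the semicircle law on $[-2\sigma,2\sigma]$), so $\mathrm{trace}(P(A_n))=\Theta(n)$ and thus $\mathcal{E}=\Theta(1)$.

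For the directed star $S_n$, I would exploit its rank-two structure. A direct calculation yields the remarkable recursion $S_n^{k+2}=\gamma^2 S_n^k$ for all $k\ge 1$ (with $S_n^2$ and $S_n$ having disjoint support patterns). Summing the Neumann-type series,
\begin{equation*}
P(S_n)=I+\frac{1}{1-\gamma^4}\Bigl(S_n^T S_n+(S_n^T)^2 S_n^2\Bigr),
\end{equation*}
where $S_n^TS_n$ has its dominant entry $(n-1)\gamma^2$ at position $(1,1)$ and a rank-one tail block of order $\gamma^2/(n-1)$ off the center. Its eigenvalues are $\{(n-1)\gamma^2,\gamma^2/(n-1),0,\ldots,0\}$, sharing eigenvectors $e_1$ and the uniform vector on the leaves with $(S_n^T)^2 S_n^2$. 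Assembling these gives $\sigma_1(P(S_n))=\Theta(n)$ (attained in the $e_1$ direction) and $\mathrm{trace}(P(S_n))=n+\Theta(n)=\Theta(n)$, hence $\mathcal{M}(S_n)=\Theta(n)$ and $\mathcal{E}(S_n)=\Theta(1)$.

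For $DL_n$, the matrix is nilpotent, so $\rho(DL_n)=0$ and the Gramian series terminates: $P(DL_n)=\sum_{k=0}^{n-1}(DL_n^T)^k DL_n^k$. Since $[DL_n^k]_{\ell i}=\mathbf{1}\{\ell=i-k\}$, the cross terms vanish for $i\ne j$, so $P(DL_n)=\mathrm{diag}(1,2,\ldots,n)$, yielding $\mathcal{M}(DL_n)=n$ and $\mathcal{E}(DL_n)=(n+1)/2=\Theta(n)$. The main obstacle throughout is the directed star: unlike the symmetric cases, one cannot simply read off $\mathcal{M}$ from the distance to instability $1-\rho$, and the anisotropy of the Gramian (large only in one direction) is what drives the $\Theta(n)$ scaling. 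A secondary subtlety is the Wigner case, where one must ensure the bounds on the extreme eigenvalues hold almost surely and that $\int_{-2\sigma}^{2\sigma}(1-x^2)^{-1}\,d\mu_{sc}(x)<\infty$ to deliver $\mathcal{E}=\Theta(1)$ with probability one.
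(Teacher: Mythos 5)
Your proof is correct and follows essentially the same route as the paper: explicit powers of each network matrix to evaluate or bound the Gramian series, Proposition~\ref{norm_sing} together with the Bai--Yin spectral bound for the symmetric and Wigner cases, and the terminating nilpotent series for $DL_n$. The only real difference is cosmetic: for $S_n$ you sum the series in closed form using $S_n^{k+2}=\gamma^2 S_n^k$ and diagonalize $P(S_n)$ exactly, whereas the paper derives two-sided PSD/norm bounds from the same periodicity of the powers of $S_n$.
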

{The proof of this proposition and numerical verification can be found in Section~\ref{robustness_proof}.} It turns out that the maximum disruption in $S_n$ is more than in $R_n$. This happens because one unit of energy given to the central node is transferred to every other node at the next time instant (an amplification of $n-1$). On the other hand in $R_n$ only $1/(n-1)$ energy is transferred to every other node. However, both have the similar disruption on average because the probability of picking the central node is $1/n$ (assume uniform) and as a result 
\begin{equation*}
\Ecal(S_n) \approx (1/n) (n-1) + \sum_{j=2}^n (1/n) = \Theta(1)
\end{equation*}

{Another surprising observation is the high network energy for $DL_n$ despite the spectral radius being zero. One possible explanation is the unidirectional nature of the network where energy is transferred from one node to the other as is.} The natural question to ask is if there exists a characterization of network energy based on spectral radius for directed networks. 
\begin{proposition}
	\label{platoon_growth}
	For the large network $\textbf{VP}_n$(in Fig.~\ref{self_loops}) with $\epsilon_i=1$ and $\delta_1 < \lambda_i < \delta_2$ for all $i$ and some $0 < \delta_1, \delta_2 < 1$, we have that the $\Hcal_2(\textbf{VP}_n)$ is $\Omega(\exp{(cn)})$ and $\rho(\textbf{VP}_n) \leq \delta_2$.
\end{proposition}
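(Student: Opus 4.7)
The plan is to exploit the strict upper-triangular sparsity of $\textbf{VP}_n$ together with a combinatorial path-counting interpretation of its powers. First I would observe that with $\epsilon_i = 1$, the matrix $A := \textbf{VP}_n$ is upper bidiagonal, so its spectrum is exactly $\{\lambda_i\}_{i=1}^n$ and $\rho(A) = \max_i \lambda_i \leq \delta_2 < 1$; this settles the spectral-radius claim and guarantees that $P(A)$ is well-defined.

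For the exponential lower bound on $\Hcal_2(A) = \text{trace}(P(A))$, I would lower bound the single diagonal entry $P_{nn} = \sum_{k \geq 0} \|A^k e_n\|^2 \geq \sum_{k \geq 0} [A^k]_{1,n}^2$. The entry $[A^k]_{1,n}$ equals the sum of weights over all length-$k$ walks from node $1$ to node $n$ in the directed graph where node $i$ has a self-loop of weight $\lambda_i$ and a forward edge to $i+1$ of weight $1$; any such walk consists of $n-1$ forward moves interleaved with $k-(n-1)$ stay-steps. There are $\binom{k}{n-1}$ interleavings, and each carries weight at least $\delta_1^{\,k-n+1}$ since every stay contributes a factor $\lambda_j \geq \delta_1$. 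This yields
\[
[A^k]_{1,n} \;\geq\; \binom{k}{n-1}\, \delta_1^{\,k-n+1} \qquad \text{for all } k \geq n-1.
\]

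To amplify this into an exponential bound I would optimize $\binom{k}{n-1}^2 \delta_1^{\,2(k-n+1)}$ over $k$. Setting $k^{\ast} = \lceil (n-1)/(1-\delta_1) \rceil$ and applying Stirling (equivalently, the binary-entropy expansion of $\log\binom{k}{n-1}$), the cross terms $\delta_1 \log \delta_1$ cancel and one obtains
\[
\log [A^{k^{\ast}}]_{1,n}^2 \;\geq\; -2(n-1)\log(1-\delta_1) + o(n),
\]
so $P_{nn} = \Omega(e^{cn})$ for any $c < 2\log(1/(1-\delta_1))$, which gives the claim. A slicker alternative, which avoids choosing a specific $k^{\ast}$, is to substitute $k \mapsto k - (n-1)$ in the summed lower bound and use $\binom{k+n-1}{n-1} \geq n^k/k!$ to get $P_{nn} \geq \sum_{k\geq 0}(n\delta_1)^{2k}/(k!)^2 = I_0(2n\delta_1) \sim e^{2n\delta_1}/\sqrt{4\pi n\delta_1}$, which is also exponential whenever $\delta_1 > 0$.

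The main obstacle is purely bookkeeping in the Stirling/Bessel asymptotics so that sub-exponential factors do not contaminate the exponential rate. No new ideas beyond the walk interpretation and careful binomial bounding are required; in particular the directed, non-normal structure of $A$ is exactly what lets $\binom{k}{n-1}$ blow up while $\rho(A)$ stays bounded away from $1$, illustrating the gap between asymptotic instability and energy scaling that the paper emphasizes.
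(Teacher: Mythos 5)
Your argument is correct and follows essentially the same route as the paper's: both proofs exploit the binomial expansion of powers of the bidiagonal matrix --- your walk count $\binom{k}{n-1}\,\delta_1^{\,k-n+1}$ is exactly the $(1,n)$ entry of $(\lambda I + N)^k$ after replacing every $\lambda_i$ by a common lower bound $\lambda$, which is precisely the paper's comparison $\|J^k\|_\infty \geq \|J(\lambda)^k\|_\infty$ --- and then push the resulting entrywise blow-up into the Gramian. The only differences are in execution: the paper fixes $k=n$ and chains through $\|J^n\|_\infty \to \|J^n\|_2 \to \sigma_1(P)$ via $P \succeq (J^n)^T J^n$, whereas you lower bound the single diagonal entry $P_{nn}$ directly and optimize over $k$, which incidentally yields the slightly sharper exponential rate $2\log\bigl(1/(1-\delta_1)\bigr)$ in place of the paper's $2\log(1+\lambda)$.
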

Based on Proposition~\ref{platoon_growth}, we see that poor scaling of performance measures, in general networks, cannot be attributed to their spectra. In fact, under certain conditions they might turn out to be independent.

In the next section we discuss the applications of ideas developed here. We present a well studied model of a platoon network. We show that balancing cannot increase network energy -- which gives us a direction to improve performance in networks. Second, we present the case of an economic network. Although the performance measure seems quite different to network energy they are closely related by the Gramian.

\section{Connections to Performance Measures}
\label{sec:robustnessMeasures}
\subsection{Robustness as Network Energy}
\label{energy_network}
We revisit the discussion of $\Hcal_2$--norm scaling in different networks (directed and undirected) from the previous section. We observe that, whenever possible, making a network more undirected always results in better performance scaling. 
%
%\begin{proposition}
%	\label{balanced_stability}
%The $(\epsilon, W)$--balanced version of a large network, $A_n$, is always Schur stable.
%\end{proposition}
\begin{theorem}
	\label{symmetrizing}
For any large network $A_n$ with $\rho(A_n) = \gamma$ we have that 
\[
\Ecal(A^{\epsilon}_n)= O(\Ecal(A_n)), \,\Mcal(A^{\epsilon}_n) = O(\Mcal(A_n))
\]
Here $A^{\epsilon}_n$ is $\epsilon$--balanced version of $A_n$ for any fixed $0 < \epsilon < 1$.
\end{theorem}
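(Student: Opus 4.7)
The plan is to show a Loewner-order comparison of Gramians, namely $P(A_n^\epsilon) \preceq c_\epsilon\, P(A_n)$ for some constant $c_\epsilon$ that depends only on $\epsilon$ and $\gamma = \rho(A_n)$, not on $n$. Once this is established, both conclusions are immediate: $\Mcal(A_n^\epsilon)=\sigma_1(P(A_n^\epsilon))\le c_\epsilon\,\sigma_1(P(A_n))=c_\epsilon \Mcal(A_n)$, and taking traces and dividing by $n$ gives $\Ecal(A_n^\epsilon)\le c_\epsilon \Ecal(A_n)$.

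The tool is the standard Lyapunov monotonicity: if $A$ is Schur stable and some $Q\succ 0$ satisfies $Q\succeq A^T Q A + I$, then $Q\succeq P(A)$. So I will guess $Q = c_\epsilon P(A_n)$ and verify the inequality for $A = A_n^\epsilon$. Set $B_n = U_n\Gamma_n U_n^T$, so that $A_n^\epsilon = (1-\epsilon)A_n + \epsilon B_n$. The crucial structural fact I will exploit is that $B_n$ is symmetric and shares the eigenbasis $U_n$ with $P(A_n)$; therefore $B_n$ and $P(A_n)$ commute simultaneously diagonalize, which gives $B_n^T P(A_n) B_n = U_n \Gamma_n^2 D_n U_n^T \preceq \gamma^2 P(A_n)$ since $\Gamma_n^2\preceq \gamma^2 I$.

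Expanding $(A_n^\epsilon)^T P(A_n) A_n^\epsilon$ produces an $A_n^T P A_n$ term, a $B_n^T P B_n$ term, and a cross term $A_n^T P B_n + B_n^T P A_n$. I will bound the cross term by $A_n^T P A_n + B_n^T P B_n$, which is the matrix AM–GM coming from $(A_n-B_n)^T P(A_n)(A_n-B_n)\succeq 0$. Combining this with the Lyapunov identity $A_n^T P(A_n) A_n = P(A_n)-I$ and the bound on $B_n^T P B_n$, the cross terms collapse and produce
\[
(A_n^\epsilon)^T P(A_n) A_n^\epsilon \preceq (1-\epsilon)(P(A_n)-I) + \epsilon\gamma^2 P(A_n) = (1-\epsilon(1-\gamma^2))P(A_n) - (1-\epsilon)I.
\]
Rearranging gives $P(A_n) - (A_n^\epsilon)^T P(A_n) A_n^\epsilon \succeq \epsilon(1-\gamma^2)P(A_n) + (1-\epsilon)I$, and since $P(A_n)\succeq I$ this lower bound is at least $(1-\epsilon\gamma^2)I$. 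Choosing $c_\epsilon = 1/(1-\epsilon\gamma^2)$ then yields $c_\epsilon P(A_n) \succeq (A_n^\epsilon)^T c_\epsilon P(A_n) A_n^\epsilon + I$, which simultaneously certifies that $A_n^\epsilon$ is Schur stable and forces $P(A_n^\epsilon)\preceq c_\epsilon P(A_n)$ by Lyapunov monotonicity, completing the argument.

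The only subtle step, and the one I would flag as the main obstacle, is controlling the indefinite cross term $A_n^T P B_n + B_n^T P A_n$ without losing a factor that depends on $n$. A crude $\alpha$-weighted AM–GM splitting would leave a residual factor $(1-\epsilon)^2+\epsilon^2 > 1-\epsilon$ that prevents the final coefficient $1-\epsilon(1-\gamma^2)$ from being strictly less than $1$; the key trick is that with the balanced weights $\epsilon(1-\epsilon)$ coming from the expansion, the symmetric choice $\alpha=1$ gives exactly $(1-\epsilon)A_n^TPA_n + \epsilon B_n^T P B_n$, and the convex combination of $P-I$ and $\gamma^2 P$ is what makes $c_\epsilon$ independent of $n$.
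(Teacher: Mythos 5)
Your argument is correct, and it takes a genuinely different and substantially more economical route than the paper's. The paper expands $(A_n^{\epsilon})^k$ combinatorially into all words in $(1-\epsilon)A_n$ and $\epsilon\Lambda$, bounds each word in the Loewner order via a sandwich lemma (Lemma~\ref{psd_lemma}), and then treats $\sigma_1$ and the trace separately: the trace bound requires an additional Cauchy--Schwarz over the ${k \choose j}$ arrangements (Lemma~\ref{sum_trace_bnd}), the identity ${k \choose j}^2 \leq {2k \choose 2j}$, and a residual factor of $(k+1)$ inside the geometric sum, while Schur stability of $A_n^{\epsilon}$ is proved in yet another lemma from the decay of $\|(A_n^{\epsilon})^k\|$. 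You instead exhibit a single Lyapunov certificate $Q = c_\epsilon P(A_n)$ with $c_\epsilon = 1/(1-\epsilon\gamma^2)$, verified using only the Lyapunov identity $A_n^T P A_n = P - I$, the shared-eigenbasis bound $B_n^T P B_n \preceq \gamma^2 P$ (which is exactly the paper's Lemma~\ref{ones_psd}), and one application of $(A_n - B_n)^T P (A_n - B_n) \succeq 0$ (the paper's Lemma~\ref{trace_bnd}, which the paper applies combinatorially many times rather than once); your observation that the cross-term weights $\epsilon(1-\epsilon)$ collapse the quadratic coefficients to the exact convex combination $(1-\epsilon)A_n^T P A_n + \epsilon B_n^T P B_n$ is the right computation and checks out. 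This buys you the strictly stronger conclusion $P(A_n^{\epsilon}) \preceq c_\epsilon P(A_n)$ in the Loewner order, from which the $\sigma_1$ bound, the trace bound, and the Schur stability of $A_n^{\epsilon}$ all follow simultaneously, with a cleaner constant than the paper's $\sum_k k(1-\epsilon+\epsilon\gamma)^{2k}$. The only step to make explicit in a write-up is the monotonicity fact you invoke: if $Q \succ 0$ and $Q \succeq A^T Q A + I$, then $A$ is Schur stable and $Q \succeq P(A)$, obtained by iterating the inequality and letting $(A^T)^N Q A^N \to 0$.
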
	
{The implications of Theorem~\ref{symmetrizing} are multifold. It guarantees that spectral balancing does not increase energy in an order sense. The intuition behind balancing, \textit{i.e.}, making the network ``more symmetric'', follows from Proposition~\ref{norm_sing} where we show that symmetric networks have the best possible scaling of network energy in network dimension. Indeed there are examples where an asymmetric controller is the most optimal (See Lin et. al.~\cite{lin2012optimal}) and balancing does not strictly reduce energy. However, designing such a controller is very challenging. Instead, Theorem~\ref{symmetrizing} suggests that searching over balancing operations provides an easier alternative that may reduce network energy. In fact in Fig.~\ref{symmetrizing_random} we show that in many cases spectral balancing strictly reduces network energy.}
\vspace{3mm}\\
%%%%%%%
\textbf{Scaling in Network  Controllers} \\
Here we present more concrete examples from transportation networks to show how robustness scaling manifests in controllers for vehicular platoons. We consider the vehicular platoon network discussed in~\cite{lin2012optimal}. The closed loop dynamical system (Eqns.(SS) and (VP2) in~\cite{lin2012optimal}) is of the form (discretized single integrator version)
\begin{align}
\label{optimal_controller}
x(k+1) - x(k) &= -K x(k) + d(k) \\
x_i(k+1) - x_i(k)&= k_{i, i-1}(x_{i-1}(k) -  x_i(k)) \nonumber\\
&+ k_{i, i+1}(x_{i+1}(k) -  x_i(k)) \nonumber
\end{align}
where $x_i(k)$ is the relative position error for vehicle $i$. Here $K$ is of the form 
\begin{align*}
K = \begin{bmatrix}
k_{1, 1} + k_{1, 2}  & -k_{1, 2} &   \\
-k_{2, 1} & k_{2, 1} + k_{2, 3} &  \ddots &   \\
&\ddots &  \ddots & -k_{n-1, n} \\
&      & -k_{n, n-1} & k_{n, n} + k_{n, n-1}
\end{bmatrix}
\end{align*}
Assume first that $K$ is symmetric, \textit{i.e.}, 
\begin{align*}
K = \begin{bmatrix}
k_{1} + k_{ 2}  & -k_{ 2} &   \\
-k_{2} & k_{2} + k_{3} &  \ddots &   \\
&\ddots &  \ddots & -k_{n-1} \\
&      & -k_{n-1} & k_{n} + k_{ n-1}
\end{bmatrix}
\end{align*}
Additionally, $\rho(I-K) < 1$ for stability. The structure of the feedback is given in Fig.~\ref{platoon_feedback}. $x_i$ denotes the position of vehicle $i$.

\begin{figure}[h]
	\centering
	\includegraphics[width=0.9\textwidth]{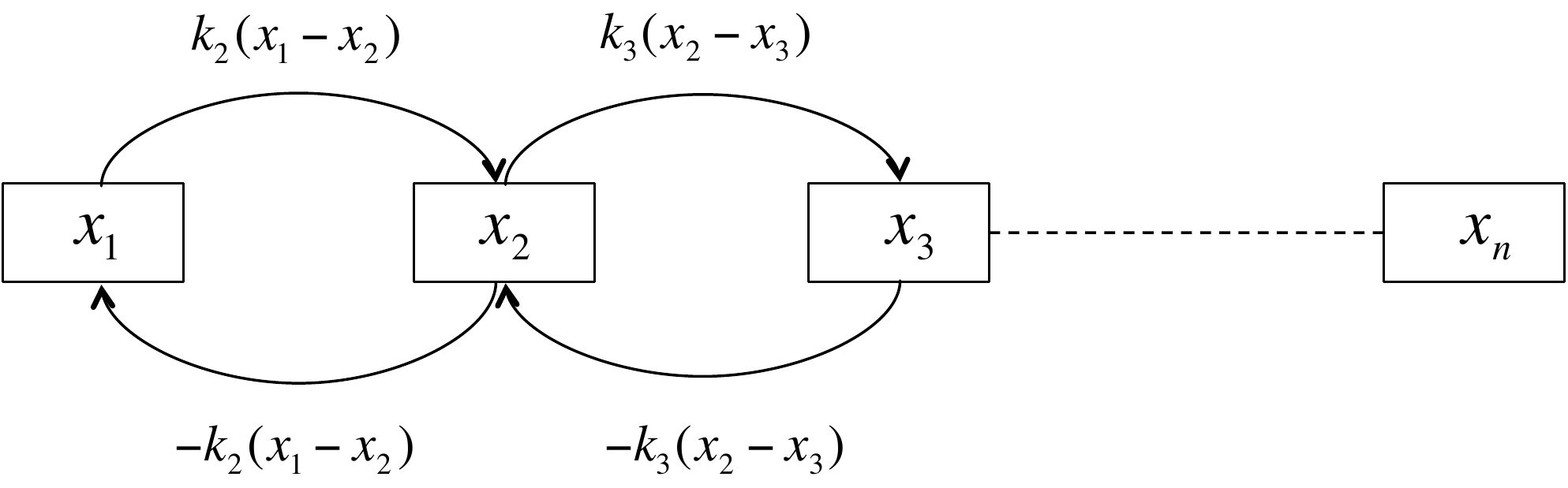}
	\caption{Feedback structure for symmetric platoon}
	\label{platoon_feedback}
\end{figure}
Now, the optimal symmetric controller design is given by
\begin{align}
\label{opt_controller}
K_{\text{sym}} = \arg \text{minimize}_{K} \hspace{0.5cm} &\text{trace}(P) \nonumber\\
\text{where} \hspace{0.5cm}  (I-K) P (I-K) &+ I = P
\end{align} 

\begin{proposition}
	\label{optimal_scaling}
	For the optimal symmetric controller in Eq.~\eqref{optimal_controller} we have 
		\begin{align*}
		\Hcal_2(I - K_{\text{sym}}) &= \Omega(n^2)
	\end{align*}
\end{proposition}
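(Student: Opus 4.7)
The plan is to convert $\Hcal_2(I-K_{\text{sym}})$ into a spectral sum over the eigenvalues of $K_{\text{sym}}$, reduce the claimed $\Omega(n^2)$ lower bound to a lower bound on $\text{trace}(K_{\text{sym}}^{-1})$, and then exploit the path--graph Laplacian structure of $K_{\text{sym}}$ to show that its smallest $\Theta(n)$ eigenvalues must be $O(j^2/n^2)$.

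First, writing $K := K_{\text{sym}}$, symmetry of $I-K$ lets us solve the Lyapunov equation in closed form as $P = (I-(I-K)^2)^{-1} = K^{-1}(2I-K)^{-1}$. Stability $\rho(I-K) < 1$ gives $\lambda_i(K) \in (0,2)$, and the partial--fraction identity $1/[\lambda(2-\lambda)] = \tfrac{1}{2}[1/\lambda + 1/(2-\lambda)]$ yields
\[
\Hcal_2(I-K) = \tfrac{1}{2}\bigl[\text{trace}(K^{-1}) + \text{trace}((2I-K)^{-1})\bigr] \geq \tfrac{1}{2}\,\text{trace}(K^{-1}).
\]
It therefore suffices to prove $\text{trace}(K^{-1}) = \Omega(n^2)$ uniformly over feasible choices of the gains.

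Next, decompose $K = L + D$, where $L$ is the positive semidefinite Laplacian of the path graph on $n$ vertices with edge weights supplied by the interior gains, and $D$ is a nonnegative diagonal matrix supported only at nodes $1$ and $n$ (containing $k_1$ and $k_n$); in particular $\text{rank}(D) \leq 2$. Because $\lambda_{\max}(K) < 2$, every diagonal entry $K_{ii}$ is strictly less than $2$, which, combined with nonnegativity of the gains, forces each edge weight to satisfy $k_i < 2$. Hence $L \preceq L_{\max}$, where $L_{\max}$ is the uniformly--weighted path Laplacian with weight $2$ on every edge. Now bound the smallest eigenvalues of $K$ by intersecting a low--frequency subspace of $L$ with $\ker D$: let $V_j$ be the span of the eigenvectors of $L$ for its $j+2$ smallest eigenvalues and $W := \ker D$; then $\dim(V_j \cap W) \geq (j+2) + (n-2) - n = j$, and for every unit $v \in V_j \cap W$, $v^T K v = v^T L v \leq \lambda_{j+2}(L)$. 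Courant--Fischer combined with $L \preceq L_{\max}$ gives
\[
\lambda_j(K) \leq \lambda_{j+2}(L) \leq \lambda_{j+2}(L_{\max}) = 4\bigl(1-\cos\tfrac{(j+1)\pi}{n}\bigr) \leq \frac{C(j+1)^2}{n^2}
\]
for an absolute constant $C$ and all $j \leq n/2$. Summing reciprocals,
\[
\text{trace}(K^{-1}) \geq \sum_{j=1}^{\lfloor n/2 \rfloor} \frac{n^2}{C(j+1)^2} = \Omega(n^2),
\]
since $\sum_{j\geq 1} 1/(j+1)^2$ is bounded below by a positive constant on any range $j \in [1,n/2]$. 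Chaining this with the spectral reduction above proves $\Hcal_2(I - K_{\text{sym}}) = \Omega(n^2)$.

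The main obstacle is the eigenvalue bound in the third step: Weyl's inequality applied to $K = L + D$ only gives $\lambda_j(K) \leq \lambda_j(L) + \|D\|_2 \leq \lambda_j(L) + 2$, which is useless at small $j$ where $\lambda_j(L) = O(j^2/n^2)$. Replacing a norm--based interlacing argument by a rank--based one (using only $\text{rank}(D) \leq 2$) costs a harmless index shift of $2$, which is absorbed by the convergence of $\sum 1/(j+1)^2$. This is what makes the grounding perturbation negligible and ensures the lower bound is driven by the path--Laplacian spectrum rather than by the particular choice of boundary gains $k_1, k_n$.
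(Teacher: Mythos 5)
Your proof is correct, and after the shared first step it takes a genuinely different route from the paper's. Both arguments begin identically: symmetry of $I-K_{\text{sym}}$ gives the closed-form Gramian $P=(I-(I-K)^2)^{-1}=K^{-1}(2I-K)^{-1}$, and the problem reduces to showing $\text{trace}(K^{-1})=\Omega(n^2)$ (the paper gets there by expanding $(1/2)K^{-1}(I-(1/2)K)^{-1}$ and discarding psd remainders; your partial-fraction identity $1/[\mu(2-\mu)]=\tfrac12[1/\mu+1/(2-\mu)]$ is a cleaner way to the same reduction). The divergence is in the key step: the paper does not prove the trace lower bound at all but defers it to Eqs.~(8)--(10) of Lin et al., whereas you establish it from first principles by writing $K=L+D$ with $L$ a weighted path Laplacian and $D$ a rank-$\le 2$ boundary grounding, using the dimension-counting/Courant--Fischer argument to absorb $D$ as an index shift of $2$, dominating $L$ by the weight-$2$ path Laplacian (valid since stability forces $K_{ii}<2$ and the gains are nonnegative, a hypothesis both you and the paper rely on), and summing the reciprocals of the $\Theta(j^2/n^2)$ low-lying eigenvalues. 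Your version buys self-containedness and makes explicit that the $\Omega(n^2)$ bound holds uniformly over \emph{all} feasible symmetric gains --- not just the optimizer --- and that it is driven by the path-Laplacian spectrum rather than by the boundary gains; the rank-based interlacing (in place of the useless norm-based Weyl bound) is exactly the right tool and is correctly executed. The paper's version is shorter but opaque at the decisive step.
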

Although the number of nodes in the network are $n$, we observe that even the optimal controller gives a $\Omega(n)$ per vehicle scaling in the robustness measure. Such a performance degradation might be harmful when strict control is required. Next, we consider an asymmetric controller studied in~\cite{fiedler_nonzero} given by $K^{\epsilon}$ ($0 \leq \epsilon_i \leq 1$)
\begin{align*}
K^{\epsilon} = \begin{bmatrix}
k_1 + k_2 \epsilon_2  & -k_2 \epsilon_2 &   \\
-k_2 & k_2 + k_3 \epsilon_3 &  \ddots &   \\
&\ddots &  \ddots & -k_n \epsilon_n\\
&      & -k_n & k_n + k_{n+1} \epsilon_{n+1}
\end{bmatrix}
\end{align*} 
\vspace{1.5mm}\\
The solving for the optimal controller
\begin{align}
\label{opt_controller_asm}
K_{\text{asm}} = \arg \text{minimize}_{K, 0 \leq \epsilon_i \leq 1} \hspace{0.5cm} &\text{trace}(P) \nonumber\\
\text{where} \hspace{0.5cm}  (I-(K^{\epsilon})^T) P (I-K^{\epsilon}) &+ I = P
\end{align} 
Observe when $\epsilon_i = 1$ for all $i$ we have $K^{\epsilon} = K_{\text{sym}}$, \textit{i.e.}, the symmetric controller discussed before.  On the other extreme is when $\epsilon_i = 0$ for all $i$ for which $K^{\epsilon}$ is a purely asymmetric controller. Then for the optimal asymmetric controller problem we have that 
%%%%%
\begin{proposition}
	\label{optimal_asym_scaling}
	For the optimal asymmetric controller in Eq.~\eqref{opt_controller_asm} we have
	\begin{align*}
	\Hcal_2(I - K_{\text{asm}}) &= O(n^{3/2})
	\end{align*}
	\end{proposition}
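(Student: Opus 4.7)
Since $K_{\mathrm{asm}}$ is the minimizer in \eqref{opt_controller_asm}, any feasible pair $(k_i, \epsilon_i)$ satisfying $\rho(I - K^\epsilon) < 1$ yields an upper bound on $\Hcal_2(I - K_{\mathrm{asm}})$. My plan is to exhibit one explicit such choice and bound its norm directly. I would take $\epsilon_i = 0$ for all $i$ (fully one-sided feedback) and $k_i = k$ a constant in $(0,1)$ to be tuned. The closed-loop matrix then simplifies to $A := I - K^\epsilon = (1-k)\,I + k\,L$, where $L$ denotes the nilpotent lower-shift matrix ($L_{i,i-1}=1$, else $0$); stability is automatic since $\rho(A) = 1-k < 1$.

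\textbf{Structural reduction.} Because $I$ and $L$ commute and $L^{n} = 0$, the binomial theorem gives
\[
A^{m} \;=\; \sum_{j=0}^{\min(m,n-1)} \binom{m}{j}(1-k)^{m-j}k^{j}\,L^{j}.
\]
Each $L^{j}$ has exactly $n-j$ ones on the $j$-th subdiagonal (and zeros elsewhere), and the supports of the $L^{j}$'s are pairwise disjoint, so the Frobenius norm decouples across $j$:
\[
\|A^{m}\|_F^{2} \;=\; \sum_{j=0}^{\min(m,n-1)} (n-j)\binom{m}{j}^{2}(1-k)^{2(m-j)}k^{2j}.
\]
Using $\mathrm{trace}(P) = \sum_{m \ge 0} \|A^{m}\|_F^{2}$, the task reduces to bounding this explicit double sum.

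\textbf{Main estimate.} The inner sequence $j \mapsto \binom{m}{j}(1-k)^{m-j}k^{j}$ is the binomial pmf with mean $mk$ and variance $mk(1-k)$, so Stirling gives $\sum_{j}\binom{m}{j}^{2}(1-k)^{2(m-j)}k^{2j} = \Theta(1/\sqrt{m})$ with the squared mass concentrated around $j = mk$. I would split the outer sum at $m^{\star} := (n-1)/k$. For $m \le m^{\star}$ the binomial mass lies inside $\{0,\ldots,n-1\}$ and the weight $(n-j)$ is $\approx n - mk$ at the mode, so $\|A^{m}\|_F^{2} = O((n-mk)/\sqrt{m})$; comparison with $\int_{0}^{n/k}(n-tk)/\sqrt{t}\,dt = (4/3)\,n^{3/2}/\sqrt{k}$ then produces the $O(n^{3/2})$ contribution. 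For $m > m^{\star}$ the binomial mode exceeds the truncation point $n-1$, so by Chernoff/Cramer-type bounds the surviving sum over $j \le n-1$ decays exponentially in $(m-m^{\star})^{2}/m$, contributing only $O(1)$. Combining these shows $\Hcal_2(I - K_{\mathrm{asm}}) \le \Hcal_2(A) = O(n^{3/2})$; optimizing the leading constant $1/(k\sqrt{1-k})$ suggests $k = 2/3$ is best.

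\textbf{Main obstacle.} The principal technical difficulty is making the Gaussian and large-deviation estimates uniform in both $m$ and $n$, especially in the crossover regime $m \approx m^{\star}$ where the binomial is only partially truncated and neither the central-limit approximation nor the exponential tail is sharp on its own. I would handle this by working directly with Stirling bounds on $\binom{m}{j}$, splitting the central bulk from the tail by hand rather than invoking a single asymptotic expansion. A cleaner but more technical alternative is to swap the order of summation and express the inner sum $\sum_{r \ge 0}\binom{r+j}{j}^{2}(1-k)^{2r}$ in closed form as a hypergeometric ${}_{2}F_{1}$, reducing the whole argument to a one-dimensional estimate in $j$.
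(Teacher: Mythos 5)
Your proposal is correct and takes essentially the same route as the paper: both arguments exhibit the feasible fully asymmetric constant-gain controller (the paper fixes $k=\tfrac12$, so $I-K=\tfrac12(I+L)$), expand its powers by the binomial theorem, and bound $\text{trace}(P)=\sum_{m}\|A^m\|_F^2$ by squared-binomial-coefficient sums split at $m\approx n/k$, the only difference being that the paper controls the $m>n$ tail by a dyadic-block monotonicity argument where you invoke Chernoff bounds. (Minor quibble: near the crossover $m\approx n/k$ the truncated binomial mass is $\Theta(1)$ over $\Theta(\sqrt n)$ values of $m$, so the tail contributes $O(n)$ rather than $O(1)$ --- still comfortably within the claimed $O(n^{3/2})$.)
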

%%%%%%
Propositions~\ref{optimal_scaling},\ref{optimal_asym_scaling} show that the network energy of the optimal symmetric controller is higher than the asymmetric one. The underlying cause of this is distance to instability $(1 - \rho(I - K_{\theta}))$ where $\theta \in \{\text{asm}, \text{sym}\}$. To that end, we look at the scaled $\Hcal_2$--norm of the controller. For symmetric case, by using Proposition~\ref{norm_sing}, we have that
\begin{align}
\label{asymp_stable}
\Hcal_2^{S}(I - K_{\text{sym}}) = O(1)
\end{align}
On the other hand for $K_{\text{asm}}$ we have
\begin{equation}
\label{asm_asymp_stable}
\Hcal_2^{S}(I - K_{\text{asm}}) = \Theta(n^{3/2})
\end{equation}
Eq.~\eqref{asymp_stable} suggests that $I - K_{\text{sym}}$ is asymptotically unstable, \textit{i.e.},
%%%
$$\rho(I - K_{\text{sym}}) \rightarrow 1, \hspace{2mm} n \rightarrow \infty$$
%%%%
This also means that if we suppress the effect of distance to instability then there is no growth in network energy (as its size increases). One can conclude that the key to making an undirected network more robust is by increasing the distance to instability. A similar argument does not extend to directed networks. Indeed as discussed before, even if distance to instability is large a network may exhibit arbitrary scaling. Then to make such a network more robust we use Theorem~\ref{symmetrizing}.

 We show how balancing affects network energy in Fig.~\ref{symmetrizing_random}. For this case we generate networks, $A_n$, such that $[A_n]_{ij} \sim \Ncal(0, 1)$ are i.i.d. We scale them so that every network has spectral radius $\gamma < 1$(See~\cite{tao2008random} for details). $\Gamma_n$, as in Definition~\ref{balance_networks}, is generated in a random fashion such that $\rho(\Gamma_n) \leq \gamma$. Then as we vary $\epsilon$ it is observed that $\Hcal_2$--norm reduces as $\epsilon$ increases. 
\begin{figure}[h]
	\centering
	\includegraphics[width=1.2\textwidth]{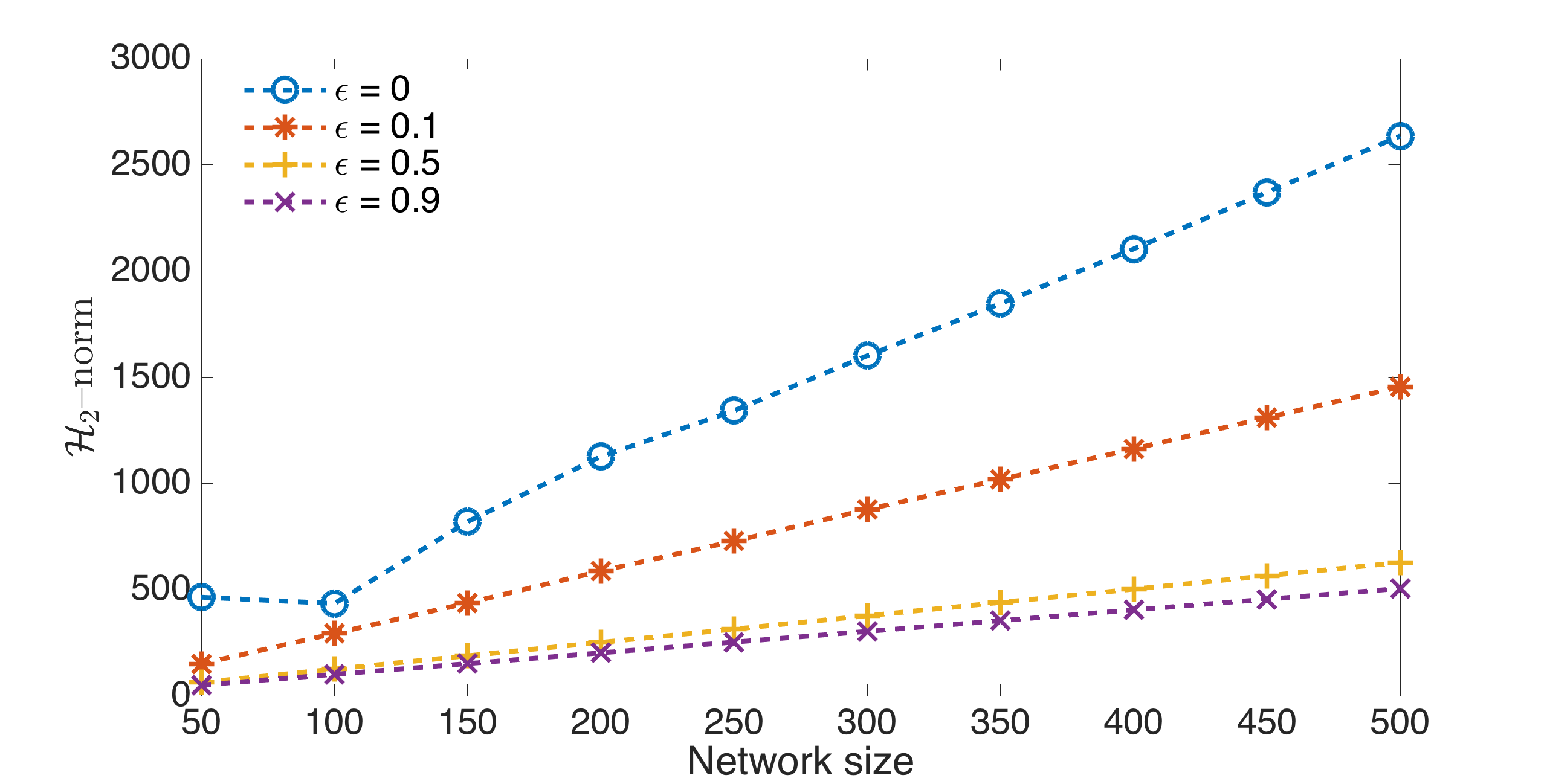}
	\caption{$\Hcal_2$--norm variation with different $\epsilon$--balancing}
	\label{symmetrizing_random}
\end{figure}
The effect of balancing the asymmetric controller is shown in Fig.~\ref{asm_controller}.
\subsection{Robustness as a Tail Risk Measure}
\label{non_normality}
{In many applications it is important to understand \textit{how far} network output deviates from its mean in response to a random shock. Such a ``worst'' case analysis is important in areas such as economics where one is interested in knowing how far the output of a factory, for example, may fall below optimum in response to an input shock. Although both tail risk and expected energy, $\Ecal(\cdot)$--norm, depend on certain network topological properties, the exact relation between them is unclear. In this section we make the connection that scaling in both expected energy and tail probability of output depend solely on the underlying network Gramian.}

In Eq.~\eqref{DT_LTI} assume that the shock $\omega \sim \mathcal{N}(\textbf{0}, I)$, \textit{i.e.}, $\omega_i \sim \Ncal(0, 1)$ and $\{\omega_i\}_{i=1}^n$ are independent. Define the aggregate output $x_{\infty}$ as
\begin{equation}
\label{aggregate_output}
x_{\infty} = \sum_{k=0}^{\infty}\textbf{1}^Tx(k)
\end{equation}
with $x(0) = \textbf{0}$ (for simplicity). In general, $x(0)$ is the predefined output, and the model here captures the deviation from this value in response to a shock. Then it is easy to check that $x_{\infty}$ is a linear combination of independent standard normal random variables and thus a Gaussian random variable(See~\cite{eisenberg2008sum}). We are interested in studying the distribution of aggregate output, $x_{\infty}$.
\\
The assumption that $\omega_i \sim \Ncal(0, 1)$ is merely illustrative. Our results hold for a general class of distributions that we define below:  
{
\begin{definition}
		\label{exp_tails}
	A random variable $X$ has exponential tails if 
		\begin{itemize}
		    \item $\Ebb\lbrack X \rbrack = \mu $
		    \item $\Ebb\lbrack (X - \mu)^2 \rbrack = \sigma^2$
		    \item $\Ebb \lbrack \exp{(bX)} \rbrack < \infty, \, \text{for some }b > 0$
		\end{itemize}
	We denote its distribution by $X \sim \Ecal(\mu, \sigma)$.
\end{definition}
For the remainder of this section, we will impose the following distributional assumption on $\omega$
\begin{assumption}
		\label{exp_tail}
		The shock, $\{\omega_i \sim \Ecal(0, 1)\}_{i=1}^n$, has continuous, symmetric probability density function with full support. Further, $\{\omega_i\}_{i=1}^n$ are independent.
\end{assumption}}
Example of $\Ecal(\mu, \sigma)$ is the standard normal where $\mu=0, \sigma=1$.
	
We focus on ``worst case'' analysis of aggregate output and will be interested in its tail probability behavior. From this perspective, $\Ecal(\mu, \sigma)$ is a richer class compared to the family of Gaussian distributions. At the same time, these distributions exhibit favorable tail behavior (such as the Large Deviation Principle~\cite{dembo_LD} etc.) that make them amenable to such analysis. We impose additional assumptions on the class of networks we consider in this section,	
\begin{assumption}
	\label{economic_assumption}
	Network matrix $A_n$ has the following properties
	\begin{itemize}
			\item $\lbrack A_n \rbrack_{ij} \geq 0$.
			\item There exists $v$ such that 
			\[
			A_nv = \lambda_{PF} v
			\]
			where $v_{\max}/v_{\min} = O(1)$ and $\lambda_{PF}$ is the Perron root.
	\end{itemize}
\end{assumption}
Examples of such networks include networks with degree normalization introduced in Section~\ref{sec:Preliminaries}. These fall under the more general class of networks where $A_n = \gamma W_n$ ($0 < \gamma < 1$) and $W_n$ is row--stochastic. Note that this assumption is weaker than those required for irreducibility or aperiodicity. For example the star network, $S_n$, is neither aperiodic nor irreducible, however $v_{\max} = v_{\min}$ (as $v = \textbf{1}$) in that case. 
\\
We can now define tail risk in a network as follows
\begin{definition}[Tail Risk]
	\label{tail_risk}
	For $x_{\infty}$ in response to $\{\omega_i \sim \Ecal(0, 1)\}_{i=1}^n$, define
	\begin{equation*}
	R_n(z) = -(1/n)\log{(\Pbb(|x_{\infty}| >  nz))}
	\end{equation*}
	Then $A_n$ exhibits \textbf{no tail risk} if there exists $z > 0$ such that $\lim_{n \rightarrow \infty} R_n(z) >0$.
\end{definition} 
$R_n(\cdot)$ is a measure of how likely it is for the average aggregate output to fall below a certain threshold $-z$ in response to a shock. In this context, a robust large network is resistant to random shocks, \textit{i.e.}, output is not affected too much.
{
\begin{example}
	Consider the special case when $A_n = \mu I_{n \times n}$. A simple computation shows us that 
	$$x_{\infty} = \frac{\sum_{i=1}^n \omega_i}{1 - \mu}$$
	Then by Cramer's theorem we have that $\lim_{n \rightarrow \infty}R_n(z) > 0$. 
	
	Next, when $A_n = \mu \lbrack 1, 1, \ldots, 1\rbrack^T \lbrack 1, 0, \ldots, 0\rbrack $, \textit{i.e.}, 
	$$x_i(k+1) = \mu x_1(k)$$ 
	then 
	$$x_{\infty} = \frac{(n-1)\mu + 1}{1-\mu}\omega_1 + \sum_{i=2}^n \omega_i$$ 
	and then it is easy to show that
	$$\lim_{n \rightarrow \infty}R_n(z) = 0$$ 
    This follows because 
    \begin{align*}
        &\Pbb(|x_{\infty}| > nz) \geq \Pbb\Big(\frac{(n-1)\mu + 1}{1-\mu}\omega_1 > nz, \sum_{i=2}^n \omega_i \geq 0\Big) \\
        &=\Pbb\Big(\frac{(n-1)\mu + 1}{1-\mu}\omega_1 > nz \Big)\Pbb\Big( \frac{\sum_{i=2}^n  \omega_i}{\sqrt{n-1}} \geq 0\Big) \\
        &\lim_{n \rightarrow \infty} \Pbb(|x_{\infty}| > nz) = \frac{1}{2}\Pbb\Big(\frac{\mu}{1-\mu} \omega_1 > z\Big) > 0
    \end{align*}
	In the first case, the shock is averaged out across the nodes: and the network exhibits no risk. In the second case output of each node, $x_i$, is heavily dependent on the other: and network exhibits tail risk. 
\end{example}}
{
In large networks, a fast decay of ``failure'' or ``shortfall'' probability with network dimension is ideal. $R_n(\cdot)$ captures this notion of risk in networks -- the more dependent individual components become, more likely the failure. Another interpretation that follows from Definition~\ref{tail_risk} is that whenever we have 
$$\frac{x_{\infty}}{\sqrt{n}} \sim \Ncal(0, \sigma^2_0)$$
for some $\sigma_0 = O(1)$, the network does not exhibit tail risk. We will show in Theorem~\ref{daron} that the converse is true as well. For example, the tail of $x_{\infty}/\sqrt{n}$ obtained from the star network $S_n$ becomes wider as $n$ increases and it will be shown in Proposition~\ref{macro} that star networks exhibit tail risk(See Fig.~\ref{tail_dist}). 
\begin{figure}[h]
	\centering
	\includegraphics[width=\columnwidth]{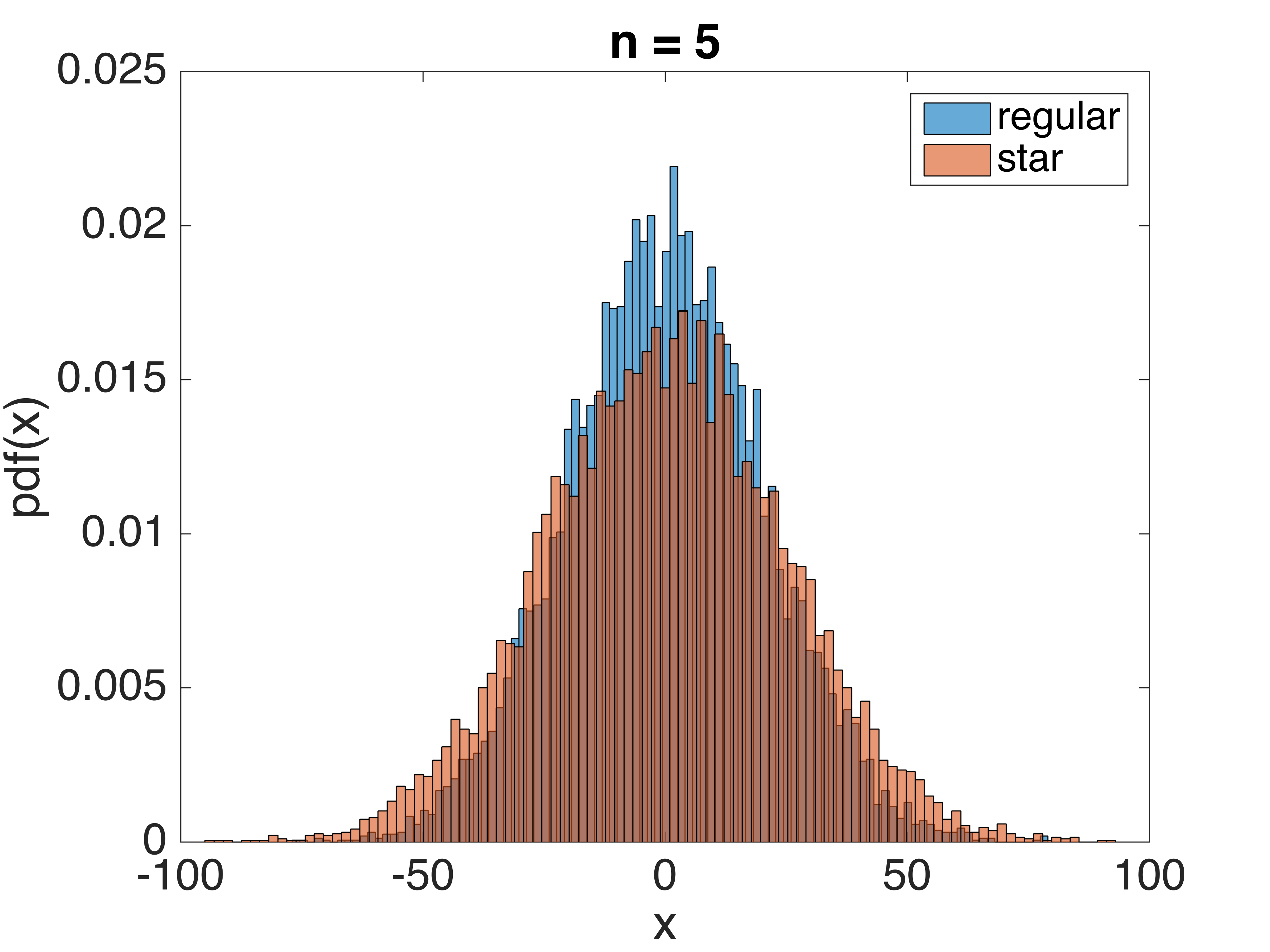}
	\includegraphics[width=\columnwidth]{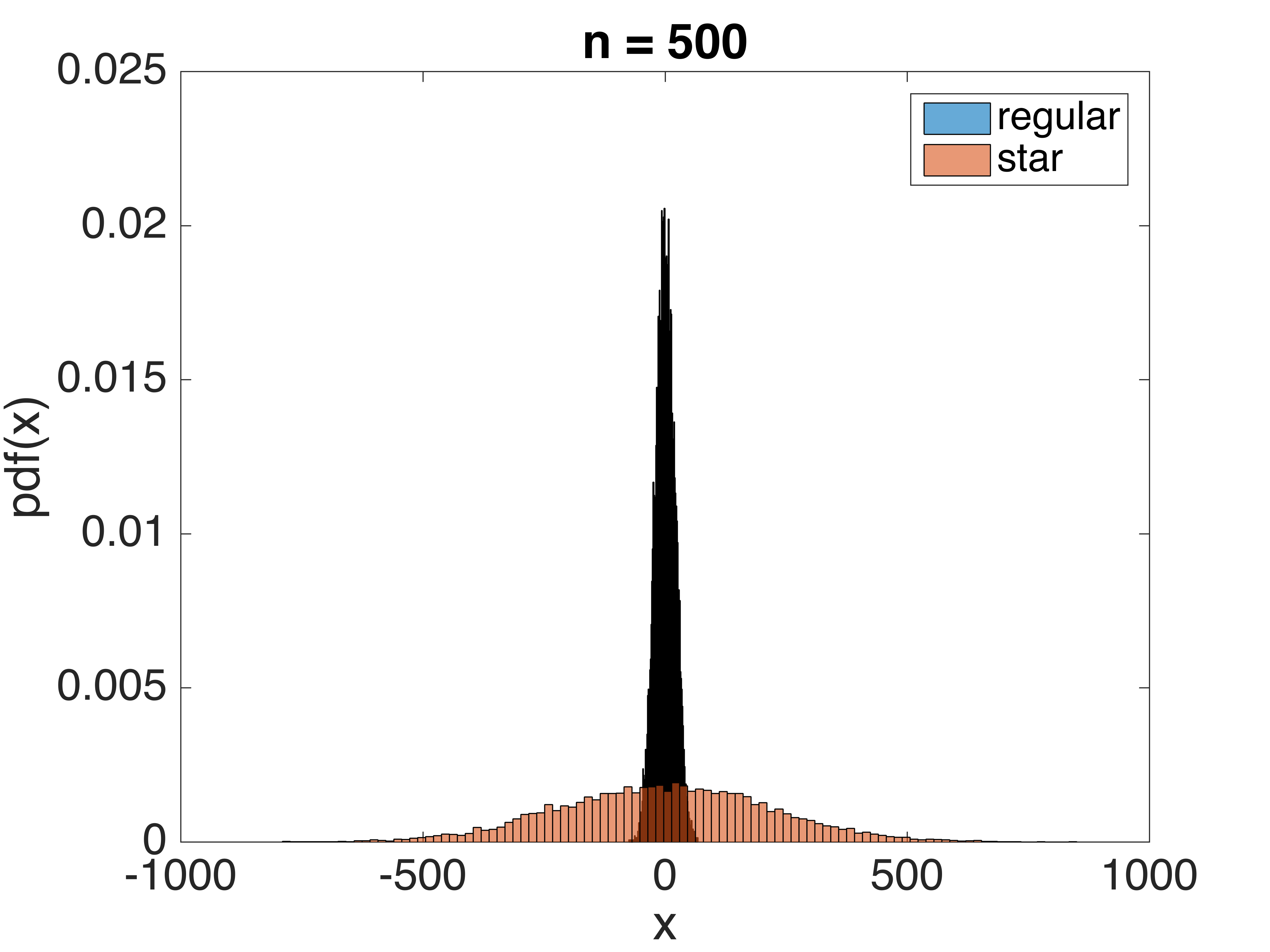}
	\caption{pdf of $x_{\infty}/\sqrt{n}$ as $n$ increases for regular and star networks}
	\label{tail_dist}
\end{figure}
The reason behind this is that $\sigma(x_{\infty}) \rightarrow \infty$ as $n \rightarrow \infty$ for $S_n$. This notion of comparison to Gaussian tail probability was introduced by Acemoglu et. al.(See~\cite{alireza_daron},\cite{daron_aggreg}) to measure disruption in economic production networks. To formalize this first we define ``tail risk relative to standard normal'' for random variables as this will give us some understanding of the behavior we intend to capture. 
\begin{definition}
	\label{z_tailrisk}
	Random variable $X$ exhibits \textbf{tail risk relative to standard normal distribution} if $\lim_{\tau \rightarrow \infty} r_X(\tau) = 0$ where 
		\begin{align*}
	r_X(\tau) = \dfrac{\log{(\Pbb(Z_X  < -\tau))}}{\log{(\Phi(-\tau))}}
	\end{align*}
	where 
	$$Z_X = \frac{X - \mu(X)}{\sigma(X)}$$
\end{definition}
Definition~\ref{z_tailrisk} implies that whenever $\tau$--tail probability (for very large $\tau$) of $X$ substantially exceeds that of Gaussian random variable, $X$ exhibits tail risk relative to standard normal. $r_X(\tau)$ can, in general, be hard to visualize. A thumb rule for any $X \sim \Ecal(0, 1)$ and large enough $\tau > 0$ is to inspect the tails of the pdf of $X$. This follows from: item 3 in Definition~\ref{exp_tails}, finiteness of moment generating function(See Chernoff's bound Theorem 2.29 Chapter 2~\cite{mdp2015}), and the approximation, for every $x > \tau$, that is shown below
\[
\Pbb(Z_X < -x) \approx e^{-\Omega(x)} \approx \int_{-\infty}^{-x} f_X(z) dz \approx f_X(-x)
\]
Here $f_X(z)$ is the pdf of $X$ at $z$. 
\begin{figure}[h]
	\centering
 	\includegraphics[width=1.1\textwidth]{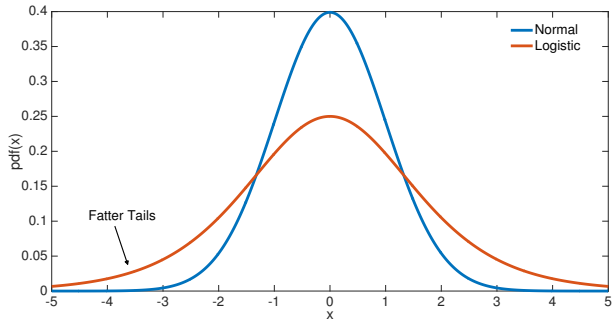}
	\caption{pdf of Logistic (red) vs Normal tails (blue); Logistic distribution has fatter tails and exhibits tail risk relative to standard normal}
	\label{tails}
\end{figure}
Figure~\ref{tails} shows a pair of distributions in the family $\Ecal(0, 1)$: Gaussian and logistic distribution. The pdf of logistic distribution, $f_L(x)$, is given by 
$$f_L(x) = \frac{e^{-x}}{(1+ e^{-x})^2}$$
Logistic distribution exhibits tail risk relative to standard normal distribution due to ``fatter tails''. Now, we extend the previous definition to the context of networks by setting $\tau = \sqrt{n}$. This captures size dependent scaling and follows the definition of macroeconomic tail risk in~\cite{alireza_daron},\cite{daron_aggreg}.
\begin{definition}[Macroeconomic Tail Risk]
	\label{macro_risk}
	The network, $A_n$, exhibits tail risk relative to the standard normal (or \textbf{macroeconomic tail risk}) if $\lim_{n \rightarrow \infty} R_{x_{\infty}}(\sqrt{n}) = 0$ where 
	\begin{align*}
	R_{x_{\infty}}(\tau) = \dfrac{\log{(\Pbb(Z_{x_{\infty}} < -\tau ))}}{\log{(\Phi(-\tau))}}
	\end{align*}
	where 
	$$ Z_{x_{\infty}} = \frac{x_{\infty} - \mu(x_{\infty})}{\sigma(x_{\infty})}$$
\end{definition}}
Tail risk differs from macroeconomic tail risk in the sense that macroeconomic tail risk (after normalization by its mean and variance) is measured relative to the standard normal distribution. Both notions of risk are applied extensively in economic networks and financial systems and as a result, it becomes important to understand how they relate to each other.
\begin{theorem}
	\label{daron_equivalence}
	Under Assumption~\ref{exp_tail},\ref{economic_assumption}, $A_n$ exhibits no tail risk if and only if 
	\[
		 ||P(A_n/\sqrt{\lambda_{PF}})||_1 = \Theta(1)
	\]
	where $\lambda_{PF}$ is its Perron root.
\end{theorem}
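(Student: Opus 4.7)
The plan is to turn the tail-risk statement into a large-deviations question for a weighted sum of i.i.d.\ exponentially tailed variables, and then identify the Gramian quantity with the relevant norms of the weight vector. Since $x(0)=\mathbf{0}$ and $x(k)=A_n^{k-1}\omega$ for $k\ge 1$, summing gives
\[
x_\infty\;=\;\mathbf{1}^T(I-A_n)^{-1}\omega\;=\;u^T\omega,\qquad u=\sum_{k\ge 0}(A_n^T)^k\mathbf{1}.
\]
Because $\{\omega_i\}$ are independent, zero-mean, unit-variance and symmetric, $\mu(x_\infty)=0$ and $\sigma^2(x_\infty)=\|u\|_2^2$, so the whole question reduces to how fast $\Pbb(|u^T\omega|>nz)$ decays in $n$, as a functional of $u$.

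The central step is to match $\|P(A_n/\sqrt{\lambda_{PF}})\|_1$ to $u$. Writing $B=A_n/\sqrt{\lambda_{PF}}$, Assumption~\ref{economic_assumption} gives $A_nv=\lambda_{PF}v$ with $v\asymp\mathbf{1}$ entrywise, and a monotone-cone sandwich using $A_n\ge 0$ and $v_{\min}\mathbf{1}\le v\le v_{\max}\mathbf{1}$ yields $A_n^k\mathbf{1}\asymp\lambda_{PF}^k\mathbf{1}$, uniformly in $k$ and $n$; equivalently $B^k\mathbf{1}\asymp\lambda_{PF}^{k/2}\mathbf{1}$. Substituting into the Gramian identity $P(B)\mathbf{1}=\sum_k(B^T)^kB^k\mathbf{1}$ then gives $P(B)\mathbf{1}\asymp\sum_k\lambda_{PF}^{k/2}(B^T)^k\mathbf{1}=u$ entrywise, and since $P(B)$ is symmetric and entrywise non-negative, $\|P(B)\|_1=\|P(B)\mathbf{1}\|_\infty=\Theta(\|u\|_\infty)$; a parallel computation on the scalar $\mathbf{1}^T P(B)\mathbf{1}$ produces $\|u\|_2^2=\Theta(n\|P(B)\|_1)$. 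Thus $\|P(B)\|_1=\Theta(1)$ is equivalent to the joint condition $\|u\|_\infty=O(1)$ and $\|u\|_2^2=O(n)$.

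For the forward direction, Assumption~\ref{exp_tail} gives $\log M_\omega(t)\le Ct^2$ for $|t|\le b$, so for every $\theta\le b/\|u\|_\infty$ the Chernoff bound yields $\Pbb(x_\infty>nz)\le\exp(-\theta nz+C\theta^2\|u\|_2^2)$; optimising in $\theta$ with $\|u\|_\infty=O(1)$ and $\|u\|_2^2=O(n)$ produces a Bernstein estimate $\Pbb(|x_\infty|>nz)\le e^{-c(z)n}$, and hence $\liminf_n R_n(z)>0$. For the converse, $\|P(B)\|_1\to\infty$ forces $\|u\|_\infty\to\infty$ or $\|u\|_2^2/n\to\infty$; in the first case, picking $j^*=\argmax_i|u_i|$ and using independence and symmetry of $\{\omega_i\}$ gives $\Pbb(|x_\infty|>nz)\ge\tfrac12\,\Pbb(|\omega_{j^*}|>nz/|u_{j^*}|)$, and the full-support density of $\omega_{j^*}$ together with finiteness of $\Ebb[e^{b\omega_{j^*}}]$ makes this lower bound decay strictly slower than $e^{-cn}$; in the second case, a Paley--Zygmund / Gärtner--Ellis lower bound on $u^T\omega$, again using the matching quadratic behaviour of $\log M_\omega$ near $0$, gives the same obstruction. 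The main obstacle is the quantitative two-sided Perron estimate $A_n^k\mathbf{1}\asymp\lambda_{PF}^k\mathbf{1}$ with constants uniform in $k$ and $n$: the upper bound is immediate from $\mathbf{1}\le(1/v_{\min})v$, but the matching lower bound has to transfer balance of the \emph{right} eigenvector $v$ into control of the left-iterated vectors $(A_n^T)^k\mathbf{1}$, using only non-negativity and the bounded ratio $v_{\max}/v_{\min}$. Without this uniformity the identification $\|P(B)\|_1\leftrightarrow(\|u\|_\infty,\|u\|_2^2/n)$ degenerates by $n$-dependent factors and the equivalence in the statement collapses, so this Perron step is where essentially all the technical work sits.
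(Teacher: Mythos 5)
Your proposal follows the paper's proof route step for step: the same reduction $x_\infty=\mathbf{1}^T(I-A_n)^{-1}\omega=u^T\omega$ with $u$ the vector of column sums; the same identification of $\|P(A_n/\sqrt{\lambda_{PF}})\|_1$ with $\|u\|_\infty$ via balance of the Perron vector (this is the paper's Lemma~\ref{equivalenceLyap}, proved there by citing Ostrowski's row-sum inequality --- your entrywise sandwich $v_{\min}\mathbf{1}\le v\le v_{\max}\mathbf{1}\Rightarrow A_n^k\mathbf{1}\asymp\lambda_{PF}^k\mathbf{1}$ is a self-contained derivation of exactly that fact, and is easier than you fear: no separate control of $(A_n^T)^k\mathbf{1}$ is needed, since those factors pass through $P(B)\mathbf{1}=\sum_k(B^T)^k(B^k\mathbf{1})\asymp\sum_k(A_n^T)^k\mathbf{1}=u$ untouched); a Chernoff/Bernstein bound for the ``no tail risk'' direction (which you execute more carefully than the paper's one-line appeal to the LDP); and the largest-coordinate-plus-symmetry lower bound for the converse. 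Two side claims are off but harmless: $\|u\|_2^2=\Theta(n\|P(B)\|_1)$ fails in general (only $\|u\|_2^2\le\|u\|_1\|u\|_\infty=O(n\|u\|_\infty)$ holds, which is all you use), and your ``second case'' $\|u\|_2^2/n\to\infty$ never arises separately, since $\|P(B)\|_1\to\infty$ already forces $\|u\|_\infty\to\infty$.

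The one step you cannot wave at is the final estimate in the converse. You assert that full support together with finiteness of $\Ebb[e^{b\omega}]$ makes $\Pbb(|\omega_{j^*}|>nz/|u_{j^*}|)$ decay strictly slower than $e^{-cn}$. Finiteness of the moment generating function bounds the tail from \emph{above}; it gives no lower bound, so this inference runs the wrong way. What is actually required is a matching lower bound $\Pbb(\omega>t)\ge e^{-Ct}$ --- precisely what the paper uses implicitly when it writes $\Pbb(|x_\infty|>nz)>\exp(-nz/c_{\max})$ --- and this does not follow from Assumption~\ref{exp_tail}. For Gaussian shocks (the paper's own leading example of $\Ecal(0,1)$) and a network with $1\ll\|u\|_\infty\ll\sqrt{n}$, the single-coordinate bound only yields $\Pbb\gtrsim e^{-(nz/\|u\|_\infty)^2/2}$, which decays faster than $e^{-cn}$ and certifies nothing; indeed in that regime $x_\infty\sim\Ncal(0,\|u\|_2^2)$ with $\|u\|_2^2=\Theta(n)$, so $R_n(z)$ converges to a positive limit and there is no tail risk even though $\|P(B)\|_1\to\infty$. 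So your converse has a genuine gap at this point, but it is the same gap as in the paper's own proof rather than one you introduced; closing it requires strengthening the shock assumption to include a two-sided exponential tail estimate.
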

We showed that a lack of tail risk is equivalent to constant scaling in $\Lcal_1$--norm of an appropriate system Gramian. In the following discussion we show that tail risk coincides with macroeconomic tail risk and its application to a real network setting. \vspace{3mm}\\
%%%%%%%%%%
\textbf{Tail Risk in Input Output Networks}\\
{
In this section we discuss applications of our results to the $2007$ U.S. Commodity market. The data for this has been taken from~\cite{alireza_daron}. It consists of a network with $379$ sectors where commodity generated by each sector is either utilized by itself or fed as input to another sector. These interconnections generate a network matrix, $A_{379}$. Some properties of this network are 
\begin{itemize}
	\item $[A_{379}]_{ij} \geq 0$ 
	\item $\rho(A_{379}) \simeq 0.51$
	\item $A_{379}$ (almost) satisfies Assumption~\ref{economic_assumption}.
\end{itemize}
Now it can be shown that 
\begin{equation}
\label{deficit_eq}
y_{t+1} =  A_{379} y_t + \omega_t
\end{equation}
Here $y_t$ is output deficit at time $t$ in response to an input shock $\omega_t$. Details of the economic model is given in Section II, III up to Eq. 25 in~\cite{long1983real}, Section I, VII in~\cite{alireza_daron}. This is summarized briefly in Section~\ref{economic_setup}.
\begin{figure}[h]
	\centering
	\includegraphics[width=1.2\textwidth]{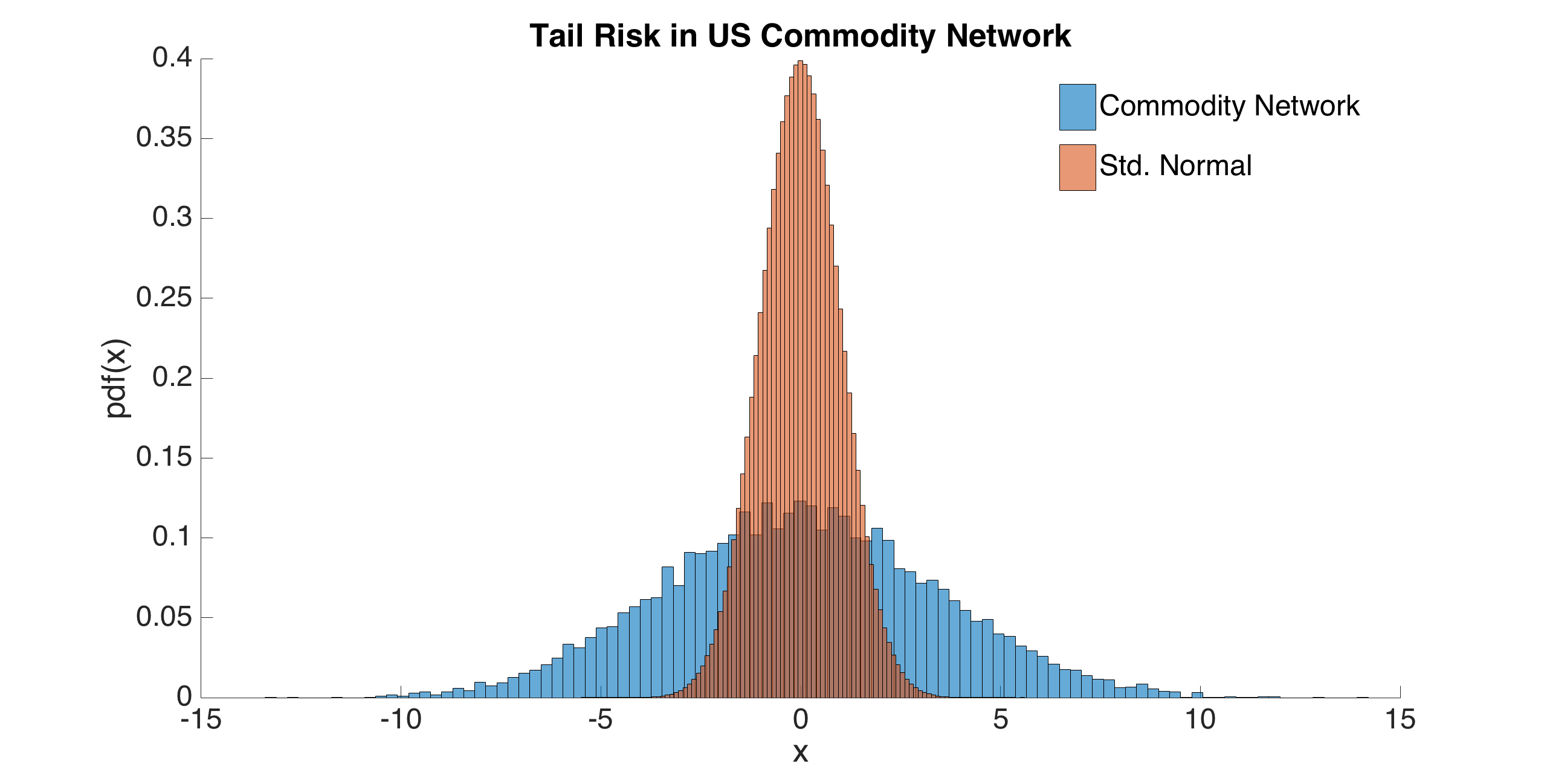}
	\caption{pdf of $x_{\infty}/\sqrt{n}$ for the US Commodity network vs. standard normal; US Commodity network exhibits tail risk as pdf tails of $x_{\infty}/\sqrt{n}$ are fatter than std. normal}
	\label{us_tails}
\end{figure}
Then $x_{\infty} = \sum_{t=0}^{\infty} \textbf{1}^T y_t $ is the aggregate deficit. The aggregate deficit is useful in estimating business cycles and computing gross domestic product, or GDP (See Proof of Proposition 1~\cite{alireza_daron}). We observe that the tails of aggregate output deficit are indeed fatter compared to the standard normal distribution. The question is how this observation can be related to Theorem~\ref{daron_equivalence}. In the next theorem we demonstrate that, under mild conditions,  a network exhibits tail risk whenever $x_{\infty}$ has fatter tails. 

\begin{theorem}
	\label{daron}
	Under Assumption~\ref{exp_tail},\ref{economic_assumption}, $A_n$ exhibits no macroeconomic tail risk if and only if it exhibits no tail risk.
\end{theorem}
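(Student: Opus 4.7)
I would reduce the claim to the Gramian characterization of Theorem~\ref{daron_equivalence}. That theorem already equates absence of tail risk with $\|P(A_n/\sqrt{\lambda_{PF}})\|_1 = \Theta(1)$, so it suffices to prove the same equivalence for absence of macroeconomic tail risk. First I would rewrite the aggregate output as a linear functional of the shock: summing the trajectory of Eq.~\eqref{DT_LTI} gives $x_\infty = \textbf{1}^T(I-A_n)^{-1}\omega = v^T\omega$ where $v := (I-A_n^T)^{-1}\textbf{1}$. Since Assumption~\ref{exp_tail} makes $\omega$ iid with zero mean and unit variance, $\mu(x_\infty)=0$ and $\sigma^2(x_\infty)=\|v\|_2^2$, so $Z_{x_\infty}$ is a unit-$\ell_2$-norm weighted sum of iid exponentially-tailed random variables.

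Next I would connect these $v$-level quantities to the Gramian. Expanding $v = \sum_{k\ge 0}(A_n^T)^k\textbf{1}$ and $P(A_n/\sqrt{\lambda_{PF}}) = \sum_{k\ge 0}\lambda_{PF}^{-k}(A_n^T)^k A_n^k$, and using that the right Perron eigenvector $w$ of Assumption~\ref{economic_assumption} satisfies $w \asymp \textbf{1}$ entrywise, I would establish a bridging lemma: $\|v\|_2^2 = \Theta(n)$ together with the Lyapunov ratio $\max_i |v_i|/\|v\|_2 = O(n^{-1/2})$, if and only if $\|P(A_n/\sqrt{\lambda_{PF}})\|_1 = \Theta(1)$. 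The balance ratio $w_{\max}/w_{\min} = O(1)$ is used to dominate $(A_n^T)^k\textbf{1}$ componentwise by $\lambda_{PF}^k$-weighted row-sums of $A_n^k$, closing the gap between the left- and right-sided information in $A_n$.

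With this bridge in hand, the tail of $Z_{x_\infty}$ at scale $\sqrt n$ comes from a moderate-deviation Chernoff bound. Assumption~\ref{exp_tail} gives $\log M_\omega(\xi) = \xi^2/2 + O(\xi^3)$ near $0$. When the Lyapunov condition holds, the Fenchel conjugate $\Lambda_n^{*}(\sqrt n) = n/2 + o(n)$ and a matching Cramer lower bound yield $\log \Pbb(Z_{x_\infty} < -\sqrt n) \sim \log \Phi(-\sqrt n)$, so $R_{x_\infty}(\sqrt n) \to 1 > 0$: no macroeconomic tail risk. Conversely, when the Lyapunov ratio is bounded away from $0$, $Z_{x_\infty}$ is dominated by a single coordinate $(v_{i^{\ast}}/\|v\|_2)\omega_{i^{\ast}}$, whose tail at $\sqrt n$ decays only like $\exp(-c\sqrt n)$ rather than $\exp(-cn)$; hence $R_{x_\infty}(\sqrt n) \to 0$ and macroeconomic tail risk holds. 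Chaining these estimates with the bridging lemma and Theorem~\ref{daron_equivalence} completes the equivalence.

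The main obstacle is the bridging lemma: Assumption~\ref{economic_assumption} controls only the \emph{right} Perron eigenvector, whereas $v$ is a sum of powers of $A_n^T$. The star $A_n = \mu\textbf{1}\textbf{e}_1^T$ is a cautionary example, with $w = \textbf{1}$ perfectly balanced yet $v$ so concentrated in one coordinate that $\|v\|_2^2 = \Theta(n^2)$ and the Lyapunov ratio is $\Theta(1)$; this is precisely the regime where both conditions simultaneously fail and both notions of risk coincide, but pinning this down up to constants requires careful bookkeeping of the row-sums of $(A_n^T)^k$ against $\lambda_{PF}^k$ via the balance constant $w_{\max}/w_{\min}$.
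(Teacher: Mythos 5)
Your deterministic reduction is exactly the paper's: the paper also writes $x_\infty = \textbf{1}^T(I-A_n)^{-1}\omega$, uses Assumption~\ref{economic_assumption} to pin the row sums of $(I-A_n)^{-1}$ at $\Theta(1)$ (hence total mass $\Theta(n)$), and concludes that the Lyapunov condition $\lim_n \|v\|_\infty\sqrt{n}/\|v\|_2 < \infty$ on the column-sum vector $v$ is equivalent to $\|(I-A_n)^{-1}\|_1 = \Theta(1)$, which Lemma~\ref{equivalenceLyap} converts into the Gramian condition of Theorem~\ref{daron_equivalence}. Your ``bridging lemma'' is precisely this step, and your bookkeeping of why $\|v\|_2^2=\Theta(n)$ is forced is correct. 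The one place you genuinely diverge is the probabilistic characterization itself: the paper does not prove that macroeconomic tail risk is equivalent to $\lim_n\|v\|_\infty\sqrt{n}/\|v\|_2=\infty$ --- it imports this wholesale as Proposition 5 of Acemoglu et al., whereas you sketch a self-contained moderate-deviations argument.

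That substituted step is where there is a gap. Your forward direction (Lyapunov ratio $O(n^{-1/2})$ implies $\liminf R_{x_\infty}(\sqrt n)>0$) is salvageable by a Chernoff bound with a small tilting parameter $\xi=\epsilon\sqrt n$, though the sharper claim $R_{x_\infty}(\sqrt n)\to 1$ needs a matching lower bound you have not supplied, and the cubic correction $O(\xi^3 n^{3/2}\sum_i|a_i|^3)$ is of the same order as $n$ rather than $o(n)$, so ``$\Lambda_n^*(\sqrt n)=n/2+o(n)$'' does not follow from the mgf expansion alone. The converse is the real problem: you assert that when some weight $|v_{i^*}|/\|v\|_2$ stays bounded away from zero, the dominant coordinate's tail at level $\sqrt n$ decays ``only like $\exp(-c\sqrt n)$.'' Definition~\ref{exp_tails} requires only that $\Ebb[\exp(bX)]<\infty$ for \emph{some} $b>0$; it places no lower bound on the tails. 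Gaussian shocks satisfy Assumption~\ref{exp_tail}, and for them $Z_{x_\infty}$ is exactly standard normal for every network, so $R_{x_\infty}(\tau)\equiv 1$ and your claimed $e^{-c\sqrt n}$ lower bound fails. Your argument therefore needs the shock density to be bounded below by a genuine exponential in the tail, which is an additional hypothesis not present in the paper's assumptions; the paper sidesteps this entirely by citing the external proposition, whose precise hypotheses absorb the issue. If you want a self-contained proof, you must either add such a lower-tail condition explicitly or restate what ``macroeconomic tail risk'' quantifies over (in the cited work it is defined relative to a family of shock distributions, not a single one).
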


Theorem~\ref{daron} suggests that the U.S. Commodity market network exhibits tail risk and this observation is corroborated through Fig. 1 of~\cite{alireza_daron}.} Finally, through Theorems~\ref{daron_equivalence},\ref{daron} we show that different notions of risk can be represented as robustness measures, \textit{i.e.}, manifestation of tail risk in a network is equivalent to poor scaling of some Gramian norm. We conclude with some examples of network topologies and comment on their robustness (to tail risk).

\begin{proposition}
	\label{macro}
The regular network, $R_n$, and cycle network, $C_n$, exhibits no macroeconomic tail risk. On the other hand, star network, $S_n$, exhibits macroeconomic tail risk.
\end{proposition}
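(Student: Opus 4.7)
The plan is to chain Theorem~\ref{daron} and Theorem~\ref{daron_equivalence}: macroeconomic tail risk is equivalent to tail risk, which in turn is equivalent to $\|P(A_n/\sqrt{\lambda_{PF}})\|_1$ being unbounded in $n$. I would first check Assumption~\ref{economic_assumption} for all three topologies. By inspection of the definitions of $R_n$, $C_n$, $S_n$, every row sum equals $\gamma$, so $A_n \mathbf{1} = \gamma \mathbf{1}$; hence $\lambda_{PF} = \gamma$ with Perron eigenvector $\mathbf{1}$, giving $v_{\max}/v_{\min} = 1 = O(1)$ in each case. Writing $B_n \eqdef A_n/\sqrt{\gamma}$, the proposition reduces to evaluating $\|P(B_n)\|_1$.

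For the undirected topologies $R_n$ and $C_n$, $B_n$ is symmetric with non-negative entries, so
\[
P(B_n) = \sum_{k=0}^{\infty} (B_n^T)^k B_n^k = \sum_{k=0}^{\infty} B_n^{2k} = (I - B_n^2)^{-1}
\]
also has non-negative entries, and hence the induced $\|\cdot\|_1$ equals the maximum column sum. Since $\mathbf{1}$ is an eigenvector of $B_n$ with eigenvalue $\sqrt{\gamma}$, it is an eigenvector of $P(B_n)$ with eigenvalue $1/(1-\gamma)$; symmetry then forces every column sum to equal $1/(1-\gamma) = \Theta(1)$. Theorems~\ref{daron_equivalence} and \ref{daron} immediately preclude macroeconomic tail risk.

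The substantive obstacle is the star network, since $B_n = S_n/\sqrt{\gamma}$ is not symmetric and $(B_n^T)^k B_n^k \neq B_n^{2k}$. My approach is to compute $B_n^k$ in closed form by induction, exploiting the bipartite center/leaf structure: for $k = 2m \geq 2$,
\[
B_n^{2m} = \gamma^{m}\, e_1 e_1^T + \tfrac{\gamma^{m}}{n-1}(\mathbf{1}-e_1)(\mathbf{1}-e_1)^T,
\]
and for $k = 2m+1$,
\[
B_n^{2m+1} = \gamma^{m+1/2}\Bigl[(\mathbf{1}-e_1)e_1^T + \tfrac{1}{n-1}\, e_1(\mathbf{1}-e_1)^T\Bigr].
\]
Multiplying $(B_n^T)^k B_n^k$ using $(\mathbf{1}-e_1)^T(\mathbf{1}-e_1) = n-1$ and $e_1^T(\mathbf{1}-e_1) = 0$ produces, for each odd $k = 2m+1$, the rank-two matrix $(n-1)\gamma^{2m+1} e_1 e_1^T + \tfrac{\gamma^{2m+1}}{(n-1)^2}(\mathbf{1}-e_1)(\mathbf{1}-e_1)^T$, with an analogous expression for even $k$.

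The key observation is the $(n-1)$ amplification sitting in the $e_1 e_1^T$ coefficient of every odd-$k$ contribution. Summing the geometric series, the first column sum of $P(B_n)$ equals
\[
1 + \tfrac{\gamma^2}{1-\gamma^2} + \tfrac{(n-1)\gamma}{1-\gamma^2} = \Theta(n),
\]
while all other columns remain bounded. Hence $\|P(B_n)\|_1 = \Theta(n)$, violating the $\Theta(1)$ condition of Theorem~\ref{daron_equivalence}; applying Theorem~\ref{daron} then yields macroeconomic tail risk for $S_n$. The real difficulty is purely computational bookkeeping: tracking the two rank-one building blocks through the non-symmetric Gramian expansion so that the leaf-to-center amplification is correctly captured in the $(1,1)$ block without confusing the even and odd parities.
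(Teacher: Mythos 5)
Your proof is correct, but it follows a genuinely different route from the paper's. The paper never touches the Gramian directly: for $R_n$ and $C_n$ it observes that the \emph{left} Perron eigenvector is uniform ($\pi^T = (1/n)\textbf{1}^T$, so $\pi_{\max}/\pi_{\min}=1$) and invokes the eigenvector-centrality bound of Corollary~\ref{centrality} to conclude $||(I-A_n)^{-1}||_1 = \Theta(1)$; for $S_n$ it simply notes that the first column sum gives $||S_n||_1 = \Theta(n)$, which (by non-negativity of the Neumann series) forces $||(I-S_n)^{-1}||_1 = \Omega(n)$. The translation back to the Gramian condition of Theorem~\ref{daron_equivalence} is delegated entirely to Lemma~\ref{equivalenceLyap}. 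You instead verify the Gramian condition head-on: for the undirected cases you use $P(B_n) = (I-B_n^2)^{-1}$ together with the all-ones right eigenvector and symmetry to pin every column sum at $1/(1-\gamma)$, and for the star you compute all powers of $B_n = S_n/\sqrt{\gamma}$ in closed form via the rank-two $e_1$/$(\textbf{1}-e_1)$ decomposition, sum the geometric series, and exhibit the $\Theta(n)$ first column sum of $P(B_n)$ explicitly (your formulas for $B_n^{2m}$, $B_n^{2m+1}$, and the resulting $(n-1)\gamma^{2m+1}e_1e_1^T$ terms all check out, as does the idempotence that makes the even powers collapse). What the paper's route buys is brevity and reusability --- the centrality criterion disposes of any network with a uniform left eigenvector in one line and sidesteps the asymmetry of $S_n$ entirely; what your route buys is self-containedness (no reliance on Lemma~\ref{equivalenceLyap} or Corollary~\ref{centrality}) and an exact expression for the star's Gramian that makes the leaf-to-center amplification mechanism visible rather than inferred. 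Both correctly verify Assumption~\ref{economic_assumption} via the constant row sums, and both chain Theorems~\ref{daron_equivalence} and~\ref{daron} at the end.
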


It is perhaps no coincidence that the star network which has a high degree centrality (See~\cite{newman2008mathematics}) demonstrates tail risk. A generalization of degree centrality is eigenvector centrality.
\begin{definition}
	An irreducible network matrix, $A_n$, demonstrates no eigenvector centrality if 
	\[
	\pi_{\max}/\pi_{\min} = \Theta(1)
	\]
	where $\pi^T A_n = \pi^T$.
\end{definition}
%%%
{
\begin{corollary}
	\label{centrality}
	If an irreducible network matrix, $A_n$, has no eigenvector centrality then it exhibits no tail risk.
\end{corollary}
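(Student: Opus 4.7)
The plan is to reduce the claim via Theorem~\ref{daron_equivalence} to an estimate on the scaled Gramian, and then control that Gramian by sandwiching $A_n^k$ between one-dimensional bounds obtained from both the right and left Perron eigenvectors.

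\textbf{Reduction.} By Theorem~\ref{daron_equivalence}, establishing no tail risk is equivalent to showing $\|P(A_n/\sqrt{\lambda_{PF}})\|_1 = \Theta(1)$. Writing $\tilde{A} = A_n/\sqrt{\lambda_{PF}}$, the Lyapunov series gives
\[
P(\tilde{A}) \;=\; \sum_{k=0}^{\infty} (A_n^T)^k A_n^k \,\lambda_{PF}^{-k}.
\]
Because $A_n$ is entrywise non-negative (Assumption~\ref{economic_assumption}), each term is entrywise non-negative, so $\|P(\tilde A)\|_1$ equals the maximum column sum of $P(\tilde A)$. The $\Omega(1)$ lower bound is immediate: $P(\tilde A) \succeq I$ forces $[P(\tilde A)]_{jj} \ge 1$, so every column sum is at least $1$. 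The content of the corollary is the matching $O(1)$ upper bound.

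\textbf{Two-sided Perron estimates.} Irreducibility and Perron--Frobenius produce strictly positive left and right Perron eigenvectors $\pi, v$ with $\pi^T A_n = \lambda_{PF}\pi^T$ and $A_n v = \lambda_{PF} v$. Assumption~\ref{economic_assumption} bounds $v_{\max}/v_{\min} = O(1)$, and the no-eigenvector-centrality hypothesis bounds $\pi_{\max}/\pi_{\min} = \Theta(1)$. Sandwiching $v_{\min}\mathbf{1} \le v \le v_{\max}\mathbf{1}$ and applying the non-negative matrix $A_n^k$ yields the row-sum estimate
\[
A_n^k \mathbf{1} \;\le\; C_1\,\lambda_{PF}^k\,\mathbf{1},\qquad C_1 = v_{\max}/v_{\min}.
\]
The symmetric argument with $\pi$ on the left gives the column-sum estimate
\[
\mathbf{1}^T A_n^k \;\le\; C_2\,\lambda_{PF}^k\,\mathbf{1}^T,\qquad C_2 = \pi_{\max}/\pi_{\min}.
\]

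\textbf{Column-sum of the Gramian.} Interchanging the sums and using $[(A_n^T)^k]_{il} = [A_n^k]_{li}$,
\[
\sum_i [P(\tilde A)]_{ij} \;=\; \sum_{k\ge 0}\lambda_{PF}^{-k}\sum_{l} [A_n^k]_{lj}\,[A_n^k \mathbf{1}]_l.
\]
The right Perron bound controls the inner factor $[A_n^k\mathbf{1}]_l \le C_1\lambda_{PF}^k$, which cancels the $\lambda_{PF}^{-k}$ factor and leaves $C_1 \sum_{k\ge 0}[\mathbf{1}^T A_n^k]_j$; the left Perron bound then yields $[\mathbf{1}^T A_n^k]_j \le C_2\lambda_{PF}^k$, so
\[
\sum_i [P(\tilde A)]_{ij} \;\le\; C_1 C_2 \sum_{k\ge 0} \lambda_{PF}^k \;=\; \frac{C_1 C_2}{1-\lambda_{PF}} \;=\; O(1),
\]
using $\lambda_{PF} = \rho(A_n) < 1$ from Schur stability. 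Taking the max over $j$ gives $\|P(\tilde A)\|_1 = \Theta(1)$, and Theorem~\ref{daron_equivalence} closes the argument.

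\textbf{Main obstacle.} The calculation is essentially mechanical; the only real point to recognize is that the left and right Perron eigenvector bounds must enter \emph{symmetrically}. The quadratic structure $(A_n^T)^k A_n^k$ in the Gramian produces both a row sum of $A_n^k$ (controlled by $v$) and a column sum of $A_n^k$ (controlled by $\pi$), and precisely these two factors of $\lambda_{PF}^k$ are needed to cancel the single $\lambda_{PF}^{-k}$ discount and render the resulting series geometric. One eigenvector bound alone would leave a residual $\lambda_{PF}^{-k/2}$ factor that blows the sum up.
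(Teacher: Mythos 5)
Your argument is correct. It differs from the paper's proof in where the two Perron eigenvectors enter. The paper's proof of this corollary uses only the left Perron vector $\pi$: from $\|(A^k)^T\pi\|_{\infty}=\lambda_{PF}^k\pi_{\max}$ and non-negativity it gets $\|A^k\|_1\le\lambda_{PF}^k\,\pi_{\max}/\pi_{\min}$, sums the geometric series to conclude $\|(I-A_n)^{-1}\|_1=O(1)$, and then relies on the already-established equivalence between this condition and absence of tail risk (the column-sum criterion in the proof of Theorem~\ref{daron_equivalence}, together with Lemma~\ref{equivalenceLyap} for the Gramian reformulation). You instead verify the Gramian condition of Theorem~\ref{daron_equivalence} head-on, which forces you to sandwich $(A_n^T)^kA_n^k$ from both sides -- the right Perron vector (via Assumption~\ref{economic_assumption}) to control row sums of $A_n^k$ and the left one (via no centrality) to control column sums -- so that the two factors of $\lambda_{PF}^k$ cancel the $\lambda_{PF}^{-k}$ discount. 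In effect you re-derive the relevant direction of Lemma~\ref{equivalenceLyap} by hand, replacing its appeal to Ostrowski's row-sum inequality with the elementary bound $A_n^k\mathbf{1}\le(v_{\max}/v_{\min})\lambda_{PF}^k\mathbf{1}$; this makes your proof more self-contained but longer, while the paper's is shorter because it leans on the lemma. Both arguments share the same implicit requirement that $\lambda_{PF}$ be bounded away from $1$ uniformly in $n$ (the final bound is $O\bigl(1/(1-\lambda_{PF})\bigr)$ in each case), and your explicit observation that $P(\tilde A)\succeq I$ supplies the matching $\Omega(1)$ lower bound, which the paper leaves tacit.
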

It is not hard to verify that $S_n$ has a central node and $R_n, C_n$ exhibit no centrality.} In addition to showing that tail risk is closely related to network energy, we also relaxed some of the assumption on network structure imposed by Acemoglu et. al. in~\cite{alireza_daron}. Our analysis can be extended to the case when we have increasing returns to scale, \textit{i.e.}, $\sum_{j=1}^na_{ij} > 1$ for example.

\section{Conclusions}
\label{sec:conclusion}
This work developed a general framework to study performance degradation in large networks. It is shown that dimension--dependent scaling of network energy is closely related to such degradation. The results here provide an analytical characterization of network energy scaling for an undirected network in terms of its distance to instability. It is shown that, for a fixed distance to instability, $\Hcal_2$--norm scales at most linearly in the size of network. On the contrary, results here suggest that for directed networks distance to instability and $\Hcal_2$--norm scaling may be completely unrelated. The observation that undirected networks exhibit a certain optimality in energy scaling could be used as a design principle to obtain more graceful performance. Under this general framework, notions of centrality and tail risk that appear more commonly in areas such as economics, finance etc. are related to network energy. This is a step towards more interdisciplinary research in design of robust network controllers. 

The work here is confined to networks with linear stable dynamics. Further extensions to marginally stable dynamics, such as those in consensus networks, will be discussed in a future work. 
%
%\small
%%\section*{Acknowledgments}
%%This work was supported by EPRI grant for ``Modeling the Impact of ICT Failures on the Resilience of Electric Distribution Systems'' (contract ID: 10000621), and NSF project ``FORCES'' (award $\#$: CNS-1239054).
%
%\nocite{*}
%\bibliographystyle{plain}
\bibliographystyle{sources/IEEEtran}
\bibliography{sources/IEEEfull.bib}
%\label{app:proofs}

%\section{Helper Procedures}
%\label{app:algorithm}
\section{Appendix}
\subsection{Proof of Proposition~\ref{norm_sing2}}
Consider the case when $\omega$ is deterministic with $||\omega||_2 \leq 1$, then we 
\begin{align*}
\sum_{t=0}^{\infty}x(t)^T x(t) &= \omega^T \sum_{t=0}^{\infty}(A^{T})^t A^t\omega \\
\sup_{||\omega||_2 \leq 1}\sum_{t=0}^{\infty}x(t)^T x(t) &= \sup_{||\omega||_2 \leq 1} \omega^T \sum_{t=0}^{\infty}(A^{T})^t A^t\omega \\
&= \sup_{||\omega||_2 \leq 1} \omega^T P(A)\omega \\
\Mcal(A)&= \sigma_1(P)
\end{align*}
For the case when $\Ebb[\omega] = 0, \Ebb[\omega \omega^T] = I$ we have that 
\begin{align*}
\Ebb[\sum_{t=0}^{\infty}x(t)^T x(t)] &= \Ebb[\omega^T \sum_{t=0}^{\infty}(A^{T})^t A^t\omega] \\
&= \Ebb[ \text{trace}( \sum_{t=0}^{\infty} A^t \omega \omega^T (A^{T})^t) ] \\
&= \text{trace}(\sum_{t=0}^{\infty} A^t  (A^{T})^t)\\
\Ecal(A)&= (1/n)\text{trace}(P(A)) = (1/n) \Hcal_2(A)
\end{align*}

\subsection{Proof of Proposition~\ref{norm_sing}}
Since $A$ is symmetric, we have that $P(A) = \sum_{t=0} A^{2t}$. Now $||A^{2t}||_2 = \rho(A)^{2t}$ and since the eigen vectors of $A^{2t}$ identical to $A^{2t+2}$ we have
\[
\sigma_1(P(A)) = \frac{1}{1 -\rho^2(A)} \leq  \frac{1}{1 -\rho(A)}
\]
The assertion follows because $\Mcal(A) \geq \Ecal(A)$.

\subsection{Proof of Proposition~\ref{network_robustness}}
\label{robustness_proof}
For brevity, we prove here the scaling of the networks we introduced in Section~\ref{sec:Preliminaries}. \vspace{3mm}\\
\textbf{Star Network $(S_n)$}:\\
	\begin{align*}
	S_n=\gamma \begin{bmatrix}
	0  & \frac{1}{n-1} & \frac{1}{n-1} & \dots & \frac{1}{n-1} \\
	1       & 0 & 0 & \dots & 0 \\
	\vdots & \vdots & \vdots & \ddots & \vdots \\
	1       & 0 & 0 & \dots & 0
	\end{bmatrix}
	\end{align*}
	We can verify that 
	\begin{align*}
	S^{2k+1}_n=\gamma^{2k+1}\begin{bmatrix}
	0  & \frac{1}{n-1} & \frac{1}{n-1} & \dots & \frac{1}{n-1} \\
	1       & 0 & 0 & \dots & 0 \\
	\vdots & \vdots & \vdots & \ddots & \vdots \\
	1       & 0 & 0 & \dots & 0
	\end{bmatrix}
	\end{align*}
	and 	
	\begin{align*}
	S^{2k}_n=\gamma^{2k}\begin{bmatrix}
	1       & 0 & 0 & \dots & 0 \\
	0  & \frac{1}{n-1} & \frac{1}{n-1} & \dots & \frac{1}{n-1} \\
	\vdots & \vdots & \vdots & \ddots & \vdots \\
	0  & \frac{1}{n-1} & \frac{1}{n-1} & \dots & \frac{1}{n-1} 
	\end{bmatrix}
	\end{align*}
Since $P(S_n) \succeq_{\text{PSD}} \sum_{k=0}^{\infty} (S_n^T)^{2k+1}S_n^{2k+1}$ we have that $||P(S_n)||_2 \geq \gamma(1-\gamma^2)^{-1} ||S_n||^2_2$. But 
\[
P(S_n) = I + \sum_{k=0}^{\infty} \gamma^{2k}(S_n^2)^TS_n^2 + \sum_{j=0}^{\infty} \gamma^{2j+1}(S_n)^TS_n
\]
thus, $||P(S_n)||_2 \leq 1 + (1-\gamma^2)^{-1}(||S_n||^2_2 + ||S_n^2||^2_2)$ which follows from the subadditivity of $\Lcal_2$--norm. Further, since 
\[
||S_n^T S_n||_{\text{max}} \leq ||S_n^T S_n||_{2} \leq ||S_n^T S_n||_{S}
\]
we have that $||S_n||^2_2 = \Theta(n)$ and since $S_n^2$ is symmetric we have $||S_n^2||^2_2 = \Theta(1)$, then our claim follows. By a similar argument, $||P(S_n)||_S = \Theta(n)$. \vspace{3mm}\\
\textbf{Regular Network $(R_n)$}:	
	\begin{align*}
	R_n=\gamma \begin{bmatrix}
	0  & \frac{1}{n-1} &\dots & \frac{1}{n-1} \\
	\frac{1}{n-1}       & 0 &  \dots & \frac{1}{n-1} \\
	\vdots & \vdots &  \ddots & \vdots \\
	\frac{1}{n-1}       & \frac{1}{n-1} & \dots & 0
	\end{bmatrix}
	\end{align*}
Since $R_n$ is symmetric then $||P(R_n)||_2 = (1/(1 - \rho^2(R_n)))^{-1}$ and $\rho(R_n) = \gamma$ (By Perron-Frobenius Theorem) and from Proposition~\ref{norm_sing} we have $||P(R_n)||_2 = \Theta(1)$ and due to the symmetry of topology we have $\text{trace}(P(R_n)) = \Theta(n)$. \vspace{3mm}\\
\textbf{Directed Line $(DL_n)$}:
\begin{align*}
DL_n=\begin{bmatrix}
0  & 0  & \dots & 0 \\
1  & 0 & \dots & 0 \\
\vdots & \ddots  & \ddots & \vdots \\
0    & \dots & 1 & 0
\end{bmatrix}
\end{align*}
$DL_n$ is nilpotent, \textit{i.e.}, $DL^{n}_n = 0_{n \times n}$. One can check that $(DL^k_n)^TDL^k_n = \text{Diag}(\underbrace{1, 1, \ldots}_\text{n-k}, \underbrace{0, 0, \ldots}_\text{k})$, where $\text{Diag}(v)$ is the matrix with diagonal elements as $v$. Then we have that $||P(DL_n)||_2 = ||\sum_{k=0}^{\infty}(DL^k_n)^TDL^k_n||_2 = ||\Delta_n||_2 = \Theta(n), ||P(DL_n)||_S = \Theta(n^2)$ where 
\begin{align*}
\Delta_n=\begin{bmatrix}
n  & 0  & \dots & 0 \\
0  & n-1 & \dots & 0 \\
\vdots & \ddots  & \ddots & \vdots \\
0    & \dots & 0 & 1
\end{bmatrix}
\end{align*}
\vspace{3mm} \\
\textbf{Cycle Network $(C_n)$}:
The proof is similar to that of $R_n$ (a combination of Proposition~\ref{norm_sing} and symmetric topology).  
\vspace{3mm} \\
\textbf{Wigner Ensemble}: 
The proof of this follows from Theorem A in~\cite{bai1988necessary}. There it is shown that for Wigner matrices (actually a much larger class of symmetric matrices) we have that $\rho(A) = 2 \sigma < 1$ almost surely as $n \rightarrow \infty$ \textit{i.e.} large networks. As a result in our case, using Proposition~\ref{norm_sing}, we have that 
\[
\Mcal(A), \Ecal(A) = O\Big(\frac{1}{1 - 2\sigma}\Big)
\] 
This result is again corroborated by Theorem 1 in~\cite{preciado2016controllability} (but convergence is shown in expectation only). From Eq. 10 in the same reference and picking $k=1$, we see that 
\begin{align*}
\frac{1}{n} \text{trace}\Big(\Ebb\{P\} \Big)&= \sum_{j=0}^{\infty} \sigma^{2j} \frac{1}{j+1} {2j \choose j} \\
\Ecal(A) &< 1  + \sum_{j=1}^{\infty} (2\sigma)^{2j}\frac{\sqrt{2}e}{2 \pi \sqrt{j}(j+1)} \\
&< \frac{C}{1 - 2 \sigma}
\end{align*}
for some absolute constant $C$. We also verify this numerically
\begin{figure}[h]
	\centering
	\includegraphics[width=\columnwidth]{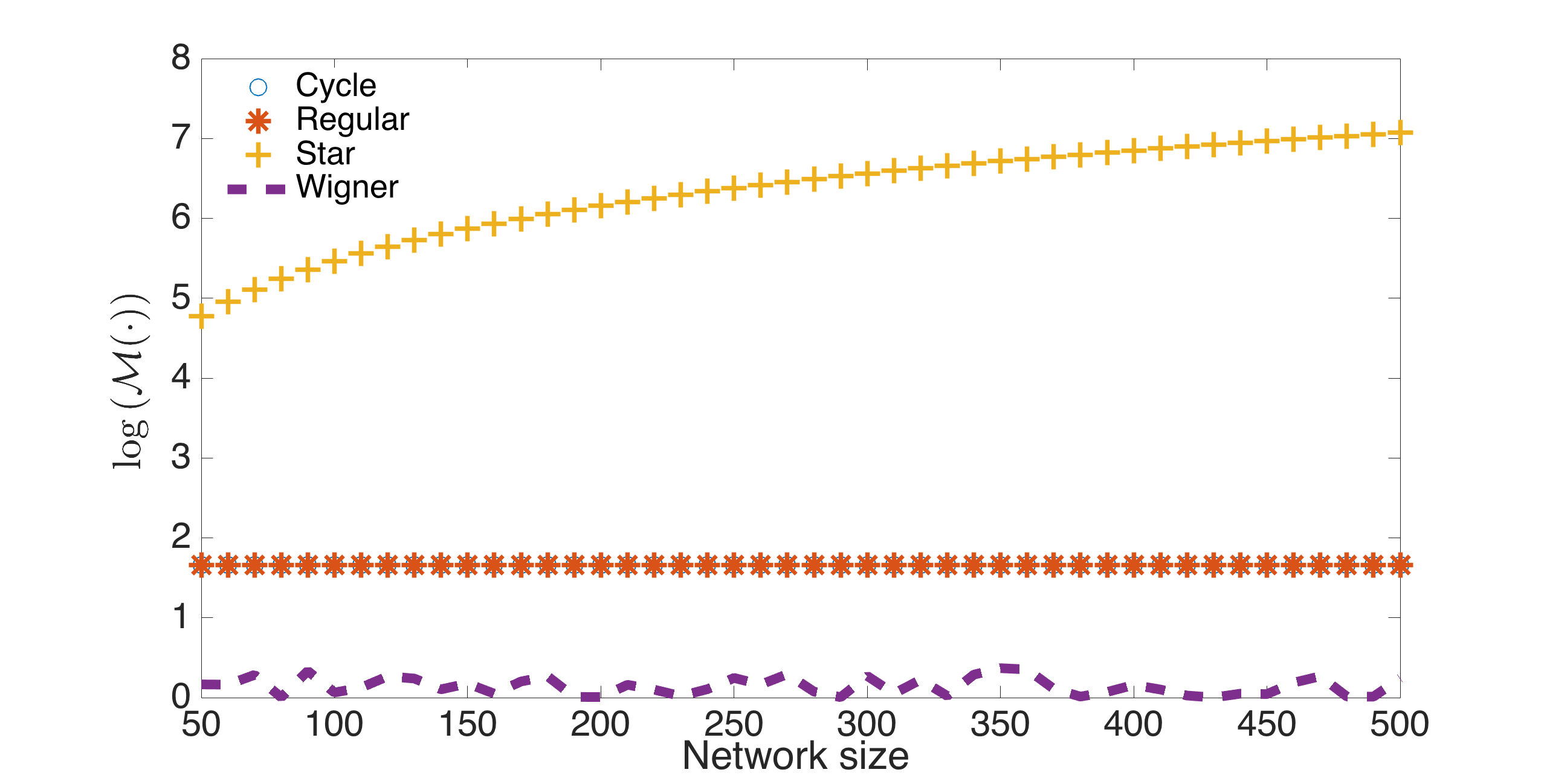}
	\caption{$\Mcal(\cdot)$ growth of networks with size}
	\label{network_growth}
\end{figure}
\subsection{Proof of Proposition~\ref{platoon_growth}}
\label{platoon_growth_proof}
\begin{align*}
	J=\begin{bmatrix}
		\lambda_1  &1 & 0 & \dots & 0 \\
		0  & \lambda_2 & 1 & \dots & 0 \\
		\vdots & \vdots & \vdots & \ddots & \vdots \\
		0 & 0 & 0 & \dots & \lambda_n
	\end{bmatrix}
\end{align*}
Notice that since $\lambda_i > 0$, we have that $||J^{d}||_{\infty} \geq ||J(\lambda)^d||_{\infty}$ for any $0 \leq \lambda \leq \min_i \lambda_i$, where 
\begin{align*}
J(\lambda)=\begin{bmatrix}
\lambda  &1 & 0 & \dots & 0 \\
0  & \lambda & 1 & \dots & 0 \\
\vdots & \vdots & \vdots & \ddots & \vdots \\
0 & 0 & 0 & \dots & \lambda
\end{bmatrix}
\end{align*}
Then $J^n(\lambda) = (\lambda I + N)^n = \sum_{r=0}^{n} {n \choose r}\lambda^r N^{n-r}$ where $N$ is the nilpotent matrix --
\begin{align*}
	N=\begin{bmatrix}
		0  &1 & 0 & \dots & 0 \\
		0  & 0 & 1 & \dots & 0 \\
		\vdots & \vdots & \vdots & \ddots & \vdots \\
		0 & 0 & 0 & \dots & 0
	\end{bmatrix}
\end{align*}
Then we have 
\begin{align*}
	J^n(\lambda)=\begin{bmatrix}
		\lambda^n  &{n \choose 1} \lambda^{n-1} & {n \choose 2} \lambda^{n-2} & \dots &  {n \choose 1} \lambda \\
		0  & \lambda^n & {n \choose 1} \lambda^{n-1} & \dots &  {n \choose 2} \lambda^2\\
		\vdots & \vdots & \vdots & \ddots & \vdots \\
		0 & 0 & 0 & \dots & \lambda^n
	\end{bmatrix}
\end{align*}
Thus $||J^n(\lambda)||_{\infty} = (1 + \lambda)^n - 1 \implies ||J^n||_{\infty} \geq (1 + \lambda)^n - 1$. Then $||J^n||_2 = \Omega(\exp{(cn)})$ for some $c > 0$ because  $||\cdot||_{\infty} \leq \sqrt{n}||\cdot||_2$. Since,
\begin{align*}
	P(J) &- (J^n)^TJ^n \succeq_{\text{PSD}} 0 \\
	&\implies ||P(J^n)||_2 \geq ||J^n||^2_2
\end{align*}
our claim is now proven. 
%%%%%%
{
\subsection{Proof of Theorem~\ref{symmetrizing}}
For the proof here, fix matrix $A$, its Gramian $P(A)$. Let the SVD of $P(A) = U_n D_n U_n^T$, then define as 
\begin{equation}
\label{symmetrizer}
\Lambda = U_n \Gamma_n U_n^T
\end{equation}
where $\rho(\Gamma_n) \leq \rho(A) = \gamma$. Set $A^{\epsilon}$ as the $\epsilon$--balanced version of $A$.  We have a very simple observation that we state here for completeness.
\begin{lemma}
	\label{ones_psd}
$$\Lambda P(A) \Lambda \preceq_{PSD} \gamma^{2} P(A)$$
\end{lemma}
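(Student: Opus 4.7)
The plan is to exploit the fact that $\Lambda$ and $P(A)$ share the same eigenbasis by construction, so that the statement reduces to a scalar comparison on each diagonal entry.

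First I would write out both sides in the common orthogonal basis $U_n$. Since $P(A)$ is symmetric positive semidefinite (being a Gramian of a Schur stable $A$), its SVD $P(A) = U_n D_n U_n^T$ is its spectral decomposition, with $U_n$ orthogonal and $D_n$ diagonal with nonnegative entries. By Definition~\ref{balance_networks}, $\Gamma_n$ is diagonal, and by Eq.~\eqref{symmetrizer} we have $\Lambda = U_n \Gamma_n U_n^T$. Using $U_n^T U_n = I$ and the fact that the two diagonal matrices $\Gamma_n$ and $D_n$ commute, a direct computation gives
\[
\Lambda P(A) \Lambda \;=\; U_n \Gamma_n D_n \Gamma_n U_n^T \;=\; U_n \Gamma_n^2 D_n U_n^T.
\]
Similarly $\gamma^2 P(A) = U_n (\gamma^2 D_n) U_n^T$.

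Next I would bound the difference. Subtracting yields
\[
\gamma^2 P(A) - \Lambda P(A) \Lambda \;=\; U_n \bigl( \gamma^2 I - \Gamma_n^2 \bigr) D_n U_n^T.
\]
The hypothesis $\rho(\Gamma_n) \leq \gamma$ together with $\Gamma_n$ being diagonal implies that every diagonal entry of $\Gamma_n^2$ is at most $\gamma^2$, so $\gamma^2 I - \Gamma_n^2$ is diagonal with nonnegative entries. Its product with the nonnegative diagonal matrix $D_n$ is therefore again diagonal and nonnegative, hence positive semidefinite. Conjugation by the orthogonal matrix $U_n$ preserves positive semidefiniteness, giving the claimed Loewner inequality.

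There is no real obstacle here, since the result is essentially a diagonalization argument; the only subtlety to double-check is that $U_n$ really does simultaneously diagonalize both $P(A)$ and $\Lambda$, but this is immediate from the definition of $\Lambda$ in Eq.~\eqref{symmetrizer}, which was tailor-made so that $\Lambda$ inherits the eigenbasis of $P(A)$.
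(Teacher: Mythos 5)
Your proof is correct and follows essentially the same route as the paper's: both diagonalize $\Lambda P(A)\Lambda$ in the common eigenbasis $U_n$ as $U_n \Gamma_n^2 D_n U_n^T$ and conclude from $\Gamma_n^2 \preceq \gamma^2 I$ together with the nonnegativity of $D_n$. Your version simply spells out the intermediate steps (the commutation of the diagonal factors and the preservation of positive semidefiniteness under orthogonal conjugation) that the paper leaves implicit.
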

\begin{proof}
Clearly,
\[
\Lambda P(A) \Lambda = U_n D_n\Gamma^2_n U_n^T
\]
Then our assertion follows because $\Gamma_n \preceq \gamma I$.
\end{proof}

We will need the following lemmas for the proof of this theorem.
\begin{lemma}
	\label{trace_bnd}
For any matrices $A, B$ we have that 
\[
A^TB + B^TA \preceq_{PSD} A^TA + B^TB 
\]
\end{lemma}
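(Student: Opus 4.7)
The plan is to derive this as an immediate consequence of the fact that any matrix of the form $M^T M$ is positive semidefinite. Specifically, I would apply this observation to $M = A - B$, since then $(A-B)^T (A-B) \succeq_{PSD} 0$ is automatic (it's a Gram matrix: for any vector $v$, $v^T(A-B)^T(A-B)v = \|(A-B)v\|_2^2 \geq 0$).

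Once this is set up, the proof is a one-line algebraic expansion. I would write out
\[
(A-B)^T(A-B) = A^T A - A^T B - B^T A + B^T B \succeq_{PSD} 0,
\]
and then rearrange to isolate $A^T B + B^T A$ on one side of the Loewner order. There is no genuine obstacle here; the only thing to verify is that the arithmetic is respected by the Loewner order, which is immediate since the order is preserved under adding a fixed Hermitian matrix to both sides. No dimensional or stability assumptions on $A, B$ are needed — the inequality is purely algebraic, which is presumably why the authors state it for arbitrary matrices of compatible shape. This lemma will then be used together with Lemma~\ref{ones_psd} in the ensuing proof of Theorem~\ref{symmetrizing} to bound cross terms that arise when expanding $(A^{\epsilon})^T P(A) A^{\epsilon}$ in terms of $A^T P(A) A$ and $\Lambda P(A) \Lambda$.
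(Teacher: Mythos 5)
Your proof is correct and is exactly the paper's argument: the authors likewise derive the inequality from the single observation that $(A-B)^T(A-B) \succeq_{PSD} 0$, with the expansion and rearrangement left implicit. Nothing further is needed.
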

\begin{proof}
This follows from observing that 
\[
(A-B)^T(A-B) \succeq_{PSD} 0
 \]
\end{proof}
\begin{lemma}
	\label{sum_trace_bnd}
	For any matrices $\{A_i\}_{i=1}^P$ we have that 
	\[
	\Big(\sum_{i=1}^P A_i^T \Big) \Big(\sum_{i=1}^P A_i \Big) \preceq_{PSD} P \Big(\sum_{i=1}^P A_i^TA_i \Big) 
	\]
\end{lemma}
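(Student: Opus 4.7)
The plan is to deduce this from Lemma~\ref{trace_bnd} by expanding the left-hand side and pairing the cross terms. Concretely, I would first write
\[
\Big(\sum_{i=1}^P A_i^T\Big)\Big(\sum_{j=1}^P A_j\Big) = \sum_{i=1}^P A_i^T A_i + \sum_{1 \le i < j \le P} \big(A_i^T A_j + A_j^T A_i\big),
\]
which is an exact identity, no PSD relation yet.

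Next I would invoke Lemma~\ref{trace_bnd} on each unordered pair $(i,j)$ with $i < j$ to obtain
\[
A_i^T A_j + A_j^T A_i \preceq_{PSD} A_i^T A_i + A_j^T A_j.
\]
Summing these $\binom{P}{2}$ inequalities, each term $A_i^T A_i$ appears on the right exactly $P-1$ times (once for every $j \neq i$), giving
\[
\sum_{1 \le i < j \le P} \big(A_i^T A_j + A_j^T A_i\big) \preceq_{PSD} (P-1)\sum_{i=1}^P A_i^T A_i.
\]
Adding the diagonal sum $\sum_i A_i^T A_i$ back to both sides yields the claim
\[
\Big(\sum_{i=1}^P A_i^T\Big)\Big(\sum_{j=1}^P A_j\Big) \preceq_{PSD} P \sum_{i=1}^P A_i^T A_i.
\]

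The only thing that needs a little care is the bookkeeping of how many times each $A_i^T A_i$ appears after summing, but that is a purely combinatorial count rather than an obstacle; no analytic work is required. (As a sanity check, one can also verify the statement in one line by noting that for every vector $x$, applying the Cauchy--Schwarz inequality to the vectors $v_i = A_i x \in \mathbb{R}^n$ gives $\|\sum_i v_i\|^2 \le P\sum_i \|v_i\|^2$, which is exactly the quadratic form inequality associated with the desired Loewner inequality.)
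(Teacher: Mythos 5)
Your proposal is correct and follows essentially the same route as the paper: expand the product, apply Lemma~\ref{trace_bnd} to each unordered cross pair, and count that each $A_i^T A_i$ is collected $P-1$ times before adding back the diagonal. Your explicit bookkeeping (and the Cauchy--Schwarz sanity check) just makes precise the step the paper leaves implicit.
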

\begin{proof}
By expansion we have
\begin{align*}
\Big(\sum_{i=1}^P A_i^T \Big) \Big(\sum_{i=1}^P A_i \Big) &= \sum_{i=1}^P A_i^TA_i \\
&+ \sum_{i=1}^P \sum_{j \neq i} A_i^T A_j \\
&= \sum_{i=1}^P A_i^TA_i + \sum_{1 \leq i < j \leq P} (A_i^T A_j + A_j^T A_i) \\
&\overset{(a)}{\preceq P} \Big(\sum_{i=1}^P A_i^TA_i \Big) 
\end{align*}
Here $(a)$ follows from Lemma~\ref{trace_bnd}.
\end{proof}

%%%%
\begin{lemma}
\label{psd_lemma}
For $A$, the corresponding Gramian, $P(A)$ and $\Lambda$, then for any psd matrix $B$ of the following form
\[
B =\Lambda^{t^{\lambda}_k} \ldots (A^{T})^{t_2}\Lambda^{t^{\lambda}_1}(A^{T})^{t_{1}}A^{t_{1}}\Lambda^{t^{\lambda}_{1}}A^{t_2} \ldots \Lambda^{t^{\lambda}_k}
\]
where $\sum_{j=1}^k t^{\lambda}_{j} = t^{\lambda}$ then 
	$$B \preceq \gamma^{2t^{\lambda}} P(A)$$
\end{lemma}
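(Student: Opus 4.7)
The plan is to induct on $k$, the number of nested $\Lambda$--layers on each side of the sandwich $B$. Three elementary ingredients power the induction. First, from the Lyapunov identity $P(A) = I + A^T P(A) A$ one gets, by a trivial induction on $s$, the bound $(A^T)^s P(A) A^s \preceq P(A)$ for every $s \geq 0$. Second, since $\Lambda = \Lambda^T$, congruence by $\Lambda$ preserves the PSD order, and iterating Lemma~\ref{ones_psd} yields $\Lambda^s P(A) \Lambda^s \preceq \gamma^{2s} P(A)$ for every $s \geq 0$. Third, the outer factors of $B$ on the two sides are transposes of each other (because $\Lambda$ is symmetric and $(A^m)^T = (A^T)^m$), so the maps $M \mapsto \Lambda^t M \Lambda^t$ and $M \mapsto (A^T)^s M A^s$ are both congruences and hence monotone under $\preceq$.

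For the base case $k = 1$, the sandwich reduces to $B = \Lambda^{t^{\lambda}_1}(A^T)^{t_1} A^{t_1}\Lambda^{t^{\lambda}_1}$. Since $P(A) = \sum_{s \geq 0}(A^T)^s A^s$, I would observe $(A^T)^{t_1} A^{t_1} \preceq P(A)$, then conjugate by $\Lambda^{t^{\lambda}_1}$ and invoke the iterated Lemma~\ref{ones_psd} bound to obtain $B \preceq \gamma^{2t^{\lambda}_1} P(A)$, as required since $t^{\lambda} = t^{\lambda}_1$ here.

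For the inductive step, I would peel off the outermost $\Lambda$--$A^T$ pair by writing
\[
B = \bigl(\Lambda^{t^{\lambda}_k}(A^T)^{t_k}\bigr)\, M \,\bigl(A^{t_k}\Lambda^{t^{\lambda}_k}\bigr),
\]
where $M$ is the inner sandwich with $k-1$ $\Lambda$--layers and one-sided $\Lambda$--power $t^{\lambda} - t^{\lambda}_k$. The block $M$ is psd and of the same structural form as $B$, so the inductive hypothesis gives $M \preceq \gamma^{2(t^{\lambda} - t^{\lambda}_k)} P(A)$. Chaining the outer congruence with the $(A^T)^{t_k} P(A) A^{t_k} \preceq P(A)$ bound (applied inside a further congruence by $\Lambda^{t^{\lambda}_k}$), and finally with the iterated Lemma~\ref{ones_psd} bound, produces
\[
B \preceq \gamma^{2(t^{\lambda} - t^{\lambda}_k)} \Lambda^{t^{\lambda}_k}(A^T)^{t_k} P(A) A^{t_k}\Lambda^{t^{\lambda}_k} \preceq \gamma^{2(t^{\lambda} - t^{\lambda}_k)} \Lambda^{t^{\lambda}_k} P(A)\Lambda^{t^{\lambda}_k} \preceq \gamma^{2t^{\lambda}} P(A),
\]
closing the induction.

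I expect no serious obstacle in this argument --- it is essentially bookkeeping of which congruence supplies which factor of $\gamma^2$. The only point requiring modest care is checking that the inner block $M$ has exactly the structural form required by the hypothesis so that the induction may be applied; this is immediate from the alternating $\Lambda$--$A$ pattern of $B$ and the accounting $t^{\lambda} = t^{\lambda}_k + (t^{\lambda} - t^{\lambda}_k)$.
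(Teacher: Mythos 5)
Your proposal is correct and follows essentially the same route as the paper: both arguments rest on the two congruence inequalities $(A^T)^s P(A) A^s \preceq P(A)$ and $\Lambda^s P(A)\Lambda^s \preceq \gamma^{2s}P(A)$ (the latter from Lemma~\ref{ones_psd}), applied layer by layer through the sandwich. The paper phrases this as "recursively applying the inequalities" from the innermost block outward, while you formalize the same recursion as an induction on $k$ peeling from the outside in; the content is identical.
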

\begin{proof}
For any $t_1, t_2$ we have that
\[
\Lambda^{t^{\lambda}_1} P(A) \Lambda^{t^{\lambda}_1} \succeq_{PSD} \Lambda^{t^{\lambda}_1} (A^{t_2})^TA^{t_2} \Lambda^{t^{\lambda}_1}
\]
Next observe that from Lemma~\ref{ones_psd}
\[
\gamma^{2t_1}P(A) \succeq_{PSD} \Lambda^{t_1} P(A) \Lambda^{t_1}
\]
Then we have 
\begin{align*}
B &=\Lambda^{t^{\lambda}_k} \ldots (A^{T})^{t_2}\Lambda^{t^{\lambda}_1}(A^{T})^{t_{1}}A^{t_{1}}\Lambda^{t^{\lambda}_{1}}A^{t_2} \ldots \Lambda^{t^{\lambda}_k}\\
&\preceq_{PSD} \Lambda^{t^{\lambda}_k} \ldots \gamma^{2t^{\lambda}_1} (A^{T})^{t_2} P(A) A^{t_2} \ldots  \Lambda^{t^{\lambda}_k}
\end{align*}
and the proof follows by recursively applying the inequalities.
\end{proof}
A direct conclusion of Lemma~\ref{psd_lemma} is that
\begin{align*}
\sigma_1(B) &\leq \gamma^{2t^{\lambda}} {\sigma_1(P(A))} \\
\text{tr}(B ) &\leq \gamma^{2t^{\lambda}} \text{tr}(P(A))
\end{align*} 
we will use these observations in our proof.
\\
In the proof we will show first how $\sigma_1(P(A^{\epsilon}))$ relates to $\sigma_1(P(A))$. For any $A^{\epsilon} = (1 - \epsilon) A + \epsilon \Lambda$ where $\Lambda$ is as in Eq~\eqref{symmetrizer}. Observe that, where $\Aeps = (1 - \epsilon)A, \Leps = \epsilon \Lambda$ for shorthand, 
\begin{align}
( \Aeps + \Leps)^k = \Aeps^k &+ \underbrace{\Aeps^{k-1} \Leps + \Aeps^{k-2} \Leps \Aeps + \ldots}_{{k \choose 1} \text{arrangements}}  \nonumber \\
&+ \underbrace{\Aeps^{k-2} \Leps^2 + \Aeps^{k-3} \Leps^2 \Aeps + \ldots}_{{k \choose 2} \text{arrangements}} \label{arrangements} \\
||( \Aeps + \Leps)^k|| \leq ||\Aeps^k|| &+ \underbrace{||\Aeps^{k-1} \Leps|| + ||\Aeps^{k-2} \Leps \Aeps|| + \ldots}_{{k \choose 1} \text{arrangements}}  \nonumber \\
&+ \underbrace{||\Aeps^{k-2} \Leps^2|| + ||\Aeps^{k-3} \Leps^2 \Aeps|| + \ldots}_{{k \choose 2} \text{arrangements}} \label{arrangements_norm} \\
&\leq (1-\epsilon)^k \sqrt{\sigma_1(P(A))} \nonumber \\
&+ {k \choose 1} (1 -\epsilon)^{k-1} \gamma \epsilon \sqrt{\sigma_1(P(A))} \nonumber \\
&+ {k \choose 2} (1 -\epsilon)^{k-2} (\gamma \epsilon)^2 \sqrt{\sigma_1(P(A))} + \ldots  \nonumber
\end{align}
The last inequality follows from Lemma~\ref{psd_lemma}. Then, whenever $\rho(\Lambda) \leq \gamma$ 
\begin{align}
\sigma((A_{\epsilon} + \Lambda_{\epsilon})^k) &\leq \sum_{j=0}^k {k \choose j} (1-\epsilon)^{k-j}\epsilon^{j} \sqrt{\sigma_1(P(A))}\gamma^j  \nonumber\\
&\leq (1 - \epsilon + \epsilon \gamma)^k \sqrt{\sigma_1(P(A))} \label{bnd_balance_pow}
\end{align}
Eq.~\eqref{bnd_balance_pow} directly implies that $A_{\epsilon} + \Lambda_{\epsilon}$ is stable (shown in Lemma~\ref{eps_balance}). Then $P(A^{\epsilon})$ exists and is uniquely given by 
$$P(A^{\epsilon}) = \sum_{k=0}^{\infty}((A^{\epsilon})^T)^{k}(A^{\epsilon})^{k}$$ 
It follows that 
\begin{align*}
\sigma_1(P(A^{\epsilon})) &\leq \sum_{k=0}^{\infty}(1 - \epsilon + \epsilon \gamma)^{2k}\sigma_1(P) \\
&=(1/(1 - (1 - \epsilon + \epsilon \gamma)^2)) \sigma_1(P) \\
&= O(\sigma_1(P(A)))
\end{align*}
\\
Next, we bound $\text{tr}(P(A^{\epsilon}))$ in terms of $\text{tr}(P(A))$. The key idea is to bunch together the terms that have same number of $\Lambda_{\epsilon}$. Consider the following example,
\begin{align*}
&(( \Aeps + \Leps)^2)^T( \Aeps + \Leps)^2 \\
&= (( \Aeps + \Leps)^2)^T(\Aeps^2 + \Aeps \Leps + \Leps \Aeps + \Leps^2) \\
&= (\Aeps^2)^T \Aeps^2 + (\Aeps^2)^T (\Aeps \Leps + \Leps \Aeps) \\
&+ (\Aeps^2)^T \Leps^2 \\
&+ (\Aeps \Leps + \Leps \Aeps)^T \Aeps^2 \\
&+ (\Aeps \Leps + \Leps \Aeps)^T (\Aeps \Leps + \Leps \Aeps)\\
& + (\Aeps \Leps + \Leps \Aeps)^T \Leps^2 \\
&+ \Leps^2 \Aeps^2 + \Leps^2 (\Aeps \Leps + \Leps \Aeps) + \Leps^4 \\
&\overset{(a)}{\preceq}_{PSD} \underbrace{3}_{2+1}((\Aeps^2)^T \Aeps^2  \\
&+ \underbrace{(\Aeps \Leps + \Leps \Aeps)^T}_{{2 \choose 1} \text{arrangements}} (\Aeps \Leps + \Leps \Aeps) \\
&+ \Leps^4)
\end{align*}

where $(a)$ follows from Lemma~\ref{trace_bnd}. Generalizing this observation we get 
\begin{align}
&(( \Aeps + \Leps)^k)^T( \Aeps + \Leps)^k \preceq_{PSD} (k+1)((\Aeps^T)^k\Aeps^k \nonumber\\
&+(\Aeps^{k-1} \Leps  + \ldots)^T \underbrace{(\Aeps^{k-1} \Leps + \Aeps^{k-2} \Leps \Aeps+ \ldots)}_{{k \choose 1} \text{arrangements}} \nonumber \\
&+ (\Aeps^{k-2} \Leps^2 + \ldots)^T\underbrace{(\Aeps^{k-2} \Leps^2 + \Aeps^{k-3} \Leps^2 \Aeps + \ldots)}_{{k \choose 2} \text{arrangements}} \nonumber\\
& + \ldots)\label{expand_sum}
\end{align}
\\
To each of the individual summands above we will apply Lemma~\ref{sum_trace_bnd}. 
\begin{align*}
&(\Aeps^{k-1} \Leps  + \ldots)^T \underbrace{(\Aeps^{k-1} \Leps + \Aeps^{k-2} \Leps \Aeps+ \ldots)}_{{k \choose 1} \text{arrangements}} \\ &\preceq_{PSD} {k \choose 1}\Big(\sum_{{k \choose 1} \text{arrangements}}\Leps(\Aeps^{k-1})^T\Aeps^{k-1} \Leps  + \ldots \Big) \\
&\preceq_{PSD} {k \choose 1}^2 (1-\epsilon)^{2k-2}(\gamma \epsilon)^2 P(A)
\end{align*}
\\
Here the last inequality follows again from Lemma~\ref{psd_lemma}. Using a similar upper bound for each of the summand in Eq.~\eqref{expand_sum} we get  
\begin{align*}
&\text{trace}((( \Aeps + \Leps)^k)^T( \Aeps + \Leps)^k)\\
& \leq (k+1)\Bigg(\sum_{j=0}^k {k \choose j}^2 (1-\epsilon)^{2(k-j)}(\gamma \epsilon)^{2j}\Bigg) \text{trace}(P)\\
& \leq (k+1)\Bigg(\sum_{j=0}^k {2k \choose 2j} (1-\epsilon)^{2(k-j)}(\gamma \epsilon)^{2j}\Bigg) \text{trace}(P)\\
&\leq (k+1) (1 - \epsilon + \epsilon \gamma)^{2k} \text{trace}(P)\\
&\text{trace}(P(\Aeps)) \leq \sum_{k=0}^{\infty}k (1 - \epsilon + \epsilon \gamma)^{2k} \text{trace}(P) \\
&\text{trace}(P(\Aeps)) = O(\text{trace}(P(A)))
\end{align*}
The proof relies on 
\[
{k \choose j}^2 \leq {2k \choose 2j}
\]
This is a simple combinatorial identity which has been proved in Proposition~\ref{combinatorial}.

Finally the stability of the symmetrized network is proved by Lemma~\ref{eps_balance}.
\begin{lemma}
	\label{eps_balance}
	For any $0 \leq \epsilon \leq 1$, and $\rho(A_n) = \gamma < 1$, we have that
	\[
	\rho(A^{\epsilon}_n) < 1
	\]
\end{lemma}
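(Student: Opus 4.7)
The plan is to deduce this stability claim as a direct byproduct of the operator-norm bound already established within the proof of Theorem~\ref{symmetrizing}. Specifically, inequality~\eqref{bnd_balance_pow} gives
$$\|(A^{\epsilon}_n)^k\|_2 \leq (1 - \epsilon + \epsilon\gamma)^k \sqrt{\sigma_1(P(A_n))}$$
for every positive integer $k$, where the prefactor $\sqrt{\sigma_1(P(A_n))}$ is a finite constant by Assumption~\ref{stability} (since $\rho(A_n) = \gamma < 1$ ensures $P(A_n)$ exists and is bounded). The entire argument hinges on the fact that the geometric base $1 - \epsilon + \epsilon\gamma = 1 - \epsilon(1-\gamma)$ is strictly less than one whenever $\epsilon \in (0,1]$.

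First, I would dispose of the boundary case $\epsilon = 0$, for which $A^{\epsilon}_n = A_n$ and $\rho(A^{\epsilon}_n) = \gamma < 1$ by hypothesis. Then for any $\epsilon \in (0,1]$, I would apply Gelfand's spectral radius formula to the bound above:
$$\rho(A^{\epsilon}_n) = \lim_{k \to \infty} \|(A^{\epsilon}_n)^k\|_2^{1/k} \leq \lim_{k \to \infty} (1 - \epsilon + \epsilon\gamma)\,\sigma_1(P(A_n))^{1/(2k)} = 1 - \epsilon(1-\gamma).$$
Since $\gamma < 1$ and $\epsilon > 0$, the right-hand side is strictly less than one, so $\rho(A^{\epsilon}_n) < 1$, which is exactly the claim.

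The one subtlety I would flag is potential circularity: the bound~\eqref{bnd_balance_pow} appears inside the proof of Theorem~\ref{symmetrizing}, which elsewhere relies on $P(A^\epsilon_n)$ being well-defined. However, inspection of the derivation of~\eqref{bnd_balance_pow} through the expansion~\eqref{arrangements}--\eqref{arrangements_norm} and Lemma~\ref{psd_lemma} shows that it only manipulates finite products of $A$ and $\Lambda$ and invokes $P(A_n)$ alone; it never presupposes stability of $A^\epsilon_n$. Thus the chain of reasoning is sound, and the argument establishes stability of the balanced network without circularity. The proof is essentially a one-line Gelfand application once this observation is made, so the main work would be simply to reference the existing inequality and verify the non-circularity.
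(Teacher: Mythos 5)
Your proposal is correct and follows essentially the same route as the paper: both invoke the operator-norm bound in Eq.~\eqref{bnd_balance_pow}, apply Gelfand's formula $\rho(B) = \lim_{k\to\infty}\|B^k\|^{1/k}$ to conclude $\rho(A^{\epsilon}_n) \leq 1-\epsilon(1-\gamma) < 1$, and treat $\epsilon=0$ separately. Your explicit check that the derivation of~\eqref{bnd_balance_pow} uses only finite products and $P(A_n)$ (hence no circularity) is a worthwhile clarification, and your use of ``$\leq$'' where the paper asserts equality is the more careful statement.
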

\begin{proof}
	For $\epsilon > 0$, we prove this by showing that the Gramian series exists, \textit{i.e.}, Theorem~\ref{symmetrizing}. This follows from Eq.~\eqref{bnd_balance_pow}
	\[
	\sigma(((1 - \epsilon)A+ \epsilon \Lambda)^k) \leq (1 - \epsilon + \epsilon \gamma)^k \sqrt{\sigma_1(P(A))}
	\]
	Then we know that for any $B$
	\[
	\lim_{k \rightarrow \infty} ||B^k||^{1/k} = \rho(B)
	\]
	Then we get that $\rho((1 - \epsilon)A+ \epsilon \Lambda) = (1 - \epsilon + \epsilon \gamma) < 1$ for $\gamma < 1$. When $\epsilon =0$, $\Aeps = A$ which is stable by assumption.
	
\end{proof}
}
\subsection{Proof of Proposition~\ref{optimal_scaling}}
For a symmetric stable matrix, the Gramian $P(A) = (I - A^2)^{-1}$. Since $(I - K)$ is symmetric then we have that for any $P = (I - (I-K)^2)^{-1}$.
\begin{align*}
P&=  (I - (I-K)^2)^{-1}\\
&= (2K - K^2)^{-1} \\
&= (1/2)K^{-1}(I - (1/2)K)^{-1}
\end{align*}
Since $K^2$ is positive semidefinite, and since $(I - K)$ is stable we have that $\rho(K) < 2$ since $\rho(I - K) =  |1 - \rho(K)|$. Then we have that $\rho((1/2)K) < 1$ and that all eigenvalues of $I - (1/2)K$ are nonnegative and it is positive semidefinite. 

Now we expand $(1/2)K^{-1}(I - (1/2)K)^{-1}$
\begin{align*}
(1/2)K^{-1}(I - (1/2)K)^{-1} &= (1/2)K^{-1} \\
&+ (1/4)I + (1/8)K + \\
&(1/16)K^2 (I -  (1/2)K)^{-1}
\end{align*}
We observe that $K^2 (I -  (1/2)K)^{-1}$ is psd because if $AB = (AB)^T$ for two psd matrices $A, B$ then $AB$ is psd. Then 
\begin{align*}
\text{trace}((1/2)K^{-1}(I - (1/2)K)^{-1}) &\geq n/4 +  \text{trace}((1/2)K^{-1} \\
&+ (1/8)K)
\end{align*}
Now our claim follows from Eq. (8), (9), (10) in~\cite{lin2012optimal}. Unlike there, here we have the following constraint for stability that $0 \leq k_i \leq 2$. The second claim follows because if $\Hcal_2$--norm is $\Omega(n^2)$ then the largest singular value has to be $\Omega(n)$.
%%%%%

%%%%%%%%%%
\subsection{Proof of Proposition~\ref{optimal_asym_scaling}}
Consider the following controller 
\begin{equation*}
K_{\frac{1}{2}} = -(1/2)DL_n + (1/2)I_{n \times n}
\end{equation*}
$K_{\frac{1}{2}}$ is an asymmetric controller with spectral radius$=1/2$. Then $(I-K) = (1/2)DL_n + (1/2)I_{n \times n} = (1/2)J(1)$ (from Proof of~\ref{platoon_growth_proof}). Now for $t \leq n$,
\begin{align*}
\text{trace}((J^t(1))^TJ^t(1)) &\leq \sum^{t}_{r = 0}(t-r){t \choose r}^2 \\
&\leq t \sum^t_{r = 0}{t \choose r}^2\\
&= t {2t \choose t}\\
&= C \sqrt{t} 2^{2t}
\end{align*}
The last equality follows from Stirling's approximation. Therefore for $t \leq n$ we have, 
\begin{align*}
\sum_{k=0}^{n}\text{trace}((1/2^{2t})(J^k(1))^TJ^k(1)) &\leq C n \sqrt{n}
\end{align*}
For the other case when $t > n$, we have 
\begin{align*}
\text{trace}((J^t(1))^TJ^t(1)) &\leq \sum_{k=1}^{n}\sum_{r=1}^{n-k}{t \choose r}^2\\
&\overset{a}{=}\sum^{n-1}_{r = 0}(n-r){t \choose r}^2 \\
&\leq n \sum^{n-1}_{r = 0}{t \choose r}^2
%&= t {2t \choose t}\\
%&= C \sqrt{t} 2^{2t}
\end{align*}
Equality $a$ comes from applying Fubini's theorem. We break our analysis into $t \in (n, 2n), (2n, 4n), \ldots$, \textit{i.e.},
\begin{align*}
\sum_{t=n+1}^{\infty}\text{trace}((1/2^{2t})(J^t(1))^TJ^t(1)) &\leq \sum_{t=n+1}^{\infty} \dfrac{n}{2^{2t}}\sum^{n-1}_{r = 0}{t \choose r}^2\\
&= \sum_{t=n+1}^{2n} \dfrac{n}{2^{2t}}\sum^{n-1}_{r = 0}{t \choose r}^2\\
&+\sum_{t=2n+1}^{4n} \dfrac{n}{2^{2t}}\sum^{n-1}_{r = 0}{t \choose r}^2 \ldots
\end{align*}
First for $t \in (n, 2n)$ we have the following upper bound
\begin{align*}
 \sum_{t=n+1}^{2n} \dfrac{n}{2^{2t}}\sum^{n-1}_{r = 0}{t \choose r}^2  &\overset{a}{\leq} C\sum_{t=n+1}^{2n}\dfrac{n}{2^{2t}}\dfrac{2^{2t}}{\sqrt{t}} \\
 &\leq C n \sqrt{n}
\end{align*}
Inequality $a$ comes from Stirling's approximation assuming $n-1=t$. The key idea of the proof is to divide the range $(0, t)$ into $(0, n), (n, 2n), (2n+1, 4n), (4n+1, 8n) \ldots, (2^k, t)$. We observe that $\sum_{r=0}^{n-1}{t \choose r}^2$ is at most $(1/2^k)\sum_{r=0}^{t}{t \choose r}^2$ because $t \choose r$ monotonically increases for $r < t/2$.

Now, we can generalize for $t \in (2^k n, 2^{k+1}n)$ for $k > 0$,
\begin{align*}
\sum_{t=2^kn+1}^{2^{k+1}n} \dfrac{n}{2^{2t}}\sum^{n-1}_{r = 0}{t \choose r}^2  &\leq C\sum_{t=2^kn+1}^{2^{k+1}n}\dfrac{n}{2^{2t}}\dfrac{2^{2t}}{2^k\sqrt{t}} \\
&\leq C n \sqrt{\dfrac{n}{2^k}}
\end{align*}
Now, summing for all $k = 1, 2, \ldots$ we get that 
\begin{align*}
\sum_{k=1}^{\infty}\sum_{t=2^kn+1}^{2^{k+1}n} \dfrac{n}{2^{2t}}\sum^{n-1}_{r = 0}{t \choose r}^2  &\leq \sum_{k=1}^{\infty}C n \sqrt{\dfrac{n}{2^k}}
\end{align*}
$\text{trace}(P(J(1))) = O(n \sqrt{n}) = o(n^2)$
%%%%%
\begin{figure}[h]
	\centering
	\includegraphics[width=\columnwidth]{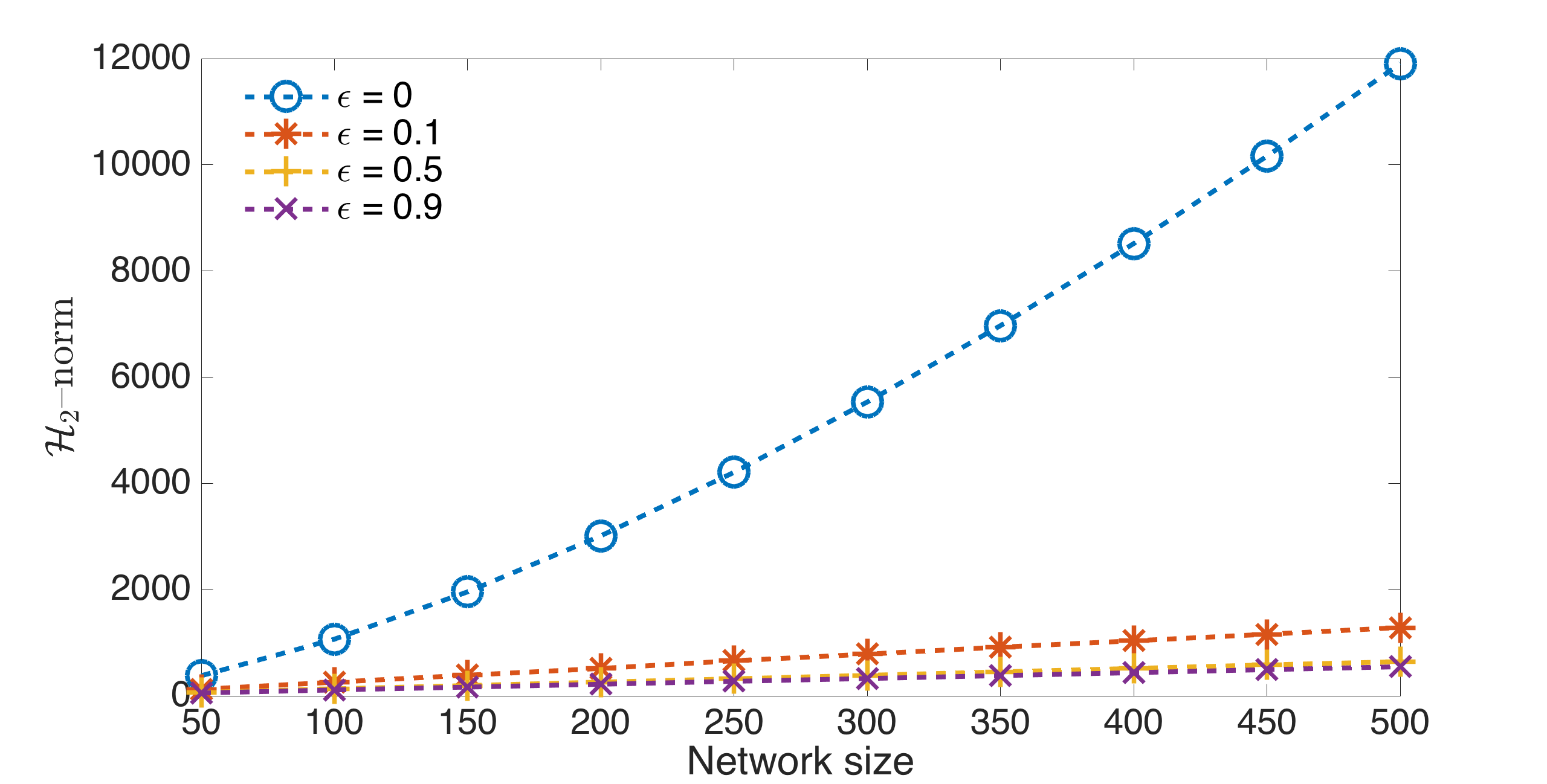}
	\caption{$\Hcal_2(I - K_{\text{asm}})$ variation with different $\epsilon$--balancing}
	\label{asm_controller}
\end{figure}
\subsection{Proof of Theorem~\ref{daron_equivalence}}
Observe that after a shock at time $k=0$ and with $x(0) = \textbf{0}$. We have that $x(k) = A^k \omega$. Then
\[
x_{\infty} = \sum_{k=0}^{\infty}\textbf{1}^T A^{k} \omega
\]
Since $\rho(A) < 1$, we have that $x_{\infty} = \textbf{1}^T (I - A)^{-1}\omega = \sum_{i=1}^n c_i \omega_i$, where $c_i$ are the column sums of $(I-A)^{-1}$. If $c_{\max} = \Theta(1)$ then it is obvious that, by Large Deviation Principle (or even a simple Chernoff Bound), would give us 
\begin{align*}
-&\lim_{n \rightarrow \infty}(1/n)\log{(\Pbb(|x_{\infty}| > n z))} < \infty 
\end{align*}
On the other hand if $c_{\max} = \Omega(g(n))$ where $g(n) \rightarrow \infty$ with $n$. Then it is true that 
\begin{align*}
&\Pbb(|x_{\infty}| > n z) > \Pbb(c_{\max}x_{1} > n z) \\
& \Pbb(|x_{\infty}| > n z) > \exp{(-n z/c_{\max})}\\
&\Longrightarrow -\lim_{n \rightarrow \infty}(1/n)\log{(\Pbb(|x_{\infty}| > n z))} > \infty
\end{align*}
%\begin{lemma}[Lemma 2.3 in~\cite{mason2004common}]
%	\label{bilinear}
%If $P(A_n)$ is the solution to $A^T X A + I = X$ then $P(A_n)$ is also satisfies	
%	\begin{align*}
%	C(-A_n)^T P(A_n) &+ P(A_n)C(-A_n) \\
%	&= -2(I-A_n)^{-T}(I-A_n)^{-1}
%	\end{align*}
%where $C(-A_n) = -(I+A_n)(I - A_n)^{-1}$
%\end{lemma}
%
%\begin{lemma}
%	\label{inf_norm_equiv}
%If there is no right centrality in $A_n$, then $||A^k_n||_{\infty} = O(1) \lambda^k$ for every $k$.
%\end{lemma}
%\begin{proof}
%No right centrality means that there exists a positive vector $v$ such that $Av = \lambda v$, where $v_{\max} = \Theta(1) v_{\min}$ and $\lambda > 0$. Then we have
%\begin{align*}
%||Av||_{\infty} &= \lambda ||v||_{\infty} \\
%||A||_{\infty} v_{\min} &\leq \lambda ||v||_{\infty} \\
%||A||_{\infty} &= O(1) \lambda
%\end{align*} 
%Since $v$ is an eigenvector for every power of $A$, our assertion holds.
%\end{proof}	
\begin{lemma}
	\label{equivalenceLyap}
	Under Assumption~\ref{economic_assumption}, we have that 
	\[
		||(I-A_n)^{-1}||_1 =O(1) \Longleftrightarrow ||P(A_n/\sqrt{\rho(A_n)})||_1 = O(1)
	\]
\end{lemma}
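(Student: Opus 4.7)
The plan is to show that both quantities coincide, up to universal multiplicative constants, with the maximum column sum of $(I - A_n)^{-1}$. Set $\lambda = \rho(A_n)$ and $B = A_n/\sqrt{\lambda}$; then $\rho(B) = \sqrt{\lambda} < 1$, so the Gramian $P(B) = \sum_{k=0}^{\infty} \lambda^{-k} (A_n^T)^k A_n^k$ is well-defined. Since $A_n$ is entrywise nonnegative by Assumption~\ref{economic_assumption}, both $(I-A_n)^{-1} = \sum_k A_n^k$ and $P(B)$ are entrywise nonnegative matrices, and therefore their operator $||\cdot||_1$ norms are exactly their maximum column sums. The proof then reduces to producing a two-sided comparison of these column sums.

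First, I would extract uniform row-sum estimates for $A_n^k$ from the Perron structure. From $A_n v = \lambda v$ with $v > 0$ and $v_{\max}/v_{\min} = O(1)$, the sandwich $v_{\min} \textbf{1} \leq v \leq v_{\max} \textbf{1}$ combined with monotonicity of the nonnegative operator $A_n^k$ yields, componentwise,
\[
\tfrac{v_{\min}}{v_{\max}}\, \lambda^k\, \textbf{1} \;\leq\; A_n^k \textbf{1} \;\leq\; \tfrac{v_{\max}}{v_{\min}}\, \lambda^k\, \textbf{1}.
\]
Writing $r_l^{(k)} := [A_n^k \textbf{1}]_l$, there exist absolute $c_1, c_2 > 0$ (depending only on $v_{\max}/v_{\min}$, hence independent of $n$) such that $c_1 \lambda^k \leq r_l^{(k)} \leq c_2 \lambda^k$ for every row $l$ and every $k \geq 0$.

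The key computation is to evaluate the $j$-th column sum of $P(B)$ by swapping summations. Since $[(A_n^T)^k A_n^k]_{ij} = \sum_l [A_n^k]_{li} [A_n^k]_{lj}$, we get
\[
\sum_i [P(B)]_{ij} = \sum_{k=0}^{\infty} \lambda^{-k} \sum_l [A_n^k]_{lj}\, r_l^{(k)}.
\]
Plugging in the uniform row-sum bounds eliminates the factor $\lambda^{-k}$ and leaves only $\sum_l [A_n^k]_{lj} = \alpha_j^{(k)}$, the $j$-th column sum of $A_n^k$, giving
\[
c_1 \sum_{k=0}^{\infty} \alpha_j^{(k)} \;\leq\; \sum_i [P(B)]_{ij} \;\leq\; c_2 \sum_{k=0}^{\infty} \alpha_j^{(k)}.
\]
Since $\sum_k \alpha_j^{(k)}$ is precisely the $j$-th column sum of $(I-A_n)^{-1}$, maximizing over $j$ produces $c_1 \|(I-A_n)^{-1}\|_1 \leq \|P(B)\|_1 \leq c_2 \|(I-A_n)^{-1}\|_1$, which is the desired equivalence.

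The main obstacle is to guarantee that the constants $c_1, c_2$ are genuinely independent of $n$. This relies on the quantitative form of Assumption~\ref{economic_assumption}: it is not enough that a positive Perron eigenvector exists, one needs its entrywise ratio to stay bounded as $n \to \infty$, together with the entrywise nonnegativity of $A_n$ so that monotonicity on the nonnegative cone propagates the two-sided bound from $v$ to every iterate $A_n^k \textbf{1}$. Once these uniform bounds are in hand, the Fubini-style interchange of sums that identifies column sums of $P(B)$ with column sums of $(I-A_n)^{-1}$ is essentially automatic.
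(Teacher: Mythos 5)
Your proof is correct and follows essentially the same route as the paper's: both arguments reduce the claim to a column-sum comparison between $P(A_n/\sqrt{\lambda})$ and $(I-A_n)^{-1}$ via the uniform two-sided estimate that every row sum of $A_n^k$ is $\Theta(\lambda^k)$. The only difference is presentational — the paper obtains that row-sum estimate by citing Ostrowski's inequality together with Perron--Frobenius, whereas you derive it directly from the eigenvector sandwich $v_{\min}\mathbf{1}\leq v\leq v_{\max}\mathbf{1}$ and monotonicity of $A_n^k$ on the nonnegative cone, which makes your version self-contained and slightly more careful about maximizing over columns only at the end.
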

\begin{proof}
By the assumptions on network matrix and from Eq. 32 in Ostrowski et. al.~\cite{ostrowski1960eigenvector} that 
\begin{equation}
\label{ostr}
\dfrac{R^{(k)}}{r^{(k)}} \leq \gamma
\end{equation}
where $\gamma = \frac{v_{\max}}{v_{\min}}$, $Av = \lambda_{PF} v$ and for every $k$, where $R^{(k)}, r^{(k)}$ are the largest and smallest row sum respectively of $A^{k}$. Now by Assumption~\ref{economic_assumption}, we have that $\gamma = O(1)$. Then for every $k$ we have 
\[
\dfrac{R^{(k)}}{r^{(k)}} = \Theta(1)
\]
Further observe that the matrix $A^k/\lambda_{PF}^k$ has Perron root $1$. Then from the Perron-Frobenius theorem for non--negative matrices it follows that 
$$\hat{r}^{(k)} \leq 1 \leq \hat{R}^{(k)}$$
where $\hat{R}^{(k)}, \hat{r}^{(k)}$ are the largest and smallest row sum respectively of $A^{k}/\lambda^k_{PF}$ and from the preceding result in Eq.~\ref{ostr} it follows that any row sum of $A^k/\lambda_{PF}^k$ is $\Theta(1)$, \textit{i.e.}, does not decay with $k$. Thus we have that 
\[
\hat{r}^{(k)} ||A^k||_1\leq \Bigg|\Bigg|\Bigg(\dfrac{A^T}{\lambda_{PF}}\Bigg)^kA^k\Bigg|\Bigg|_1 \leq \hat{R}^{(k)} ||A^k||_1
\]
Since all matrices are non--negative we have that
\begin{align*}
\sum_{k=0}^{\infty}\Bigg|\Bigg|\Bigg(\dfrac{A^T}{\lambda_{PF}}\Bigg)^kA^k\Bigg|\Bigg|_1 &=  \sum_{k=0}^{\infty}\Theta(||A^k||_1) \\
\Bigg|\Bigg|\sum_{k=0}^{\infty}\Bigg(\dfrac{A^T}{\sqrt{\lambda_{PF}}}\Bigg)^k\Bigg(\dfrac{A}{\sqrt{\lambda_{PF}}}\Bigg)^k\Bigg|\Bigg|_1 &= \Theta(||(I-A)^{-1}||_1) \\
||P(A/\sqrt{\lambda_{PF}})||_1 &= \Theta(||(I-A)^{-1}||_1)
\end{align*}
\end{proof}
% \subsection{Proof of Proposition~\ref{contradiction_lemma}}
% Consider the matrix 
% \[
% A_k = a_k \textbf{1}\textbf{1}_0^T/\sqrt{n} + b_k \textbf{1}\textbf{1}^T/n
% \] 
% Then
% \begin{equation*}
% A_k T_n = \Bigg(\dfrac{a_k}{\sqrt{n}} + b_k\Bigg)\Bigg(1 - \dfrac{1}{\sqrt{n}}\Bigg) \dfrac{\textbf{1}\textbf{1}_0^T}{\sqrt{n}} +  \Bigg(\dfrac{a_k}{\sqrt{n}} + b_k\Bigg)\dfrac{\textbf{1}\textbf{1}^T}{n}
% \end{equation*}
% Then $a_{k+1} = \Bigg(\dfrac{a_k}{\sqrt{n}} + b_k\Bigg)\Bigg(1 - \dfrac{1}{\sqrt{n}}\Bigg)$ and $b_{k+1} = \Bigg(\dfrac{a_k}{\sqrt{n}} + b_k\Bigg)$ further it is clear that $a_k = b_k \Bigg(1 - \dfrac{1}{\sqrt{n}}\Bigg)$, and thus $b_{k+1} = \Bigg(1 + \dfrac{1}{\sqrt{n}} - \dfrac{1}{n}\Bigg)b_k$. We have
% \begin{align*}
% T^{k}_n = \Bigg(\dfrac{0.9n}{n -\sqrt{n} + 1}\Bigg)^kA_k
% \end{align*}
% Then it is obvious that $||T^k_n||_2 \leq 2 (0.9)^k$, and our assertion holds.
\subsection{Proof of Theorem~\ref{daron}}
From Eqn.~\eqref{DT_LTI}, we have $x(k) = A^k \epsilon$ for all $k$. Then, $\sum_{k=0}^{\infty} x(k) = (I - A)^{-1}\epsilon$. From Proposition 5 in~\cite{alireza_daron} we have that a large network exhibits tail risk with respect to the standard normal distribution (or macroeconomic tail risk) iff $\lim_{n \rightarrow \infty}||v||_{\infty}\sqrt{n}/||v||_2 = \infty$, where $v = \textbf{1}^T (I - A)^{-1}$. From Assumption~\ref{economic_assumption} we have that $||(I-A)^{-1}||_{\infty} = \Theta(1)$. Then this means that the sum of all elements of $(I-A)^{-1}$ is $\Theta(n)$ (since the ratio of maximum row sum and minimum row sum is of constant order). The condition 
\[
\lim_{n \rightarrow \infty}||v||_{\infty}\sqrt{n}/||v||_2 < \infty
\]
implies that all column sums are of the same order (in $n$) which means that $||(I-A_n)^{-1}||_1 = \Theta(1)$ (because sum of column sum must equal sum of row sum). Thus 
\[
\lim_{n \rightarrow \infty}||v||_{\infty}\sqrt{n}/||v||_2 < \infty \Longrightarrow  ||(I-A_n)^{-1}||_1 = \Theta(1)
\]
The other direction is trivial, implying that 
\[
\lim_{n \rightarrow \infty}||v||_{\infty}\sqrt{n}/||v||_2 < \infty \Longleftrightarrow  ||(I-A_n)^{-1}||_1 = \Theta(1)
\]
\subsection{Proof of Proposition~\ref{macro}}
For $R_n, C_n$ we have that $\pi^T = (1/n)\textbf{1}^T$. Since $\pi_{\max}/\pi_{\min} = 1$, we have that $||(I - A_n)^{-1}||_1 = \Theta(1)$ for both of these. 
\begin{align*}
||\pi^T(I-A_n)^{-1}||_{\infty} &= ||\pi^T||_{\infty}(1 - \lambda_{PF})^{-1} \\
\pi_{\min}||(I-A_n)^{-T}||_{\infty} &\leq \pi_{\max}(1 - \lambda_{PF})^{-1}
\end{align*}
For $S_n$ we have that $||S_n||_1 = \Theta(n)$.
%%%%
\subsection{Proof of Corollary~\ref{centrality}}
\label{centrality_proof}
Observe that
\begin{align*}
||A^T \pi||_{\infty} &= \lambda_{PF} ||\pi||_{\infty} \\
||A^T||_{\infty} \pi_{\min} &\leq \lambda_{PF} \pi_{\max} \\
||A||_{1} &\leq \lambda_{PF} \frac{\pi_{\max}}{\pi_{\min}}
\end{align*}
Then 
\begin{align*}
||(I - A)^{-1}||_1 &= \sum_{k=0}^{\infty} ||A^k||_1 \\
&\leq \sum_{k=0}^{\infty} \lambda^k_{PF} \frac{\pi_{\max}}{\pi_{\min}} \\
&= \frac{\pi_{\max}}{\pi_{\min}}  \frac{1}{1 - \lambda_{PF}}
\end{align*}
\subsection{Economic Model}
\label{economic_setup}

Consider  the economy consisting of $n$ competitive sectors denoted by $\{1, 2, \ldots, n\}$, each producing a distinct product. Each sector corresponds to a node in the network graph. Firms in each sector employ Cobb-Douglas production technologies with constant returns to scale. Formally, for each sector, $i$, we have
\begin{align}
\label{cobb_douglas}
\log{(x_{i, t+1})} &= \log{(\Sigma_{i, t})} + \log{(\eta_{i, t})} + \nonumber \\
& \mu\Bigg(\sum_{j=1}^n a_{ij} \log{(y_{ij, t})}\Bigg) + (1 - \mu)\log{(l_{i, t})}
\end{align} 
At, each time, $t$, $x_{i, t}$ is the output of sector $i$, $\Sigma_{i, t} \geq 0$ (since output of the network is always nonnegative) is the total factor productivity, $l_{i, t}$ is labor input to sector $i$, $y_{ij, t}$ is amount of output of sector $j$ used for the production of output of sector $i$, and $\eta_{i, t} > 0$ is some normalization constant. A larger $a_{ij}$ means that sector $j$ is more important in the production of output of sector $i$. Constant returns to scale implies $\sum_{j=1}^n a_{ij} = 1$ for all $i$, where $a_{ij} \geq 0$. From now, $A_n = \lbrack a_{ij} \rbrack$ will be referred as the economy's input-output, or network, matrix, where the network is denoted by $\Ncal(A_n; \Gcal)$, and $\Gcal$ is the graph induced by $A_n$. In Eq.~\eqref{cobb_douglas}, $\Sigma_{i, t}$ is a multiplicative production factor in the dynamics of the network, where under no shock, $\Sigma_{i, t} = 1$. We are interested in the case when $\Sigma_{i, 0} < 1$ and $\Sigma_{i, t} = 1$ for all $t > 0$ and $i \in \{1, 2, \ldots, n\}$. This corresponds to a negative shock of the form $\log{(\Sigma_{i, 0})}$ in the notation of Eq.~\eqref{DT_LTI}. We further assume that $\log{(\Sigma_{i, 0})} \sim \Ecal(-m^2, 1)$. The mathematical formulation in~\eqref{cobb_douglas} is completed by two resource constraints:
\begin{align}
x_{i,t} &= \sum_{j = 1}^n y_{ij, t} + c_{i, t} \nonumber\\
H &= \sum_{n=1}^t l_{i,t} \label{constraints} + Z_t
	\end{align}
Here $Z_t$ is the total leisure, $c_{i, t}$ is the consumption of product $i$ at node $i$ at time $t$, $c_t = \sum_{i=1}^n c_{i, t}$  and the sum of total input labor and leisure is a constant. Finally, we assume that economy maximizes the following preference function over consumption and leisure (See Sections 2, 3 in~\cite{long1983real}): 
\begin{align}
\label{pref_function}
V_0 &= \text{max}_{c, Z}\,\Ebb_{\Sigma_0}\Bigg[\sum_{s=0}^{\infty} u(c_s, Z_s) \Bigg] \nonumber\\
u(c_t, Z_t) &= \beta_0 Z_t + \sum_{i = 1}^n \beta_i \log{(c_{i, t})} 
\end{align}
where $\beta_i \geq 0$ but $\beta_0 = 0$, and $\sum_{j=1}^n\beta_j = 1$; $\beta_i$ is $i$'s share in the household's utility function. There are two forms of heterogeneity - primitive and network (See~\cite{alireza_daron}). Intuitively, primitive heterogeneity stems from the difference in preferences, \textit{i.e.}, $\beta_i$, across different sectors, meanwhile network heterogeneity is due to interconnections of input-output matrix. Since in this work we are concerned with topological dependence only, we will impose the following additional assumption.
\begin{assumption}
	\label{primitive_assumption}
	The utility function has no sectoral preferences, \textit{i.e.}, $\beta_i = 1/n$.
\end{assumption}

Under optimal consumption and labor, \textit{i.e.}, optimizing $V_0$ under the constraints Eqns.~\eqref{cobb_douglas},~\eqref{constraints}, we have Eq. 20 of~\cite{long1983real},
\begin{equation}
\label{dynamic_long}
\hat{x}_{t+1} = \mu A_n \hat{x}_{t} + \eta + \epsilon_{t}
\end{equation} 
Here $\epsilon_{i, t} = \log{(\Sigma_{i,t})}$, $\hat{x}_t = \lbrack \log{(x_{1, t})}, \ldots, \log{(x_{n, t})}\rbrack^T$, and $\eta$ is some constant input to the network. Now, we assume that the economic network is hit with a shock, $\epsilon_t = \epsilon$, at $t = 0$ ($\epsilon_{t} = 0$ for all $t \geq 1$);

Under no shock, \textit{i.e.}, $\epsilon_{t} = 0$ for all $t \geq 0$, we have in Eq.~\eqref{dynamic_long} that
\begin{equation*}
\hat{x}^{\text{ns}}_{t+1} = \mu A_n \hat{x}^{\text{ns}}_{t} + \eta
\end{equation*}

This gives us,
\begin{equation}
\label{reduced_form}
\hat{x}_{t+1} - \hat{x}^{\text{ns}}_{t+1} = \mu A_n (\hat{x}_{t} - \hat{x}^{\text{ns}}_{t}) + \epsilon_t
\end{equation}

The quantity of interest here is $y_t = \hat{x}_{t} - \hat{x}^{\text{ns}}_{t}$, \textit{i.e.}, the output deficit in presence of a shock. Then we have similar to Eq.~\eqref{deficit_eq}
\[
y_{t+1} = \mu A_n y_t + \epsilon_t
\]
Then in the notation of Eq.~\eqref{deficit_eq} $\omega_t = \epsilon_t$ and $A_{379} = \mu A_n$.

\begin{proposition}
	\label{combinatorial}
	For any $n$ and $p \leq n$ we have
	\[
	{2n \choose 2p} \geq {n \choose p}^2
	\]
\end{proposition}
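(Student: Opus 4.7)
The plan is to prove the inequality via Vandermonde's convolution identity, which expresses $\binom{2n}{2p}$ as a sum of products of binomial coefficients that always includes the term $\binom{n}{p}^2$.

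First, I would split a set of $2n$ objects into two disjoint groups of $n$ each. Counting the number of $2p$-subsets by classifying them according to how many elements come from the first group yields
\[
\binom{2n}{2p} \;=\; \sum_{k=0}^{2p} \binom{n}{k}\binom{n}{2p-k},
\]
with the usual convention that $\binom{n}{k}=0$ whenever $k<0$ or $k>n$. This is just Vandermonde's identity specialized to equal halves.

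Next, I would isolate the term $k=p$ on the right-hand side, which contributes exactly $\binom{n}{p}\binom{n}{p}=\binom{n}{p}^2$. Since every remaining summand is a product of binomial coefficients and hence non-negative, dropping them can only decrease the sum. This immediately gives
\[
\binom{2n}{2p} \;\geq\; \binom{n}{p}^2,
\]
which is the desired inequality. Valid for all $0 \leq p \leq n$ because the term $k=p$ is always in range.

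There is essentially no obstacle here: the argument is a single application of Vandermonde's convolution followed by discarding non-negative terms. The only thing worth noting is that one could alternatively produce an explicit injection from pairs of $p$-subsets of $[n]$ into $2p$-subsets of $[2n]$ (map $(S_1, S_2)$ to $S_1 \cup (S_2 + n)$) to give a direct combinatorial proof without invoking the identity, but the Vandermonde route is the shortest and most transparent.
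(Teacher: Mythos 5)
Your proof is correct and is essentially the same argument as the paper's: the paper derives the identity $\binom{2n}{2p}=\sum_{k}\binom{n}{k}\binom{n}{2p-k}$ by comparing the coefficient of $x^{2p}$ in $(1+x)^{2n}=(1+x)^n(1+x)^n$, while you state it as Vandermonde's convolution via splitting $2n$ objects into two halves, and both then isolate the $k=p$ term and discard the remaining non-negative summands.
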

\begin{proof}
	Consider $(1+x)^{2n}$, then the coefficient of $x^{2p}$ is given by 
	\[
	{2n \choose 2p}
	\]
	Alternately,
	\begin{align*}
	(1+x)^{2n} &= (1+x)^n (1+x)^n
 	\end{align*}
 	The coefficient of $x^p$ in $(1+x)^n$ is ${n \choose p}$. Since $x^{2p}$ can form in many ways $x^p x^p, x^{p-1} x^{p+1}$. It is clear that 
 	\[
 	{2n \choose 2p} \geq {n \choose p}^2
 	\]
 	where ${n \choose p}^2$ by choosing $x^p$ from both $(1+x)^n$s.
\end{proof}
%\bibliography{mybiblio.bib}
%\bibliography{ndss.bib}
%\bibliography{networkbib.bib}

%\setcounter{lemma}{0}
%\appendix
%\input{sources/biblio.tex}
\end{document}